\documentclass[sigconf]{acmart} 

\AtBeginDocument{%
  \providecommand\BibTeX{{%
    Bib\TeX}}}

\usepackage{enumerate}
\usepackage{algorithmic}
\usepackage{booktabs}
\usepackage[table]{xcolor}
\usepackage{amsthm}
\usepackage{mathdots}
\usepackage[toc,page,header]{appendix}
\usepackage{amsmath}
\usepackage{amsfonts}
\usepackage{tabularx}
\usepackage{mathtools}
\usepackage{balance} 
\usepackage{graphicx}
\usepackage{booktabs,tabularx}
\newcolumntype{Y}{>{\raggedright\arraybackslash}X}
\usepackage{amsmath,amsfonts}
\usepackage{algorithmic}
\usepackage[ruled, vlined, commentsnumbered, linesnumbered]{algorithm2e}
\usepackage{textcomp}
\usepackage{mathabx}
\usepackage{xcolor} 
\usepackage{microtype}      
\usepackage{lipsum}
\usepackage{enumitem}
\usepackage[normalsize]{subfigure}
\usepackage{wrapfig}
\usepackage{color}
\usepackage{amsthm}
\usepackage{amsthm}
\usepackage{mdframed}
\theoremstyle{plain}

\newmdenv[
  topline=true,
  bottomline=true,
  rightline=true,
  leftline=true,
  linecolor=black,
  linewidth=0.8pt,
  backgroundcolor=white,
  skipabove=10pt,
  skipbelow=10pt
]{boxedtheorem}

\definecolor{BrickRed}{rgb}{0.8, 0.25, 0.33}  
\definecolor{RoyalBlue}{rgb}{0.25, 0.41, 0.88}  

\newtheorem{relemma}{Lemma}

\newtheorem{theorem}{\bf{Theorem}}
\newtheorem{retheorem}{\bf{Theorem}}
\newtheorem{definition}{\bf{Definition}}
\newtheorem*{remark}{Remark}
\newtheorem{lemma}{\bf{Lemma}}

\newtheorem{reproposition}{\bf{Proposition}}
\newtheorem{proposition}{\bf{Proposition}}
\usepackage{amsthm}   
\usepackage{tcolorbox} 

\newcommand{\DEL}[1]{\iffalse #1 \fi}

\newcommand{\rd}{\color{BrickRed}}
\newcommand{\bl}{\color{RoyalBlue}}
\newcommand{\rev}{\color{black}}
\newcommand{\dcheck}{\color{black}}

\newcommand{\squishlist}{
\begin{list}{$\bullet$}
  { \setlength{\itemsep}{0pt}
     \setlength{\parsep}{0pt}
     \setlength{\topsep}{0pt}
     \setlength{\partopsep}{0pt}
     \setlength{\leftmargin}{0em}
     \setlength{\labelwidth}{0em}
     \setlength{\labelsep}{0.2em} } }

\setcopyright{acmlicensed}
\copyrightyear{2018}
\acmYear{2018}
\acmDOI{XXXXXXX.XXXXXXX}
\acmConference[Conference acronym 'XX]{Make sure to enter the correct
  conference title from your rights confirmation email}{June 03--05,
  2018}{Woodstock, NY}
\acmISBN{978-1-4503-XXXX-X/2018/06}

\title{A Graph-Based Framework for Extending Metric Differential Privacy Mechanisms}

\author{Ruiyao Liu}
\affiliation{
  \institution{University of North Texas}
  \city{Denton, Texas}
  \country{USA}}
\email{RuiyaoLiu@my.unt.edu}

\author{Chenxi Qiu}\authornote{Chenxi Qiu is the corresponding author.}
\affiliation{
  \institution{University of North Texas}
  \city{Denton, Texas}
  \country{USA}}
\email{chenxi.qiu@unt.edu}

\begin{abstract}
Metric differential privacy (mDP) is well suited to structured secret domains, but directly constructing utility-aware mechanisms over large or fine-grained domains is often computationally prohibitive. We study extension-based mDP design, where a mechanism is first specified on a finite set of seed records and then extended to a larger target domain. To our knowledge, this is the first work to systematically formulate extension as a general design paradigm for mDP rather than a method-specific construction. We present a graph-based extension framework, identify three requirements for correctness, \emph{local mDP constraints}, \emph{overlap consistency}, and \emph{successor-level mDP preservation}, and show that, under these conditions, the induced global mechanism is well defined and satisfies $\epsilon$-mDP on the target domain. We further instantiate the framework with a tree-based extension algorithm for multi-resolution grids, where multi-dimensional extension is realized through one-dimensional interpolation and dimension-wise composition. Experiments on road-map datasets demonstrate that our approach achieves a strong utility-scalability trade-off while preserving exact mDP guarantees.

\end{abstract}

\ccsdesc[500]{Security and privacy~Formal security models}

\keywords{Metric differential privacy, extension algorithm}

\newcommand{\BibTeX}{\rm B\kern-.05em{\sc i\kern-.025em b}\kern-.08em\TeX}

\begin{document}




\maketitle 

\section{Introduction}
\label{sec:introduction}
\emph{Metric differential privacy (mDP)}~\cite{Andres-CCS2013} {\rev generalizes \emph{differential privacy (DP)}} to domains equipped with a meaningful distance metric. Rather than requiring a uniform privacy level for all pairs of secrets, mDP calibrates indistinguishability according to their distance, requiring stronger protection for nearby records and allowing greater distinguishability as records become farther apart. This makes mDP particularly suitable for structured domains such as geo-locations~\cite{Qiu-TMC2022}, text embeddings~\cite{feyisetan2021private}, and images~\cite{chen2021perceptual}, where a metric can capture application-specific similarity among secrets. {\rev Beyond mDP, federated analytics has also been studied for privacy-preserving image classification~\cite{Hou-TDSC2026}. In domains lacking such a metric, the benefits of mDP over standard DP formulations are less evident. Accordingly, this work focuses on structured domains in which similarity among secrets can be characterized by a meaningful metric. }

{\rev Existing mDP mechanisms can be broadly divided into distance-based~\cite{Andres-CCS2013, chatzikokolakis2015constructing} and optimization-based designs~\cite{Bordenabe-CCS2014,Qiu-TMC2022}. Distance-based mechanisms directly determine perturbation probabilities using a predefined function of the distance between the input and output. Although simple and broadly applicable, such mechanisms may not optimize application-specific utility because the privacy metric does not necessarily capture all relevant utility factors. For example, in fine-grained location privacy, spatial distance can define the mDP privacy metric, while utility may additionally depend on road-network connectivity, travel distance, task-assignment quality, or downstream service requirements~\cite{Qiu-TMC2022}.} 

To improve the privacy--utility trade-off, another line of {\rev work~\cite{Bordenabe-CCS2014,Qiu-TMC2022} has developed} optimization-based mechanisms that explicitly compute perturbation probabilities under mDP constraints. These methods typically formulate mechanism design as a constrained optimization problem, where the decision variables are the conditional reporting probabilities, the objective minimizes expected utility loss, and the constraints enforce mDP and valid probability distributions. This formulation can incorporate prior distributions, heterogeneous utility costs, and downstream task requirements rather than relying on a fixed distance-decay rule. {\rev However, optimization-based mechanisms introduce substantial computational overhead because they must optimize a reporting probability for every input--output pair and enforce privacy constraints between every pair of secrets for each possible output. As the domain grows, the number of variables increases quadratically, while the number of privacy constraints can increase cubically~\cite{Qiu-TMC2022}. Direct optimization over the full target domain therefore becomes computationally prohibitive for large or finely discretized domains~\cite{ImolaUAI2022}.}



{\rev Existing approaches improve the scalability of optimization-based mDP mechanisms through domain decomposition~\cite{qiu-IJCAI2024,Qiu-TMC2022} or localized mechanism construction~\cite{Qiu-PETS2025,Liu-CCS2025}. Although these approaches reduce the size of individual optimization problems, they still rely on centralized or overlapping local optimizations whose computational cost increases with domain resolution. Extension-based design~\cite{Qiu-USec2026} provides an alternative by optimizing perturbation distributions over a small seed domain and extending them to the remaining records through efficient local operations. However, existing methods are limited to specific one-stage interpolation schemes and lack general correctness conditions for multi-stage, branching, or hierarchical extensions. This gap motivates our graph-based framework. Section~\ref{sec:related_work} (related work) provides a detailed comparison of these approaches.}

\DEL{
\smallskip
\noindent \textbf{Works related to scalable mDP mechanism design.}
Several recent studies have sought to improve the scalability of optimization-based mDP mechanisms by exploiting domain structure. For a secret domain $\mathcal{X}$ and output domain $\mathcal{Y}$, a full perturbation mechanism requires a conditional probability matrix of size $|\mathcal{X}|\times|\mathcal{Y}|$; when $\mathcal{Y}=\mathcal{X}$, the number of decision variables grows quadratically with the domain size. Decomposition-based methods~\cite{qiu-IJCAI2024,qiu2025time} reduce this cost by grouping locations, leveraging graph structure, or separating independent feature blocks. However, as illustrated in Fig.~\ref{fig:scalability}(a), they still largely rely on centralized optimization over the secret domain. Consequently, their practical scalability is often limited to relatively small or moderately sized domains, e.g., on the order of $10^3$ records~\cite{qiu-IJCAI2024}, making them difficult to apply to fine-grained domains containing thousands to millions of records.

Another promising direction, illustrated in Fig.~\ref{fig:scalability}(b), is to move from monolithic optimization toward distributed optimization design. Recent works~\cite{Qiu-PETS2025,Liu-CCS2025} follow this idea by allowing each user to construct a perturbation mechanism only over a locally relevant subset of records, such as geo-locations that are plausible for that user, rather than over the full target domain. This reduces each optimization problem from the full domain size $|\mathcal{X}|$ to a user-specific local domain size $|\tilde{\mathcal{X}}|$, where $|\tilde{\mathcal{X}}|\ll|\mathcal{X}|$, enabling these methods to handle larger effective secret domains, e.g., roughly $2{,}500$--$6{,}000$ records. However, such local mechanisms must still be carefully coordinated: locally relevant domains may overlap across users, and the resulting mechanisms must remain compatible with the global metric structure to preserve the desired mDP guarantee.

These limitations motivate an \emph{extension-based paradigm} for scalable mDP mechanism design: as illustrated by Fig.~\ref{fig:scalability}(c), rather than optimizing a complete mechanism over the full target domain, one first optimizes perturbation distributions on a small set of seed records and then extends them to the remaining records through local operations. Initial work has shown that this strategy can substantially reduce optimization cost and scale to substantially larger domains, e.g., on the order of $40{,}000$--$130{,}000$ records~\cite{Qiu-USec2026}. However, existing methods are tailored to specific extension patterns and lack a general correctness principle for multi-stage, branching, or hierarchical extensions, where newly generated records may be further extended and multiple local paths must be assembled into one global mechanism.

The central challenge is to ensure that local extension remains globally valid. A correct procedure must preserve seed-record perturbation distributions, maintain consistency on overlapping local domains, and guarantee that the assembled mechanism still satisfies mDP after many extension steps. These requirements are especially subtle when local mechanisms are generated independently but must remain compatible with the global metric structure. In this paper, \emph{we focus on large structured secret domains, especially fine-grained spatial domains arising from high-resolution geographic discretization.}
}


\subsection*{Our Contributions}

\noindent\textbf{Contribution 1: A general framework for extending mDP
mechanisms.}
We develop a graph-based framework for constructing an mDP mechanism over a large domain from a mechanism optimized over a much smaller set of seed records. The framework is not tied to a specific extension algorithm; rather, it formalizes a broad class of constructions in which the seed mechanism is extended to a larger target domain through local operations. The graph provides a unified way to organize these operations and supports sequential, parallel, and hierarchical constructions.

Within this framework, we identify three conditions that, when satisfied jointly, are {\dcheck necessary and sufficient } to ensure that the local constructions provide the required privacy protection, agree on shared records, and preserve privacy across multiple extension stages. We formally prove in \textbf{Theorem~\ref{thm:global-mdp}} that any construction jointly satisfying these conditions forms a well-defined mDP mechanism over the full target domain. The framework therefore provides a reusable foundation for designing and verifying scalable extension-based mechanisms.



\noindent
\textbf{Contribution 2: A tree-based extension algorithm for fine-grained $\ell_p$-metric domains.} We instantiate the general framework with a tree-based algorithm for hierarchical grid domains. Starting from perturbation distributions optimized over a coarse set of seed records, the algorithm recursively constructs distributions for finer-grid records through efficient local interpolation. It thereby replaces costly full-domain optimization with a smaller seed-level optimization followed by inexpensive extension steps.

The interpolation and hierarchical extension procedures jointly satisfy the conditions established by the general framework. For two-dimensional domains, the algorithm performs interpolation along one coordinate at a time, with the resulting privacy guarantees composing under commonly used $\ell_p$ metrics. We prove in \textbf{Theorem~\ref{thm:subtree-preservation}} that the resulting mechanism satisfies mDP over the complete fine-grained target domain.

\noindent
\textbf{Contribution 3: Empirical validation on large location domains.}
We evaluate the proposed tree-based construction on location domains derived from the Rome, New York City, and London road networks~\cite{openstreetmap} across different domain resolutions, seed densities, and privacy budgets. Comparisons with representative mDP mechanisms show that our methods generally provide favorable utility and scalability while exhibiting no empirical mDP violations in the tested settings. Additional experiments demonstrate how seed density can be adjusted to balance utility and computational cost, supporting both the practical effectiveness and theoretical guarantees of the proposed construction.

The remainder of this paper is organized as follows. Section~\ref{sec:related_work} reviews related work on mDP mechanism design and its scalability. Section~\ref{sec:prelim} introduces the background on mDP and optimization-based mechanism design and formalizes the extension problem. Section~\ref{sec:framework} presents the general extension framework and its correctness conditions. Section~\ref{sec:tree} develops the tree-based instantiation for $\ell_p$-metric domains. Section~\ref{sec:experiments} presents the empirical evaluation, and Section~\ref{sec:conclusion} concludes the paper.

\begin{table*}[t]
\centering
{\rev 
\caption{Comparison of scalable optimization-based mDP mechanism designs.}
\label{tab:scalable_mdp_comparison}
\small 
\begin{tabular}{p{0.14\textwidth}p{0.17\textwidth}p{0.39\textwidth}p{0.21\textwidth}}
\toprule
Approach
& Optimized domain
& Construction strategy
& Number of secret records \\
\midrule
Decomposition-based
~\cite{qiu-IJCAI2024,qiu2025time}
& Original secret domain  $\mathcal{X}$
& Solve centralized optimization subproblems and combine their results
& Approximately $1{,}000$ \\
\hline
Localized
~\cite{Qiu-PETS2025,Liu-CCS2025}
& User-, region-, or anchor-specific domain
$\widetilde{\mathcal{X}}\subset\mathcal{X}$
& Construct multiple smaller local mechanisms and coordinate their outputs
& $2{,}500$--$6{,}000$ \\
\hline
Extension-based
~\cite{Qiu-USec2026}
& Seed domain
$\mathcal{X}_{\mathrm{seed}}\subset\mathcal{X}_{\mathrm{tar}}$
& Optimize a seed mechanism and extend it to the target domain
& $40{,}000$--$130{,}000$ \\
\bottomrule
\end{tabular}
\vspace{2pt}
\parbox{0.96\textwidth}{\footnotesize
The reported domain sizes are taken from the corresponding studies and are
intended to indicate the demonstrated scale of each approach, rather than a
controlled runtime comparison across different experimental environments.}}
\end{table*}

\section{Related Work}
\label{sec:related_work}
\noindent\textbf{Predefined distance-based mechanisms.}
Since the introduction of mDP~\cite{Chatzikokolakis-PETS2013}, extensive work has studied mechanisms that enforce distance-dependent privacy guarantees. A common approach specifies a noise distribution whose probabilities decay with the distance between the input and output. Representative examples include the planar Laplace mechanism~\cite{Andres-CCS2013} for $\ell_2$ geo-indistinguishability and the exponential mechanism~\cite{chatzikokolakis2015constructing}, which favors outputs closer to the true input. Their analytical forms enable efficient construction over continuous or fine-grained domains. However, this efficiency comes at the cost of flexibility, as fixed distributions generally cannot incorporate heterogeneous utility costs, prior distributions, or application-specific downstream objectives.

\textbf{Utility-aware optimization.} 
To overcome the limited flexibility of predefined noise distributions, optimization-based approaches directly optimize perturbation probabilities under mDP constraints~\cite{Bordenabe-CCS2014}. By incorporating data priors, heterogeneous utility costs, domain geometry, and application-specific requirements into the objective, these methods can achieve favorable privacy--utility trade-offs on relatively small domains~\cite{Bordenabe-CCS2014,Wang-ICDM2016,Yu-NDSS2017,Wang-WWW2017}.

However, their scalability is limited by the size of the resulting optimization problem. For a secret domain $\mathcal{X}$ and output domain $\mathcal{Y}$, a direct formulation contains $|\mathcal{X}||\mathcal{Y}|$ decision variables and up to $O(|\mathcal{X}|^2|\mathcal{Y}|)$ privacy constraints. When $\mathcal{Y}=\mathcal{X}$, these quantities become $O(|\mathcal{X}|^2)$ and $O(|\mathcal{X}|^3)$, respectively. Consequently, full-domain optimization becomes computationally prohibitive as the domain grows or is discretized more finely.


\DEL{
\begin{figure}[t]
    \centering
    \includegraphics[width=\columnwidth]{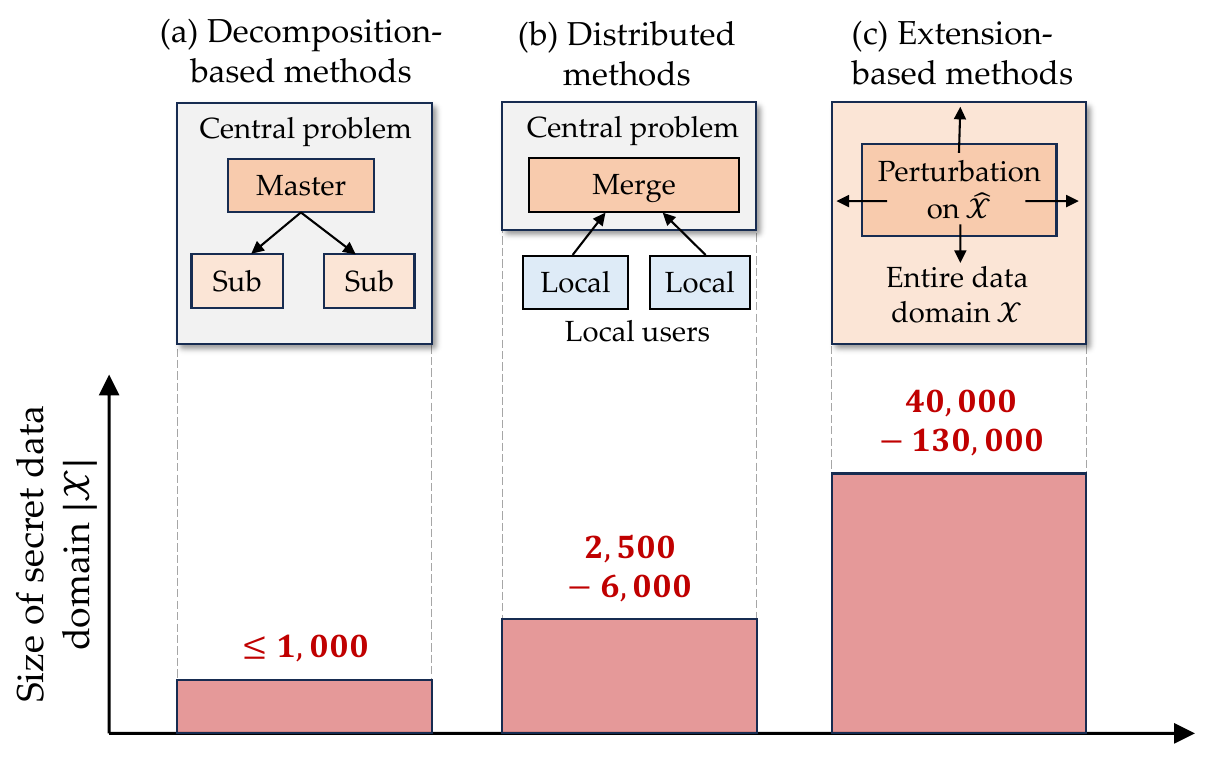}
    \caption{Comparison of approaches for improving the scalability of
    optimization-based mDP design: (a) centralized decomposition,
    (b) localized mechanism construction, and (c) extension-based
    construction.}
    \label{fig:scalability}
\end{figure}}

\noindent\textbf{Scalable optimization-based mechanism design.}
The scalability bottleneck of full-domain optimization has motivated methods that reduce the size, scope, or number of optimization problems. One direction combines predefined mechanisms with restricted optimization or post-processing. For example, Bayesian remapping~\cite{chatzikokolakis2017efficient} improves the utility of an existing mechanism through post-processing, while ConstOPT~\cite{ImolaUAI2022} restricts the optimization structure. Although these hybrid approaches reduce computation, they remain tied to predefined noise structures or may still incur substantial overhead on fine-grained domains. Other approaches exploit the structure of the secret domain to improve scalability. {\rev Table~\ref{tab:scalable_mdp_comparison} compares three representative categories in terms of their optimization scope, construction strategy, and reported evaluation scale.} \looseness = -1

\emph{Decomposition-based methods} partition the original problem by grouping locations, exploiting graph structure, or separating independent feature blocks~\cite{qiu-IJCAI2024,qiu2025time,Qiu-TMC2022,ImolaUAI2022}. By solving smaller subproblems, they retain the flexibility of utility-aware optimization while reducing the cost of each optimization. However, the number or size of these subproblems generally increases with domain resolution, limiting scalability on large, fine-grained domains.

\emph{Localized mechanism-design methods} construct smaller user- or anchor-specific mechanisms. Locally relevant geo-obfuscation~\cite{Qiu-PETS2025} optimizes over subsets of locations relevant to particular users or regions, whereas PAnDA~\cite{Liu-CCS2025} uses anchor-based approximation to reduce the optimization size. If $\widetilde{\mathcal{X}}$ denotes a local domain, these methods reduce each optimization from the full domain $\mathcal{X}$ to $\widetilde{\mathcal{X}}$, where $|\widetilde{\mathcal{X}}|\ll|\mathcal{X}|$. However, overlapping local domains and independently constructed mechanisms require careful coordination to maintain compatibility with the global metric and preserve mDP.

\emph{Extension-based methods} optimize a mechanism over a small seed domain and generate distributions for the remaining records through efficient local operations, thereby avoiding repeated or full-domain optimization. Existing work~\cite{Qiu-USec2026} uses coordinate-wise log-convex interpolation to extend a core mechanism to fine-grained records, supporting substantially larger target domains while retaining utility-aware optimization over the core domain.

However, the existing method is limited to a specific one-stage interpolation scheme and does not establish general correctness conditions for multi-stage, branching, or hierarchical extensions. In these settings, generated records may support subsequent extensions, and multiple local extension paths must be integrated into a globally valid mechanism. Our work addresses this gap through a graph-based framework that establishes {\rev necessary and sufficient  conditions} for preserving mDP under general multi-stage extensions and a tree-based algorithm that instantiates these conditions for hierarchical grid domains.

\DEL{
\section{Related Works}

\begin{figure}[t]
    \centering
    \includegraphics[width=0.49\textwidth]{fig/scalability.pdf} 
    \caption{Comparison of existing scalable mDP optimization approaches.}
    \label{fig:scalability}
\end{figure}

Since the introduction of mDP~\cite{Chatzikokolakis-PETS2013}, a large body of work has studied mechanisms for enforcing distance-based privacy guarantees. One line of work develops \emph{predefined noise mechanisms}, such as the planar Laplace mechanism~\cite{Andres-CCS2013} for $\ell_2$ geo-indistinguishability and the EM~\cite{chatzikokolakis2015constructing}, which favors outputs close to the true input. These mechanisms are efficient and naturally applicable to continuous or fine-grained domains, but their fixed noise distributions often yield suboptimal privacy--utility trade-offs because they do not adapt to context- or direction-dependent utility loss.

To improve utility, optimization-based methods formulate mechanism design as a linear program over a stochastic perturbation matrix~\cite{Bordenabe-CCS2014}. The LP chooses transition probabilities to minimize expected utility loss subject to mDP constraints, and can therefore achieve strong privacy--utility trade-offs on small domains~\cite{Bordenabe-CCS2014,Wang-ICDM2016,Yu-NDSS2017,Wang-WWW2017}. However, when $\mathcal{Y}=\mathcal{X}$, the perturbation matrix contains $O(|\mathcal{X}|^2)$ variables, making full-domain optimization computationally prohibitive for large, continuous, or fine-grained domains. Coarse discretization can reduce the problem size, but may weaken privacy by overestimating distances between neighboring records.

This scalability bottleneck has motivated several approaches that reduce the cost of optimization-based mDP design. Hybrid methods, such as Bayesian remapping~\cite{chatzikokolakis2017efficient} and ConstOPT~\cite{ImolaUAI2022}, combine fixed mechanisms with post-processing or constrained optimization to improve efficiency, but they remain tied to predefined noise structures or still incur substantial overhead on fine-grained domains. 

Decomposition-based methods further exploit domain structure, such as grouping locations, leveraging graph structure, or separating independent feature blocks ~\cite{qiu-IJCAI2024,Qiu-TMC2022,ImolaUAI2022}, to reduce the effective optimization size; however, they still largely rely on centralized optimization, limiting their applicability to very large domains. A related decentralized direction optimizes mechanisms over smaller local or anchor-based subproblems rather than the entire domain. For example, PAnDA~\cite{Liu-CCS2025} uses anchor-based approximation to reduce the optimization size, while locally relevant geo-obfuscation~\cite{Qiu-PETS2025} constructs user- or region-specific mechanisms over locally relevant location sets.

The work most closely related to ours is extension-based mDP design~\cite{Qiu-USec2026}, which optimizes a mechanism on a smaller core set $\hat{\mathcal{X}}$ and extends it to fine-grained records using coordinate-wise log-convex interpolation. While demonstrating the promise of extension, this method is limited to one-stage extension and a specific interpolation formula. In contrast, our work develops a general graph-based framework for multi-stage extension and instantiates it with tree-based interpolation strategies.}

\section{Preliminary}
\label{sec:prelim}
In this section, we review the preliminaries of mDP. 
For convenience, we summarize the main notation used throughout the paper in Table~\ref{tab:notation} in \textbf{Appendix~\ref{sec:notations}}.

\subsubsection*{\textbf{Threat model.}} We let $\mathcal{X}$ denote the input domain (i.e., the set of possible secret records) and $\mathcal{Y}$ denote the output domain (i.e., the set of values real records can be mapped to by perturbation mechanisms). A \emph{perturbation mechanism} is defined as a randomized mapping $\mathcal{M} : \mathcal{X} \rightarrow \mathcal{Y}$. 
To capture record similarity, we equip $\mathcal{X}$ with a distance function $d:\mathcal{X}\times\mathcal{X}\rightarrow \mathbb{R}_{\ge 0}$ that quantifies pairwise dissimilarity, e.g., geographic distance for locations, embedding-based distances for categorical values, or application-defined semantic distances. Let the random variable $X$ denote the true input record drawn from $\mathcal{X}$. 

A perturbation mechanism is defined as a randomized mapping
$\mathcal{M}:\mathcal{X}\rightarrow\mathcal{Y}$.
{\rev For the discrete output domain $\mathcal{Y}$, we write
$\mathcal{M}(\mathbf{y}\mid\mathbf{x})
\coloneqq
\Pr[\mathcal{M}(\mathbf{x})=\mathbf{y}]$
for the conditional probability mass of output $\mathbf{y}$ given input $\mathbf{x}$; accordingly, $\mathcal{M}(\cdot\mid\mathbf{x})$ denotes the output distribution induced by $\mathbf{x}$.}


{\rev We consider a local deployment of an mDP mechanism, analogous to the standard LDP setting~\cite{Wang-WWW2017}. The server is untrusted: it knows the perturbation mechanism $\mathcal{M}$, may have a prior distribution over the true data $X$, and may attempt to infer an individual's sensitive input from the received perturbed report. In particular, given an observed output $y$, the prior, and knowledge of $\mathcal{M}$, the server can compute a posterior belief over $X$ via Bayes' rule~\cite{Yu-NDSS2017}. This captures realistic deployments in which the data collector is incentivized to analyze collected records and therefore cannot be fully trusted from a privacy standpoint. Each user faithfully applies the fixed mechanism $\mathcal{M}$ locally before transmission, so the server observes only the perturbed report and does not control the client-side randomization. The server may perform arbitrary computations on the report or deviate from any prescribed downstream analysis; by the post-processing property, such actions alone cannot degrade the report's formal mDP guarantee. 
}

\subsubsection*{\textbf{Formal definition of mDP}} LDP requires data perturbation to maintain a uniform level of \emph{indistinguishability} between users' data~\cite{Duchi-FOCS2013}. mDP generalizes this concept by tying the privacy guarantee to a distance over the input space: nearby inputs should be harder to distinguish than far-apart ones. Originally proposed for location privacy~\cite{Andres-CCS2013}, mDP has been naturally extended to high-dimensional domains such as text embedding \cite{feyisetan2021private} and image data protection \cite{chen2021perceptual}. \looseness =-1

\begin{definition}[Lipschitz bound and continuity~\cite{Qiu-USec2026}]
Define a function $f:\mathcal X\to\mathbb R$. We say $f$ satisfies an \emph{$(\epsilon,d_p)$-Lipschitz bound} between $\mathbf{x},\mathbf{x}'\in\mathcal X$ if
\begin{equation} 
\label{eq:Lipschitzbound}
|f(\mathbf{x})-f(\mathbf{x}')| \le \epsilon\, d(\mathbf{x},\mathbf{x}').
\end{equation}
We say $f$ is \emph{$(\epsilon,d)$-Lipschitz continuous} if the bound in~\eqref{eq:Lipschitzbound} holds for all $\mathbf{x},\mathbf{x}'\in\mathcal X$.
\end{definition}


{\rev \begin{definition}[$(\epsilon,d)$-mDP~\cite{Andres-CCS2013}]
\label{def:metricDP}
Let $\mathcal{Y}$ be a measurable output space. A randomized mechanism $\mathcal{M}:\mathcal{X}\rightarrow\mathcal{Y}$ satisfies \emph{$(\epsilon,d)$-mDP} if, for all $\mathbf{x},\mathbf{x}'\in\mathcal{X}$ and every measurable set $\mathcal{S}\subseteq\mathcal{Y}$,
\begin{equation}
\label{eq:mDP}
\Pr\!\left[\mathcal{M}(\mathbf{x})\in\mathcal{S}\right]
\leq
\exp\!\left(\epsilon d(\mathbf{x},\mathbf{x}')\right)
\Pr\!\left[\mathcal{M}(\mathbf{x}')\in\mathcal{S}\right].
\end{equation}
\end{definition}
When $\mathcal{Y}$ is countable, Definition~\ref{def:metricDP} is equivalently expressed as
\begin{equation}
\label{eq:mDP_discrete}
\Pr[\mathcal{M}(\mathbf{x})=\mathbf{y}]
\leq
\exp\!\left(\epsilon d(\mathbf{x},\mathbf{x}')\right)
\Pr[\mathcal{M}(\mathbf{x}')=\mathbf{y}]
\end{equation}
for every $\mathbf{y}\in\mathcal{Y}$. When both probability masses are positive, this is equivalent to the log-Lipschitz formulation.}

{\rev In the remainder of this paper, we consider a finite discrete output domain $\mathcal{Y}$. We therefore use $\mathcal{M}(\mathbf{y}\mid\mathbf{x}) \coloneqq
\Pr[\mathcal{M}(\mathbf{x})=\mathbf{y}]$ to denote the conditional probability mass of output $\mathbf{y}$ given input $\mathbf{x}$, and use $\mathcal{M}(\cdot\mid\mathbf{x})$ to denote the corresponding output distribution.} For simplicity, we use $\mathcal{M}(\mathbf{x}) \stackrel{\epsilon}{\approx} \mathcal{M}(\mathbf{x}')$ to represent that Eq.~\eqref{eq:mDP} is satisfied. \looseness=-1

Intuitively, mDP ensures that small changes in the input $\mathbf{x}$ of the perturbation method $\mathcal{M}$ result in bounded changes in the distribution of the output $\mathcal{M}(\mathbf{x})$, thus providing privacy guarantees in the corresponding distance metric space. A lower $\epsilon$ signifies a tighter bound, meaning {\rev less information can be inferred about the secret input $\mathbf{x}$ from the released output $\mathcal{M}(\mathbf{x})$}. {\rev Additionally, to distinguish the privacy model from the form of the privacy guarantee, Table~\ref{tab:dp_comparison} in Appendix \ref{sec:DPvsmDP} compares conventional central DP, LDP, and mDP. }

{\rev Although we use metric terminology throughout, the general extension framework also applies to pseudometrics. If distinct secrets $\mathbf{x},\mathbf{x}'\in\mathcal{X}$ satisfy $d(\mathbf{x},\mathbf{x}')=0$, Definition~\ref{def:metricDP}, applied in both directions, requires $\Pr[\mathcal{M}(\mathbf{x})\in\mathcal{S}] = \Pr[\mathcal{M}(\mathbf{x}')\in\mathcal{S}]$ for every measurable $\mathcal{S}\subseteq\mathcal{Y}$. Thus, the seed mechanism and subsequent extensions must assign identical output distributions to secrets in the same zero-distance equivalence class. 
}

\DEL{
\begin{definition}[\textbf{Exponential Mechanism}~\cite{Chatzikokolakis-PoPETs2015}]
\label{def:ExpMech}
Given a privacy parameter $\epsilon>0$ and an input $\mathbf{x}\in\mathcal{X}$, the (metric) exponential mechanism samples an output $\mathbf{y}\in\mathcal{Y}$ with probability
\begin{equation}
\Pr[\mathcal{M}(\mathbf{x})=y]
=
\frac{\exp\left(-\tfrac{\epsilon}{2} d(x,y)\right)}
{\sum_{\mathbf{y}'\in\mathcal{Y}} \exp\left(-\tfrac{\epsilon}{2} d(x,\mathbf{y}')\right)}, 
\label{eq:metric-em}
\end{equation}
where $d(x,y)$ measures the distance between $x$ and $y$.
\end{definition}
In this paper, we use the exponential mechanism as our primary DPN method considering that both $\mathcal{X}$ and $\mathcal{Y}$ are discrete, and EM provides a canonical way to sample from an arbitrary discrete output space. However, our framework does not rely on EM per se: other DPN mechanisms (like Laplace \cite{Andres-CCS2013}) whose conditional distribution can be evaluated (or sampled) as a function of $d(x,y)$ can be plugged into our pipeline with minimal changes.

\begin{definition}
[\textbf{Laplace Mechanism}~\cite{andres2013geo}] 
\label{def:Laplace}
Given a privacy parameter $\epsilon>0$ and an input $\mathbf{x}\in\mathcal{X}$, the metric-based Laplace (distance-based) mechanism outputs $\mathbf{y}\in\mathcal{Y}$ according to
\begin{equation}
\Pr[\mathcal{M}(\mathbf{x})=y]
=
\frac{e^{-\epsilon d_{\mathcal{X}}(x,y)}}
{\int_{\mathbf{y}'\in\mathcal{Y}} e^{-\epsilon d_{\mathcal{X}}(x,\mathbf{y}')}d\mathbf{y}'}.
\label{eq:metric-laplace}
\end{equation}
(For discrete $\mathcal{Y}$, the integral is replaced by a sum.)
\end{definition}
\begin{definition}
[\textbf{Exponential Mechanism}~\cite{Chatzikokolakis-PoPETs2015}] 
\label{def:ExpMech}
Given a privacy parameter $\epsilon>0$ and an input $\textbf{x} \in \mathcal{X}$, the exponential mechanism selects an output $\textbf{y} \in \mathcal{Y}$ with probability:
\begin{equation}
\label{eq:exp_mech}
\Pr\left[\mathcal{M}(\mathbf{x}) = y\right] = \frac{e^{-\frac{1}{2}\epsilon d(x, y)}}{\sum_{\mathbf{y}' \in \mathcal{Y}} e^{-\frac{1}{2}\epsilon d(x, \mathbf{y}')}}.
\end{equation}
\end{definition}

\begin{definition}[\textbf{LP-based mechanism}~\cite{Bordenabe-CCS2014}]
\label{def:OptMech}
In LP-based (optimization-based) designs, since both the input and output domains $\mathcal{X}$ and $\mathcal{Y}$ are discrete, a perturbation mechanism $\mathcal{M}$ can be represented by a conditional probability matrix $\mathbf{Z}=\bigl\{z_{\mathbf{x},\mathbf{y}}\bigr\}_{\mathbf{x}\in\mathcal{X},\,\mathbf{y}\in\mathcal{Y}}$, where $z_{\mathbf{x},\mathbf{y}}=\Pr[\mathcal{M}(\mathbf{x})=y]$.
Given an input prior $\pi(\mathbf{x})$ and a utility-loss function $c(\mathbf{y}|\mathbf{x})$,
an optimal mechanism is obtained by solving
\begin{eqnarray}
\min_{\mathbf{Z}} && \textstyle \sum_{\mathbf{x}\in\mathcal{X}}\sum_{\mathbf{y}\in\mathcal{Y}} \pi(\mathbf{x}) c(\mathbf{y}|\mathbf{x}) z_{\mathbf{x},\mathbf{y}}  \\
\mathrm{s.t.} && z_{\mathbf{x},\mathbf{y}} \le e^{\epsilon d(\mathbf{x},\mathbf{x}')}  z_{\mathbf{x}',y},
\quad \forall \mathbf{x},\mathbf{x}'\in\mathcal{X},\ \forall \mathbf{y}\in\mathcal{Y}, \\
&& \textstyle \sum_{\mathbf{y}\in\mathcal{Y}} z_{\mathbf{x},\mathbf{y}} = 1,\quad \forall \mathbf{x}\in\mathcal{X}, \\
&& z_{\mathbf{x},\mathbf{y}} \ge 0,\quad \forall \mathbf{x}\in\mathcal{X},\ \forall \mathbf{y}\in\mathcal{Y}.
\end{eqnarray}
Here, the objective minimizes expected utility loss under the prior, and the constraints enforce the
mDP-induced indistinguishability conditions parameterized by $(\epsilon,d)$.
\end{definition}}

\section{Graph-Based Extension Framework}
\label{sec:framework}
In this section, we present a general graph-based framework for extending perturbation mechanisms from a finite seed set $\mathcal{X}_{\mathrm{seed}}$ to a larger target set $\mathcal{X}_{\mathrm{tar}}$. We first {\rev formulate the problem in \textbf{\S\ref{subsec:problem}}}, then define the extension graph in \textbf{\S\ref{subsec:graph}}, and finally present extension algorithms and {\rev necessary and sufficient  conditions} for correctness in \textbf{\S\ref{subsec:algorithms}}. This framework provides the foundation for the tree-based construction developed in the next section.

\noindent\textbf{Motivating example: why one-stage extension is insufficient.}
Consider a multi-resolution grid where perturbation distributions are specified only on coarse seed records, while the target domain is finer. A one-stage extension constructs all target distributions directly from the seed mechanism, either globally or through parallel local steps. However, the generated distributions are treated as final and cannot be further propagated, making such schemes unsuitable for recursive or hierarchical extensions across multiple resolutions.

To address this limitation, we introduce a multi-stage procedure in which mechanisms generated at one stage can serve as inputs to subsequent stages until $\mathcal{X}_{\mathrm{tar}}$ is covered. This supports sequential, parallel, and hierarchical extension within a unified framework. However, locally valid extensions do not automatically guarantee global mDP: different regions may not be jointly constrained, overlapping regions may assign inconsistent distributions, and earlier privacy relations may not be preserved. Additional consistency conditions are therefore required.

\noindent\textbf{Graph-based representation.}
To formalize mechanism propagation across multiple stages, we represent the extension procedure as a \emph{directed acyclic graph (DAG)}. Each node corresponds to a local record region, and each directed edge represents an extension step that constructs a mechanism on a successor region from one defined on its predecessor. Directed paths capture sequential extension, multiple successors capture parallel extension, and multiple graph levels capture recursive or hierarchical extension. This representation provides a systematic way to specify local conditions on extension steps, overlapping regions, and successive stages, and to characterize when they yield a globally valid mDP mechanism.

\DEL{Computing optimal perturbation distributions under mDP remains computationally expensive when the domain is large or continuous. A practical and increasingly adopted strategy is to optimize the perturbation distribution over a subset of representative points, often referred to as \emph{anchors}, and then derive the perturbation behavior of other points from these optimized anchors. While intuitive, this strategy raises two fundamental theoretical challenges:
\begin{enumerate}
    \item \textbf{Extension consistency.} How can an optimized perturbation distribution defined on a subset of points be extended to additional points without re-solving the global optimization problem, while still preserving mDP?
    \item \textbf{Hierarchical dependency.} Since anchor points will later serve as representatives for larger regions, how should their optimization objective account for the utility impact on all points that will be derived from them?
\end{enumerate}

In this paper, we formalize and address these challenges through a hierarchical extension framework for mDP mechanisms.

\smallskip
{\rd \noindent\textbf{Mechanism space.}
For any $\mathcal{X}'\subseteq\mathcal{X}$, let
\begin{equation}
\mathsf{Mech}(\mathcal{X}') \triangleq
\left\{\mathcal{M}:\mathcal{X}'\times\mathcal{Y}\to[0,1]\ \middle|\
\sum_{\mathbf{y}\in\mathcal{Y}}\mathcal{M}(\mathbf{y} \mid \mathbf{x})=1,\ \forall \mathbf{x}\in\mathcal{X}'\right\}
\label{eq:mech-space}
\end{equation}
denote the set of feasible perturbation mechanisms indexed by records $\mathbf{x}\in\mathcal{X}'$.}}

{\rev 
\subsection{Problem Formulation}
\label{subsec:problem}

Let $\mathcal{X}_{\mathrm{seed}}\subseteq\mathcal{X}_{\mathrm{tar}}$ denote,
respectively, the seed and target secret domains, and let $\mathcal{Y}$ denote
the output domain. Suppose that an $(\epsilon,d)$-mDP seed mechanism
$\mathcal{M}_{\mathrm{seed}}:\mathcal{X}_{\mathrm{seed}}\rightarrow
\mathcal{Y}$ has already been constructed. The goal of mechanism extension is
to construct a target mechanism
$\mathcal{M}_{\mathrm{tar}}:\mathcal{X}_{\mathrm{tar}}\rightarrow\mathcal{Y}$
that preserves the existing seed distributions while defining valid
distributions for all additional target records.

\noindent\textbf{Problem (Utility-aware mDP mechanism extension).}
Given
$(\mathcal{X}_{\mathrm{seed}},\mathcal{X}_{\mathrm{tar}},\mathcal{Y},d,
\epsilon,\mathcal{M}_{\mathrm{seed}})$, construct
$\mathcal{M}_{\mathrm{tar}}$ satisfying the following requirements:

\begin{enumerate}
    \item \emph{Seed preservation:} for every
    $\mathbf{x}\in\mathcal{X}_{\mathrm{seed}}$ and
    $\mathbf{y}\in\mathcal{Y}$,
    \[
    \mathcal{M}_{\mathrm{tar}}(\mathbf{y}\mid\mathbf{x})
    =
    \mathcal{M}_{\mathrm{seed}}(\mathbf{y}\mid\mathbf{x}).
    \]

    \item \emph{Valid extension:} for every
    $\mathbf{x}\in\mathcal{X}_{\mathrm{tar}}$,
    $\mathcal{M}_{\mathrm{tar}}(\cdot\mid\mathbf{x})$ is a valid probability
    distribution:
    \[
    \textstyle 
    \mathcal{M}_{\mathrm{tar}}(\mathbf{y}\mid\mathbf{x})\geq 0,
    \qquad
    \sum_{\mathbf{y}\in\mathcal{Y}}
    \mathcal{M}_{\mathrm{tar}}(\mathbf{y}\mid\mathbf{x})=1.
    \]

    \item \emph{Global mDP:} for every
    $\mathbf{x},\mathbf{x}'\in\mathcal{X}_{\mathrm{tar}}$ and
    $\mathbf{y}\in\mathcal{Y}$,
    \[
    \mathcal{M}_{\mathrm{tar}}(\mathbf{y}\mid\mathbf{x})
    \leq
    \exp\!\left(\epsilon d(\mathbf{x},\mathbf{x}')\right)
    \mathcal{M}_{\mathrm{tar}}(\mathbf{y}\mid\mathbf{x}').
    \]
\end{enumerate}

Privacy feasibility alone does not distinguish among potentially useful and
trivial extensions. Let
$c(\mathbf{x},\mathbf{y})$ denote the utility loss incurred when
$\mathbf{y}$ is reported for $\mathbf{x}$, and let
$\pi_{\mathrm{tar}}$ denote a prior distribution over the target domain. The
expected utility loss of an extension is
\[
\textstyle 
\operatorname{UL}(\mathcal{M}_{\mathrm{tar}})
=
\sum_{\mathbf{x}\in\mathcal{X}_{\mathrm{tar}}}
\pi_{\mathrm{tar}}(\mathbf{x})
\sum_{\mathbf{y}\in\mathcal{Y}}
\mathcal{M}_{\mathrm{tar}}(\mathbf{y}\mid\mathbf{x})
c(\mathbf{x},\mathbf{y}).
\]
A utility-aware extension seeks a mechanism satisfying the three mandatory
requirements while minimizing, or otherwise controlling, this loss.

This formulation separates \emph{privacy correctness} from \emph{utility optimization}. The first three requirements define a valid mDP extension, while the utility-loss objective distinguishes among feasible extensions according to their usefulness. The framework introduced next provides {\rev necessary and sufficient } conditions for constructing a correct extension, whereas the choice of seed mechanism and extension procedure determines the resulting utility.}

\subsection{Extension Graph}
\label{subsec:graph}
{\rev The formulation above specifies the requirements of a valid extension from $\mathcal{X}_{\mathrm{seed}}$ to $\mathcal{X}_{\mathrm{tar}}$, but not how the extension is constructed. Rather than determining all target distributions simultaneously, we construct the mechanism progressively through local operations, whose dependencies are represented by a directed graph. Its nodes correspond to local record sets, while its edges describe how existing distributions are extended to new records. This representation supports sequential, parallel, and hierarchical constructions while explicitly capturing dependencies and overlaps among extension operations.

Importantly, the extension graph does not replace or redefine the metric $d$. The metric specifies the privacy relationships among secrets, whereas the graph describes where and in what order their perturbation distributions are constructed. Separating privacy requirements from construction dependencies allows different extension operators to be analyzed within a common correctness framework. We formalize this dependency structure below.}

\begin{definition}[Extension graph]
\label{def:extension-dag}
Given a seed domain
$\mathcal{X}_{\mathrm{seed}}\subseteq\mathcal{X}_{\mathrm{tar}}$, an
\emph{extension graph} is a triple
$\mathcal{G}=(\mathcal{V},\mathcal{E},
\{\mathcal{X}_v\}_{v\in\mathcal{V}})$, where $\mathcal{V}$ is a finite set of
nodes, $\mathcal{E}\subseteq\mathcal{V}\times\mathcal{V}$ is a set of directed
edges such that $(\mathcal{V},\mathcal{E})$ is acyclic, and each node
$v\in\mathcal{V}$ is associated with a local record set
$\mathcal{X}_v\subseteq\mathcal{X}_{\mathrm{tar}}$. A directed edge
$(v,u)\in\mathcal{E}$ indicates that the construction at node $u$ depends on
the perturbation distributions available at node $v$.

For each node $v\in\mathcal{V}$, define
$\mathrm{Succ}(v)\coloneqq
\{u\in\mathcal{V}:(v,u)\in\mathcal{E}\}$ and
$\mathrm{Pred}(v)\coloneqq
\{u\in\mathcal{V}:(u,v)\in\mathcal{E}\}$. The nodes in
$\mathcal{V}_0\coloneqq
\{v\in\mathcal{V}:\mathrm{Pred}(v)=\varnothing\}$ are called
\emph{source nodes}, and their associated sets
$\{\mathcal{X}_v:v\in\mathcal{V}_0\}$ are called \emph{source regions}. The
extension graph satisfies: 
\begin{itemize}
    \item (i) \emph{seed initialization:} $\mathcal{X}_{\mathrm{seed}} =\bigcup_{v\in\mathcal{V}_0}\mathcal{X}_v$; 
    \item (ii) \emph{target coverage:}     $\mathcal{X}_{\mathrm{tar}} =\bigcup_{v\in\mathcal{V}}\mathcal{X}_v$; and 
    \item (iii) \emph{extension coverage:} for every node $v\in\mathcal{V}$ with $\mathrm{Succ}(v)\neq\varnothing$, $\mathcal{X}_v \subseteq\bigcup_{u\in\mathrm{Succ}(v)}\mathcal{X}_u$.
\end{itemize}
\end{definition}
{\rev
\begin{remark}[Single-successor vs.\ multi-successor extension] When constructing the extension graph, the expansion of a node region can be represented either by a \emph{single} successor or by \emph{multiple} successors.
Concretely, for a node $u\in\mathcal{V}$, one may introduce a single successor $v$ and perform a monolithic extension on a larger region $\mathcal{X}_v$, or alternatively introduce several successors $v_1,\dots,v_m$ whose regions jointly cover the extension target, e.g., $\mathcal{X}_u \subseteq \mathcal{X}_v$ or $\mathcal{X}_u \subseteq \bigcup_{i=1}^m \mathcal{X}_{v_i}$. Both representations describe the same conceptual step, extending from the perturbation distributions already specified on $u$ to additional records, and do not change the \emph{definition} of extension. However, they differ operationally in the next stage of the pipeline: multiple successors naturally correspond to extension subproblems that can be executed in parallel (one per successor region), whereas a single successor corresponds to performing the extension as one sequential/monolithic subproblem.
\end{remark}}


\DEL{
\begin{definition}[mDP extension problem]
\label{def:mdp-extension-problem}
Fix a secret domain $\mathcal{X}$, a seed set
$\mathcal{X}_{\mathrm{seed}}\subseteq\mathcal{X}_{\mathrm{tar}}\subseteq\mathcal{X}$,
and an extension graph
$\mathcal{G}=(\mathcal{V},\mathcal{E},\{\mathcal{X}_v\}_{v\in\mathcal{V}})$
with source nodes
$\mathcal{V}_0=\{v\in\mathcal{V}:\mathrm{Pred}(v)=\varnothing\}$.
Suppose a seed mechanism $\mathcal{M}_{\mathrm{seed}}$ is specified on
$\mathcal{X}_{\mathrm{seed}}$. The \emph{mDP extension problem} is to
initialize mechanisms on the source regions
$\{\mathcal{X}_v:v\in\mathcal{V}_0\}$, propagate them through
$\mathcal{G}$ via local extension operations, and construct an induced
mechanism $\mathcal{M}_{\mathrm{tar}}$ on $\mathcal{X}_{\mathrm{tar}}$ such that:
\begin{itemize}
    \item[(i)] $\mathcal{M}_{\mathrm{tar}}$ agrees with
    $\mathcal{M}_{\mathrm{seed}}$ on the seed records;
    \item[(ii)] $\mathcal{M}_{\mathrm{tar}}$ satisfies the required mDP
    constraints on $\mathcal{X}_{\mathrm{tar}}$; and
    \item[(iii)] all local mechanisms are consistent on overlapping node
    regions induced by $\mathcal{G}$.
\end{itemize}
\end{definition}}

\DEL{
\begin{definition}[Extension algorithm]
\label{def:extendalgorithm}
Fix an extension graph $\mathcal{G}=(\mathcal{V},\mathcal{E},\{\mathcal{X}_v\}_{v\in\mathcal{V}})$.
An \emph{extension algorithm} is specified by a family of operators
$\{\mathrm{Ext}_v\}_{v\in\mathcal{V}}$, where each $\mathrm{Ext}_v$ takes as input the perturbation mechanism already specified on the parent region $\mathcal{X}_v$
and outputs perturbation mechanisms for all successor regions of $v$.

\smallskip
Concretely, let $\mathcal{M}_v^{\mathrm{in}}\in \mathsf{Mech}(\mathcal{X}_v)$ denote the mechanism available at node $v$
(e.g., inherited from a seed mechanism or fixed by earlier extensions along the extension process).
The extension rule at node $v$ is a mapping
\begin{equation}
\mathrm{Ext}_v:\ \mathsf{Mech}(\mathcal{X}_v)\ \longrightarrow\ \prod_{u\in\mathrm{Succ}(v)} \mathsf{Mech}(\mathcal{X}_u),
\label{eq:ext-operator}
\end{equation}
and its output is the collection of successor mechanisms
\begin{equation}
\bigl\{\mathcal{M}_{v\to u}\bigr\}_{u\in\mathrm{Succ}(v)}
\ \coloneqq\
\mathrm{Ext}_v\!\bigl(\mathcal{M}_v^{\mathrm{in}}\bigr),
\label{eq:ext-operator-apply}
\end{equation}
where each $\mathcal{M}_{v\to u}\in \mathsf{Mech}(\mathcal{X}_u)$ specifies a valid conditional distribution
$\mathcal{M}_{v\to u}(\cdot\mid \mathbf{x})$ for every $\mathbf{x}\in\mathcal{X}_u$.
\end{definition}}

\DEL{
We then introduce the two structural constraints on how the extension graph should be constructed: 
\paragraph{(G1) Target coverage and seed embedding.}
There exists a nonempty subset of nodes 
$\mathcal{V}_{\mathrm{seed}} \subseteq \mathcal{V}$, called \emph{seed nodes}, such that
\begin{equation}
\label{eq:G1}
  \bigcup_{v \in \mathcal{V}} \mathcal{X}_v \supseteq \mathcal{X}_{\mathrm{tar}},
  \quad
\end{equation}
Thus, each seed region is embedded in at least one seed node of the DAG.
Multiple seed nodes are allowed, and different seed regions may be extended independently.

\paragraph{(G2) Successor extension (branching extension).}
For every node $u \in \mathcal{V}$ with at least one successor,
\begin{equation}
  \mathcal{X}_u \subseteq 
  \bigcup_{v \in \mathrm{Succ}(u)} \mathcal{X}_v.
  \label{eq:G2}
\end{equation}
That is, the region associated with $u$ is covered by the union of its direct successors.
Nodes with $\mathrm{Succ}(u)=\varnothing$ are terminal regions where extension stops.

Condition (G2) enables a branching extension structure:
a region may be decomposed into multiple successor regions that can be processed independently,
supporting parallel computation and grid-based interpolation.}

\begin{figure}[t]
\centering
\hspace{0.00in}
\begin{minipage}{0.48\textwidth}
  \subfigure[Global seed-set extension]{
\includegraphics[width=1.00\textwidth]{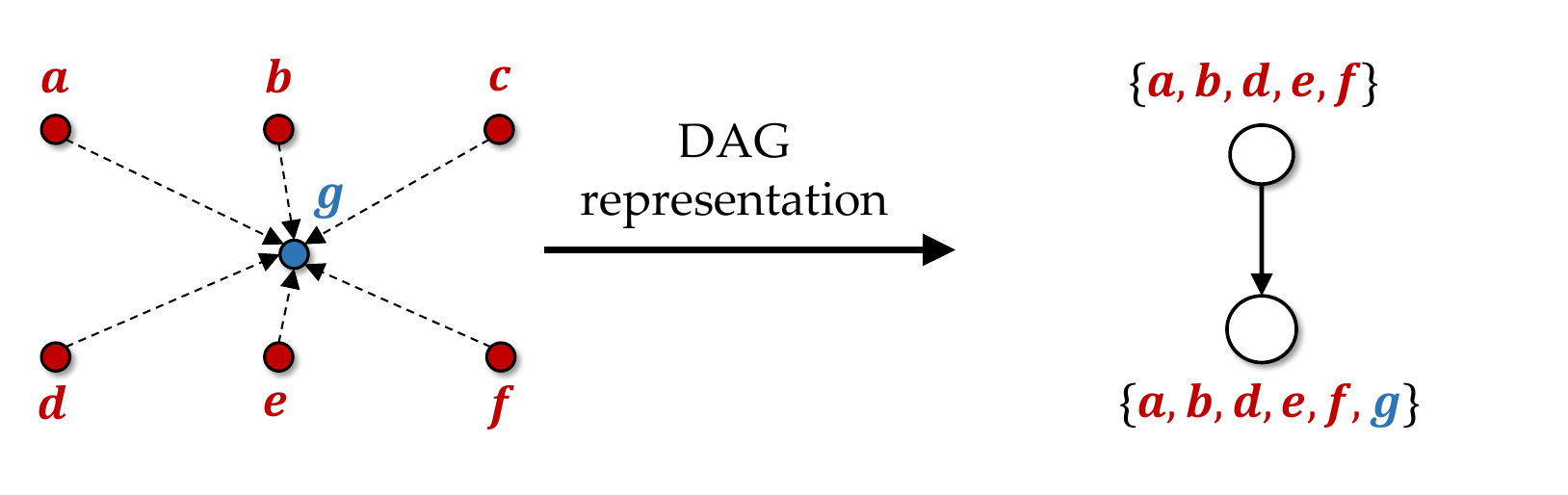}}
  \subfigure[Parallel local extension]{
\includegraphics[width=1.00\textwidth]{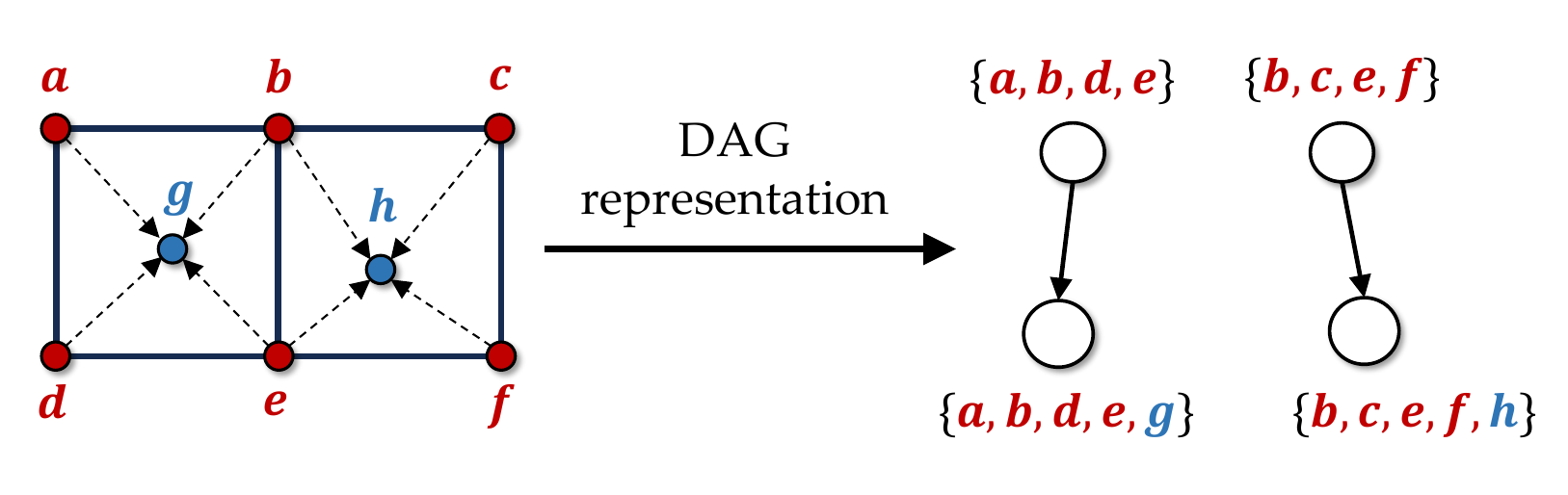}}
\end{minipage}
\caption{Existing extension schemes viewed as instances of the extension-graph abstraction.}
\label{fig:DAGexamples}
\end{figure}

Next, we show that two representative extension schemes from prior work can be viewed as special cases of the proposed extension-graph abstraction. As illustrated in Fig.~\ref{fig:DAGexamples}(a)(b), \emph{global seed-set extension}~\cite{borgs2018extend} performs a one-step global expansion of a seed set, whereas \emph{parallel local extension}~\cite{Qiu-USec2026} performs one-step local extensions over multiple overlapping subregions. Notably, both are shallow instances of the same abstraction; in contrast, \emph{the proposed extension graph supports recursive multi-step extension, in which mechanisms are propagated through multiple layers of successor regions before reaching the target records.}

\noindent \textbf{(a)} \textbf{Global seed-set extension}~\cite{borgs2018extend}.
Fig.~\ref{fig:DAGexamples}(a) shows a McShane-Whitney extension from an already specified record set (e.g., $\{{\rd a},{\rd b},{\rd d}$, ${\rd e},f\}$) to additional target records (e.g., ${\bl g}$), while preserving the perturbation distributions on the original records. Under our abstraction, this is represented by a predecessor-to-successor transition in which the successor region enlarges the predecessor region by adjoining the newly extended records.

\noindent \textbf{(b)} \textbf{Parallel local extension}~\cite{Qiu-USec2026}.
Fig.~\ref{fig:DAGexamples}(b) shows extension over two adjacent subregions. Each subregion selects local anchors (left: $\{{\rd a},{\rd b},{\rd d},{\rd e}\}$; right: $\{{\rd b},{\rd c},{\rd e},{\rd f}\}$) and independently generates new interior records (e.g., ${\bl g}$ and ${\bl h}$) via log-convex interpolation followed by normalization. Under our abstraction, this corresponds to parallel local extension steps over overlapping regions, where shared boundary anchors (e.g., $\{{\rd b},{\rd e}\}$) require consistency across neighboring subproblems.

\subsection{mDP Extension Algorithms}
\label{subsec:algorithms}
We next address the algorithmic side of the framework. Given an extension graph, an mDP extension algorithm specifies how mechanisms are initialized on seed records $\mathcal{X}_{\mathrm{seed}}$ and propagated through successor regions.

For any record set $\mathcal{S}\subseteq\mathcal{X}_{\mathrm{tar}}$, let
$\mathsf{Mech}(\mathcal{S})$ denote the set of perturbation mechanisms
from $\mathcal{S}$ to the common output domain $\mathcal{Y}$. That is,
each $\mathcal{M}\in\mathsf{Mech}(\mathcal{S})$ assigns a distribution
$\mathcal{M}(\cdot\mid x)\in\Delta(\mathcal{Y})$ to every $\mathbf{x}\in\mathcal{S}$.
We use $\mathcal{M}$, $\mathcal{M}_{\mathrm{seed}}$, and $\mathcal{M}_v$
for specific mechanisms, and reserve $\mathsf{Mech}(\cdot)$ for mechanism spaces.

\begin{definition}[Extension algorithm]
\label{def:extension-algorithm}
Fix an extension graph $\mathcal{G}=(\mathcal{V},\mathcal{E},\{\mathcal{X}_v\}_{v\in\mathcal{V}})$.
An \emph{extension algorithm} consists of:
\begin{itemize}
    \item for each node $v\in\mathcal{V}$ with $\mathrm{Succ}(v)\neq\varnothing$, a local extension operator
    \[
    \textstyle \mathrm{Ext}_v:\mathsf{Mech}(\mathcal{X}_v)\to
    \prod_{u\in\mathrm{Succ}(v)}\mathsf{Mech}(\mathcal{X}_u),
    \]
    which maps a mechanism $\mathcal{M}_v$ on $\mathcal{X}_v$ to a collection of successor mechanisms
    $\{\mathcal{M}_{v\to u}\}_{u\in\mathrm{Succ}(v)}$, where
    $\mathcal{M}_{v\to u}\in\mathsf{Mech}(\mathcal{X}_u)$; and
    \item an initialization rule that assigns the prescribed seed mechanism
    $\mathcal{M}_{\mathrm{seed}}$ to the seed records
    $\mathcal{X}_{\mathrm{seed}}$, together with a propagation rule that applies
    the local operators through the graph to construct a mechanism on
    $\mathcal{X}_{\mathrm{tar}}$.
\end{itemize}
\end{definition}

We then introduce the three requirements, \textbf{\emph{(A1)--(A3)}}, that ensure the resulting extended mechanisms are well defined and satisfy the desired $(\epsilon,d)$-mDP constraints when assembled into a global mechanism.
\DEL{
\smallskip
\noindent\textbf{Inheritance on overlaps.}
The output mechanisms must extend the predecessor mechanism on the overlap:
for every $u\in\mathrm{Succ}(v)$,
\begin{equation}
\mathcal{M}_{v\to u}(\mathbf{y} \mid \mathbf{x})\ =\ \mathcal{M}_v^{\mathrm{in}}(\mathbf{y} \mid \mathbf{x}),
\qquad \forall \mathbf{x}\in\mathcal{X}_v\cap\mathcal{X}_u,\ \forall \mathbf{y}\in\mathcal{Y}.
\label{eq:ext-overlap-inherit}
\end{equation}
Equation~\eqref{eq:ext-overlap-inherit} captures the fact that $\mathcal{M}_{v\to u}$ is an \emph{extended} mechanism: it preserves the already-fixed distributions from the predecessor side and assigns distributions to the remaining records in $\mathcal{X}_u$.}

\DEL{
\smallskip
\noindent\textbf{Seed inheritance.}
For each seed region $\mathcal{X}^{(k)}_{\mathrm{seed}}$, let $v_k\in\mathcal{V}_{\mathrm{seed}}$ be a corresponding seed node such that
$\mathcal{X}^{(k)}_{\mathrm{seed}}\subseteq \mathcal{X}_{v_k}$.
The extension algorithm is required to inherit the prescribed seed mechanism on that region:
\begin{equation}
\mathcal{M}(\mathbf{y} \mid x;\mathcal{X}_{v_k}) = Q^{(k)}_{\mathrm{seed}}(\mathbf{y} \mid \mathbf{x}),
\qquad \forall \mathbf{x}\in \mathcal{X}^{(k)}_{\mathrm{seed}},\ \forall \mathbf{y}\in\mathcal{Y}.
\label{eq:seed-inherit}
\end{equation}
We view~\eqref{eq:seed-inherit} as part of the definition of extension: the algorithm extends the given
$\{Q^{(k)}_{\mathrm{seed}}\}_{k\in K}$ from $\{\mathcal{X}^{(k)}_{\mathrm{seed}}\}_{k\in K}$ to the remaining records in the target domain.}

\subsubsection*{\textbf{(A1) Local mDP constraints.}}
We require that for every predecessor $v\in\mathcal{V}$, the mechanisms produced for its successor regions jointly satisfy the $(\epsilon,d)$-mDP inequality over the \emph{entire extension cover} of $v$.
Formally, as Fig.~\ref{fig:algorithmrequirements}(a) shows, for any $v\in\mathcal{V}$ and any two successors $u,w\in\mathrm{Succ}(v)$ (allowing $u=w$), the edge mechanisms
$\mathcal{M}_{v\to u}$ and $\mathcal{M}_{v\to w}$ must satisfy
\begin{equation}
\mathcal{M}_{v\to u}(\mathbf{x})\ \stackrel{\epsilon}{\approx}\ \mathcal{M}_{v\to w}(\mathbf{x}'),
\qquad \forall \mathbf{x}\in\mathcal{X}_u,\ \forall \mathbf{x}'\in\mathcal{X}_w,
\label{eq:A1-local-cross}
\end{equation}
When $u=w$, Eq.~\eqref{eq:A1-local-cross} reduces to the usual \emph{local} $(\epsilon,d)$-mDP constraints within a single successor region $\mathcal{X}_v$; when $u\neq w$, it enforces \emph{cross-successor} privacy across sibling regions refined from the same predecessor.

\textbf{Appendix \ref{sec:app:Interpo}} presents three representative extension algorithms, \emph{McShane--Whitney}, \emph{Log-convex}, and \emph{LP-based} interpolation methods, that can be designed to satisfy the \emph{local mDP constraint} \textbf{\emph{(A1)}}. 



\begin{figure}[t]
\centering
\hspace{0.00in}
\begin{minipage}{0.48\textwidth}
  \subfigure[Local mDP constraints]{
\includegraphics[width=0.48\textwidth]{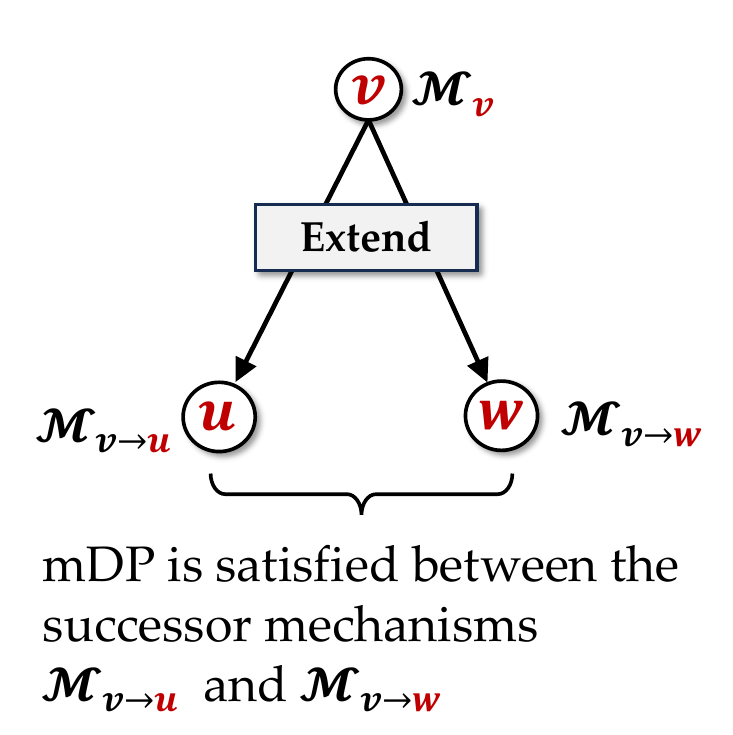}}
\hspace{0.1in}
  \subfigure[Overlap consistency]{
\includegraphics[width=0.48\textwidth]{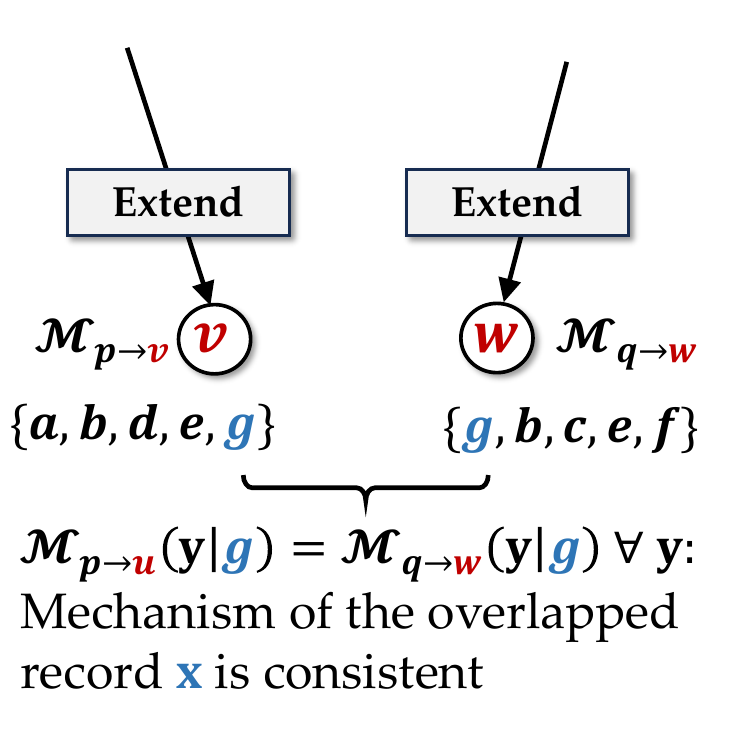}}
  \subfigure[Successor mDP preservation]{
\includegraphics[width=0.48\textwidth]{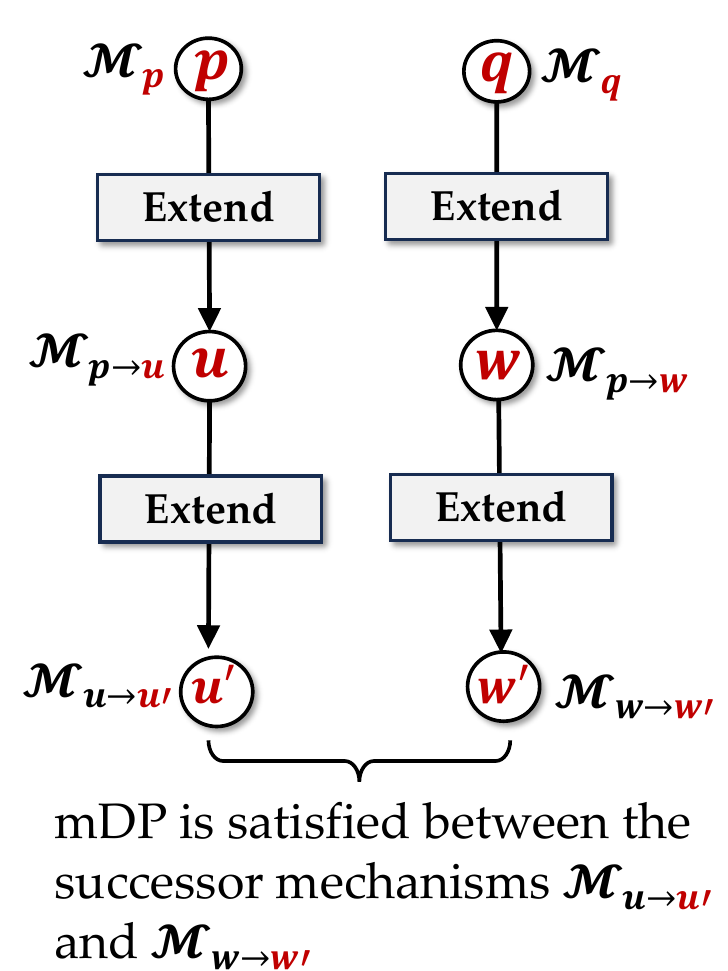}}
\hspace{0.1in}
  \subfigure[Descendant mDP preservation (Lemma \ref{lem:descendant-closure})]{
\includegraphics[width=0.48\textwidth]{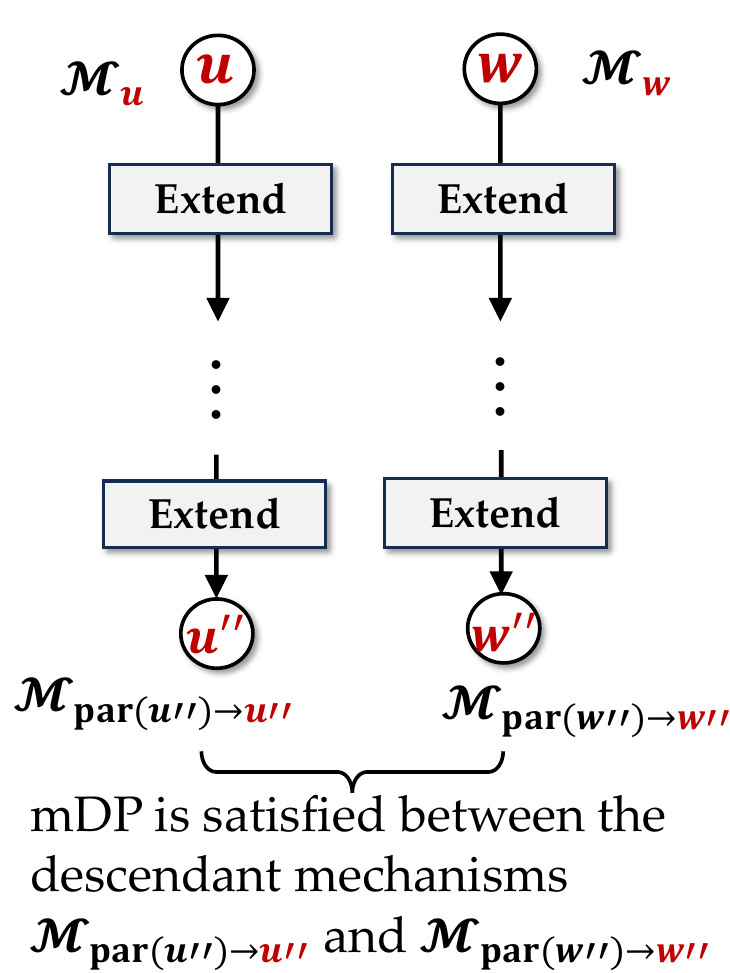}}
\end{minipage}
\caption{Algorithmic requirements for extension on a DAG. 
}
\label{fig:algorithmrequirements}
\end{figure}


Notably, local constraints alone are not sufficient to guarantee global mDP: two points $\mathbf{x},\mathbf{x}'$ that belong to different, disjoint nodes may never appear together in the same local constraint. To bridge this gap, next we introduce additional requirements (A2) and (A3) that relate successors in the extension hierarchy and ensure that privacy guarantees are preserved under extension.


\subsubsection*{\textbf{(A2) Overlap consistency.}}
The extended mechanisms must agree on overlapping records along extension paths. Specifically, for any two edges $(p,v),(q,w)\in\mathcal{E}$ such that 
as Fig.~\ref{fig:algorithmrequirements}(b) shows, for any overlapped record $\mathbf{x}\in \mathcal{X}_v\cap \mathcal{X}_w$ (e.g., $\mathbf{x} = b$), we require
\begin{equation}
\mathcal{M}_{p\to v}(\mathbf{y} \mid \mathbf{x})=\mathcal{M}_{q\to w}(\mathbf{y} \mid \mathbf{x}),
\qquad \forall \mathbf{y}\in\mathcal{Y}.
\label{eq:A2-overlap-general}
\end{equation}
That is, whenever two node regions overlap, the perturbation distribution assigned to a shared record must be identical.

A particularly important special case arises when a node $u$ has multiple predecessors. In this case, for any $v_1,v_2\in\mathrm{Pred}(u)$, we require
\begin{equation}
\mathcal{M}_{v_1\to u}(\mathbf{y} \mid \mathbf{x})=\mathcal{M}_{v_2\to u}(\mathbf{y} \mid \mathbf{x}),
\qquad \forall \mathbf{x}\in\mathcal{X}_u,\ \forall \mathbf{y}\in\mathcal{Y}.
\label{eq:A2-multiparent}
\end{equation}
Under condition~\eqref{eq:A2-multiparent}, the node mechanism $\mathcal{M}_u(\mathbf{y} \mid \mathbf{x})$ is well defined. In practice, successor regions are often chosen to be disjoint, or to overlap only on boundary sets. Moreover, merge nodes are typically constructed so that their incoming mechanisms already coincide. These design choices simplify the enforcement of~\eqref{eq:A2-overlap-general} and~\eqref{eq:A2-multiparent}.

\subsubsection*{\textbf{(A3) Successor-level mDP preservation}}
As illustrated in Fig.~\ref{fig:algorithmrequirements}(c), once a cross-node $\varepsilon$-mDP relation has been established between two extended regions, that relation must be preserved under one additional extension step. Specifically, suppose that two node mechanisms $\mathcal{M}_{p\to u}$ and $\mathcal{M}_{q\to w}$ satisfy
\begin{equation}
\mathcal{M}_{p\to u}(\mathbf{x})\ \stackrel{\epsilon}{\approx}\ \mathcal{M}_{q\to w}(\mathbf{x}'),
\qquad \forall \mathbf{x}\in\mathcal{X}_u,\ \forall \mathbf{x}'\in\mathcal{X}_w,
\label{eq:A3-premise}
\end{equation}
where $(p,u),(q,w)\in\mathcal{E}$. Then the same relation must continue to hold after one additional extension step: for every pair of direct successors
$u'\in \mathrm{Succ}(u)$ and $w'\in \mathrm{Succ}(w)$,
\begin{equation}
\mathcal{M}_{u\to u'}(\mathbf{x})\ \stackrel{\epsilon}{\approx}\ \mathcal{M}_{w\to w'}(\mathbf{x}'),
\qquad \forall \mathbf{x}\in\mathcal{X}_{u'},\ \forall \mathbf{x}'\in\mathcal{X}_{w'}.
\label{eq:A3-conclusion}
\end{equation}
In other words, once the $\varepsilon$-mDP relation is established between two nodes, all of their immediate extensions must satisfy mDP.

\smallskip
For convenience, we write $\mathrm{Desc}(v)$ for the set of all descendants of a node $v\in\mathcal{V}$, including $v$ itself; that is,
$\mathbf{x}\in\mathrm{Desc}(v)$ if and only if there exists a (possibly empty) directed path from $v$ to $z$ in $\mathcal{G}$.

\begin{lemma}[Descendant closure of cross-node mDP]
\label{lem:descendant-closure}
As illustrated in Fig.~\ref{fig:algorithmrequirements}(d), consider two
possibly identical nodes $u,w\in\mathcal{V}$ whose associated mechanisms
satisfy
\begin{equation}
\mathcal{M}_u(\mathbf{x})
\stackrel{\epsilon}{\approx}
\mathcal{M}_w(\mathbf{x}'),
\qquad
\forall\mathbf{x}\in\mathcal{X}_u,\quad
\forall\mathbf{x}'\in\mathcal{X}_w.
\label{eq:lemma-premise}
\end{equation}
Suppose that \emph{(A1)--(A3)} hold along all subsequent extensions from
$u$ and $w$. Then, for every pair of
respective descendants $u''\in\mathrm{Desc}(u)$ and
$w''\in\mathrm{Desc}(w)$, their associated mechanisms satisfy
\begin{equation}
\mathcal{M}_{\mathrm{par}(u'')\to u''}(\mathbf{x})\ \stackrel{\epsilon}{\approx}\ \mathcal{M}_{\mathrm{par}(w'')\to w''}(\mathbf{x}'),
~\forall \mathbf{x}\in\mathcal{X}_{u''},\ \forall \mathbf{x}'\in\mathcal{X}_{w''}.
\label{eq:lem-conclusion}
\end{equation}
\end{lemma}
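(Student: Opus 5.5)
The plan is an induction on the combined path length from $(u,w)$ to $(u'',w'')$, driven entirely by (A3). By (A2), each node $z$ carries a well-defined mechanism $\mathcal{M}_z$, and $\mathcal{M}_z = \mathcal{M}_{\mathrm{par}(z)\to z}$ for every choice of predecessor $\mathrm{par}(z)$. So the conclusion \eqref{eq:lem-conclusion} can be read as the statement $\mathcal{M}_{u''}(\mathbf{x})\stackrel{\epsilon}{\approx}\mathcal{M}_{w''}(\mathbf{x}')$ on $\mathcal{X}_{u''}\times\mathcal{X}_{w''}$. For source nodes, the notation $\mathcal{M}_{\mathrm{par}(z)\to z}$ is interpreted as the initialized mechanism $\mathcal{M}_z$.

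For each $u''\in\mathrm{Desc}(u)$, fix a directed path $u=u_0\to u_1\to\cdots\to u_k=u''$. Similarly, for each $w''\in\mathrm{Desc}(w)$, fix a path $w=w_0\to\cdots\to w_\ell=w''$. Since the graph is finite and acyclic, both $k$ and $\ell$ are finite. The induction is on $k+\ell$, and the hypothesis is that the cross-node relation holds for every pair $(u_i,w_j)$ with $i+j<k+\ell$. The base case $k=\ell=0$ is exactly the premise \eqref{eq:lemma-premise}.

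The main obstacle is that (A3) advances both sides simultaneously: from a relation between $u$ and $w$ it yields relations between $u'\in\mathrm{Succ}(u)$ and $w'\in\mathrm{Succ}(w)$. When $k\neq\ell$, one side must advance while the other stays put. I will handle this with self-loop padding, i.e. by treating a node as its own trivial successor, or equivalently by applying (A3) in the degenerate form in which the unchanged side is extended by the identity. The cleanest alternative is to show that a one-sided step follows from the structure already available. Since $\mathcal{X}_{w_j}\subseteq\bigcup_{w'\in\mathrm{Succ}(w_j)}\mathcal{X}_{w'}$ by extension coverage, and by (A2) the mechanism on $w_j$ agrees with that of each successor on shared records, the relation between $u_{i+1}$ and $w_j$ can be obtained from the relations between $u_{i+1}$ and each $w'\in\mathrm{Succ}(w_j)$. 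Those relations in turn come from applying (A3) to the pair $(u_i,w_j)$. This argument needs $w_j$ to have successors. If $w_j$ is a sink, padding is unavoidable. In that case I will state the convention explicitly: ``subsequent extensions'' include a terminal identity step, so that (A3) covers pairs where one side has stopped.

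With that in hand, the procedure is as follows. Walk both paths in lockstep using (A3) until the shorter one is exhausted, say $\min(k,\ell)$ steps, which yields the relation for $(u_m,w_m)$ with $m=\min(k,\ell)$. Then advance the longer side one step at a time using the one-sided mechanism described above. Each step consumes only the current pair relation, (A3), and (A2) on overlaps, so the induction closes. Finally, I will check the case $u=w$. There, siblings $u',w'\in\mathrm{Succ}(u)$ are already covered by (A1), which agrees with what (A3) yields, so no extra argument is needed. Having the relation for every endpoint pair, and reading it through (A2) as a statement about $\mathcal{M}_{\mathrm{par}(u'')\to u''}$ and $\mathcal{M}_{\mathrm{par}(w'')\to w''}$, gives \eqref{eq:lem-conclusion}.
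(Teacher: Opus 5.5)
Your proposal is correct and is essentially the paper's proof: fix paths to $u''$ and $w''$, advance them in lockstep by (A3), pad the shorter one with identity extensions (which the paper simply declares to be ``permitted by (A3)''), and use (A2) to identify each edge mechanism $\mathcal{M}_{\mathrm{par}(z)\to z}$ with the node mechanism. The one addition is your one-sided step via extension coverage (iii) and (A2), which the paper does not make; it shows that the identity-padding convention, which is not literally an edge in an acyclic graph, is needed only when the stopped side is a sink.
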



\subsubsection*{\textbf{Validity of the induced global mechanism}} We are now ready to show that the induced global mechanism (\textbf{Definition \ref{def:globalmech}}) is $(\epsilon,d)$-mDP (in \textbf{Theorem \ref{thm:global-mdp}}).
\begin{definition}[Induced global mechanism]
\label{def:globalmech}
Let $\{\mathcal{M}_v\}_{v\in\mathcal{V}_0}$ denote the source mechanisms
and let $\{\mathcal{M}_{v\to u}\}_{(v,u)\in\mathcal{E}}$ denote the
edge-local mechanisms produced by the extension algorithm. We define the
induced global mechanism $\mathcal{M}$ on $\mathcal{X}_{\mathrm{tar}}$ as
follows. For any $\mathbf{x}\in\mathcal{X}_{\mathrm{tar}}$, choose any
local assignment that contains $\mathbf{x}$:
\[
\mathcal{M}(\mathbf{y}\mid \mathbf{x})
\coloneqq
\begin{cases}
\mathcal{M}_u(\mathbf{y}\mid \mathbf{x}), 
& \text{if } u\in\mathcal{V}_0 \text{ and } \mathbf{x}\in\mathcal{X}_u,\\[2mm]
\mathcal{M}_{v\to u}(\mathbf{y}\mid \mathbf{x}),
& \text{if } (v,u)\in\mathcal{E} \text{ and } \mathbf{x}\in\mathcal{X}_u .
\end{cases}
\]
If multiple such local assignments exist, they give the same distribution
by source-level compatibility and overlap consistency. Hence
$\mathcal{M}$ is well defined.
\end{definition}

\begin{theorem}[Global $(\epsilon,d)$-mDP guarantee via extension algorithms]
\label{thm:global-mdp}

Consider a metric space $(\mathcal{X},d)$, a target set $\mathcal{X}_{\mathrm{tar}}\subseteq\mathcal{X}$, and a finite extension graph
$\mathcal{G}=(\mathcal{V},\mathcal{E}, \{\mathcal{X}_v\}_{v\in\mathcal{V}})$ whose local regions cover $\mathcal{X}_{\mathrm{tar}}$ and whose relevant nodes are reachable from source nodes
$\mathcal{V}_0=\{v\in\mathcal{V}:\mathrm{Pred}(v)=\varnothing\}$. Suppose the source regions are equipped with initial mechanisms $\{\mathcal{M}_v\}_{v\in\mathcal{V}_0}$ that agree with the prescribed seed mechanism $\mathcal{M}_{\mathrm{seed}}$ on $\mathcal{X}_{\mathrm{seed}}$, are consistent on overlapping source regions, and jointly satisfy $(\epsilon,d)$-mDP over $\bigcup_{v\in\mathcal{V}_0}\mathcal{X}_v$.

Assume that the extension algorithm propagates these source mechanisms through $\mathcal{G}$ and produces edge-local mechanisms $\{\mathcal{M}_{v\to u}\}_{(v,u)\in\mathcal{E}}$. The induced global mechanism $\mathcal{M}$ in Definition~\ref{def:globalmech} is well defined and satisfies $(\epsilon,d)$-mDP on $\mathcal{X}_{\mathrm{tar}}$ if and only if the source and edge-local mechanisms satisfy \textbf{\emph{(A1)--(A3)}}.
\end{theorem}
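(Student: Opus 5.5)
The plan is to prove the two directions separately, with sufficiency carrying most of the work. For sufficiency, assume (A1)--(A3). I would first settle well-definedness. Every record $\mathbf{x}\in\mathcal{X}_{\mathrm{tar}}$ lies in some $\mathcal{X}_u$ by target coverage, so a candidate assignment from Definition~\ref{def:globalmech} exists. Two assignments of edge type, $\mathcal{M}_{p\to v}$ and $\mathcal{M}_{q\to w}$ with $\mathbf{x}\in\mathcal{X}_v\cap\mathcal{X}_w$, agree by (A2), in the form \eqref{eq:A2-overlap-general}. Two source-type assignments agree by the assumed source-level consistency.

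The mixed case, a source assignment $\mathcal{M}_u$ against an edge assignment $\mathcal{M}_{p\to v}$, needs an extra step. I would reduce it to the edge case by using extension coverage (iii): it places $\mathbf{x}\in\mathcal{X}_u$ in some successor region $\mathcal{X}_{u'}$. I would then read (A2) as also forcing $\mathcal{M}_{u\to u'}$ to inherit $\mathcal{M}_u$ on $\mathcal{X}_u\cap\mathcal{X}_{u'}$, since the extension is required to preserve existing distributions (seed preservation). If $u$ has no successor, $\mathbf{x}$ occurs only at source level and there is nothing to check. This gives a single distribution $\mathcal{M}(\cdot\mid\mathbf{x})$, and seed preservation follows from the source mechanisms agreeing with $\mathcal{M}_{\mathrm{seed}}$.

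For global mDP, take arbitrary $\mathbf{x},\mathbf{x}'\in\mathcal{X}_{\mathrm{tar}}$ with $\mathbf{x}\in\mathcal{X}_{u''}$ and $\mathbf{x}'\in\mathcal{X}_{w''}$. Because relevant nodes are reachable from sources, there are $s,t\in\mathcal{V}_0$ with $u''\in\mathrm{Desc}(s)$ and $w''\in\mathrm{Desc}(t)$. The source mechanisms jointly satisfy $(\epsilon,d)$-mDP over $\bigcup_{v\in\mathcal{V}_0}\mathcal{X}_v$, so the premise \eqref{eq:lemma-premise} holds for the pair $(s,t)$. Lemma~\ref{lem:descendant-closure} then gives \eqref{eq:lem-conclusion} for $u'',w''$, which together with well-definedness yields $\mathcal{M}(\mathbf{x})\stackrel{\epsilon}{\approx}\mathcal{M}(\mathbf{x}')$.

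Two points need care. First, the lemma's base step goes from a node relation to edge mechanisms. For a source node, (A1) handles the first extension out of $s$ and $t$ when $s=t$. When $s\neq t$ I would invoke (A3) with the source mechanisms in the premise role, treating a source as having a virtual incoming edge. Second, the lemma needs a parent chosen along the path, $\mathrm{par}(u'')$. By (A2) the choice of incoming edge does not matter, so I would prove the lemma by induction on the sum of path lengths from $(s,t)$ to $(u'',w'')$, extending one side at a time with (A3) and using (A1) for sibling pairs. I expect this to be the main obstacle: making the induction order precise when the two paths have different lengths, and when one node is an ancestor of the other. In that case I would extend only the deeper side, pairing the stationary node with itself through the $u=w$ case of (A1)/(A3).

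For necessity, assume $\mathcal{M}$ is well defined and $(\epsilon,d)$-mDP. Each edge mechanism $\mathcal{M}_{p\to v}$ is a restriction of $\mathcal{M}$ to $\mathcal{X}_v$, because it is one of the admissible local assignments. Then (A2) holds because two assignments at an overlapping record are both equal to $\mathcal{M}(\cdot\mid\mathbf{x})$. (A1) and the conclusion \eqref{eq:A3-conclusion} of (A3) are instances of the global mDP inequality restricted to the relevant regions, so both hold. The only subtlety is interpretive: "well defined" must be read as every local assignment agreeing. I would state that reading explicitly, since otherwise (A2) could fail on assignments that Definition~\ref{def:globalmech} never selects.
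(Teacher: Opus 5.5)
Your overall route matches the paper's proof. Well-definedness comes from source-level consistency and (A2). Global mDP comes from picking source ancestors $s,t$ and pushing the source-level relation down with Lemma~\ref{lem:descendant-closure}. Necessity comes from restricting $\mathcal{M}$ to the relevant regions.

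The step that fails as you describe it is the unequal-path-length case of the lemma. (A3) always advances \emph{both} nodes: a relation between $u$ and $w$ yields relations between $\mathrm{Succ}(u)$ and $\mathrm{Succ}(w)$. Its $u=w$ instance only relates successors of one node to each other, which is just (A1). So ``pairing the stationary node with itself'' never relates some $u'\in\mathrm{Succ}(u)$ to an unchanged $w$, and your induction on the sum of path lengths has no valid one-sided step.

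The paper closes this hole by padding the shorter path with identity extensions that it treats as permitted by (A3). In effect it reads (A3) as also allowing $u'=u$ or $w'=w$, and you need to adopt that reading explicitly.

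Alternatively, you can derive a one-sided step when the stationary node $w$ has an incoming edge $(q,w)$ and a successor:
\begin{itemize}
\item (A3) gives $\mathcal{M}_{u\to u'}(\mathbf{x})\stackrel{\epsilon}{\approx}\mathcal{M}_{w\to w'}(\mathbf{x}')$ for every $w'\in\mathrm{Succ}(w)$;
\item extension coverage (iii) places each $\mathbf{x}'\in\mathcal{X}_w$ in some $\mathcal{X}_{w'}$;
\item (A2) on the edges $(q,w),(w,w')$ gives $\mathcal{M}_{w\to w'}(\mathbf{x}')=\mathcal{M}_{q\to w}(\mathbf{x}')$.
\end{itemize}
Doing the one-sided steps early, while the shorter path still sits at a node with a successor, then works, using your virtual-edge convention at sources. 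It fails only when the shallower node is a source with no successors. There nothing in (A1)--(A3) as literally written constrains the pair, so the identity-extension reading is genuinely required.

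A smaller point concerns your well-definedness argument. The claim that a record of a successor-less source region ``occurs only at source level'' is unjustified, because regions may overlap across branches (e.g., shared boundary corners). So $\mathbf{x}\in\mathcal{X}_u$ can also lie in a generated region $\mathcal{X}_v$ descending from another source. In that case there is no edge $\mathcal{M}_{u\to u'}$ for your inheritance argument to pass through. The paper simply applies (A2) to any two overlapping node regions, source nodes included (its gloss ``whenever two node regions overlap\ldots''). You should do the same rather than route the mixed case through successors.
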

\begin{proof}[Proof sketch]
The proof establishes both directions. For sufficiency, overlap consistency \textbf{\emph{(A2)}} ensures that the induced mechanism is well defined because all local mechanisms assign the same distribution to any shared record. For any two target records, local mDP \textbf{\emph{(A1)}} establishes the required relation when they belong to the same extension region, while \textbf{Lemma~\ref{lem:descendant-closure}} uses successor-level preservation \textbf{\emph{(A3)}} to propagate cross-node mDP relations to their descendant regions. Hence, the induced global mechanism satisfies $(\epsilon,d)$-mDP. For necessity, any local mechanism constituting a global mDP mechanism must satisfy \textbf{\emph{(A1)}} as its restriction; mechanisms sharing a record must agree on its distribution, yielding \textbf{\emph{(A2)}}; and global mDP restricted to any pair of successor regions yields \textbf{\emph{(A3)}}. Therefore, \textbf{\emph{(A1)--(A3)}} are necessary and sufficient within the proposed extension framework. The formal proof is given in \textbf{Appendix~\ref{subsec:proof:thm:global-mdp}}.
\end{proof}
\begin{remark}
Intuitively, Theorem~\ref{thm:global-mdp} formalizes the idea that $(\epsilon,d)$-mDP can be enforced \emph{hierarchically}: once the privacy bound is established at coarse regions, the successor-level preservation rule (A3) ensures that all extensions inherit the same bound. As a consequence, the combination of local constraints within nodes and cross-successor constraints
that are preserved under extension is {\rev necessary and sufficient } to guarantee global $(\epsilon,d)$-mDP over the entire domain.

\end{remark}

\smallskip
\noindent \textbf{Discussion and Algorithmic Implications.}
Among the three algorithmic requirements, \textbf{\emph{(A1)}} and \textbf{\emph{(A2)}} are relatively straightforward to enforce in practice. In particular, \textbf{\emph{(A1)}} can be imposed through a local extension procedure over the successor regions of each node, while \textbf{\emph{(A2)}} can often be ensured either by construction (e.g., by choosing disjoint successors) or by explicitly enforcing consistency on overlapping records.

In contrast, \textbf{\emph{(A3)}} is substantially more subtle. The successor-level preservation condition couples mechanisms across extension levels and ensures that privacy guarantees established at a coarse level are inherited by all subsequent extensions. Whether this condition can be enforced efficiently depends on the construction of the extension graph and on the geometry of the underlying metric space, as both affect how local privacy constraints interact and propagate through the extension process. Enforcing \textbf{\emph{(A3)}} directly at the global level would undermine the scalability benefits of the graph-based decomposition. Therefore, the central algorithmic challenge lies in \emph{how to design local extension rules and optimization procedures that implicitly guarantee \textbf{(A3)} while remaining computationally efficient.}

In the next section, we specialize the framework to a tree-based extension algorithm in an $\ell_p$ metric space, and show how \textbf{\emph{(A1)}}--\textbf{\emph{(A3)}} can be enforced by construction using only local information at each extension step.

\DEL{
\subsection{Dependency Graph Model.}
To enable scalable construction, we decompose the domain into subsets and optimize them sequentially. 
Let $\mathcal{X}$ denote the domain of real records, and let $\{\mathcal{X}_v\}_{v\in\mathcal{V}}$ be a collection of subsets such that
$\bigcup_{v\in\mathcal{V}}\mathcal{X}_v=\mathcal{X}$, where the subsets are not necessarily disjoint (i.e., $\mathcal{X}_u\cap \mathcal{X}_v$ may be nonempty for $u\neq v$).
For each node $v$, we aim to derive a perturbation distribution
\begin{equation}
\mathcal{M}_v(y \mid \mathbf{x}),
\qquad
\mathbf{x} \in \mathcal{X}_v,
y \in \mathcal{Y}.
\end{equation}

We organize these subsets using a dependency graph. 
Let $\mathcal{G}=(\mathcal{V},\mathcal{E})$ be a \emph{directed acyclic graph (DAG)} in which a directed edge $(u,v)\in\mathcal{E}$ indicates that the perturbation distribution at node $v$ depends on (or is obtained by extending) the solution at node $u$. 
The predecessor set of node $v$ is defined as
\begin{equation}
\mathrm{Pa}(v)=\{\,u\in\mathcal{V} : (u,v)\in\mathcal{E}\,\}.
\end{equation}


The acyclicity of $\mathcal{G}$ guarantees the existence of a
topological order such that if $(u,v) \in \mathcal{E}$,
then node $u$ is optimized based on the result of node $v$. For example, if the construction is recursive, we impose
\begin{equation}
\mathcal{M}_v(y \mid \mathbf{x})
=
\mathcal{T}_{u \to v}
\big(
\mathcal{M}_u(\cdot \mid \tilde{x})
\big),
\qquad
\forall \mathbf{x} \in \mathcal{X}_v,
\end{equation}
where $\mathcal{T}_{u \to v}$ denotes an extension operator
that maps the predecessor distribution to the successor domain.

\DEL{
\paragraph{Node-Level Optimization.} For each node $v$, we solve
\begin{equation}
\begin{aligned}
\min_{\mathcal{M}_v} \quad 
& \mathcal{L}_v(\mathcal{M}_v) \\
\text{s.t.} \quad 
& \mathcal{M}_v(\cdot \mid \mathbf{x}) \in \Delta(\mathcal{Y}),
\qquad \forall \mathbf{x} \in \mathcal{X}_v, \\
& \mathcal{M}_v(y \mid \mathbf{x})
\le
e^{\epsilon d(\mathbf{x},\mathbf{x}')}
\mathcal{M}_v(y \mid \mathbf{x}'),
\qquad
\forall \mathbf{x},\mathbf{x}' \in \mathcal{X}_v, y \in \mathcal{Y}, \\
& \text{predecessor consistency constraints.}
\end{aligned}
\end{equation}

\paragraph{Global Optimization.}

The overall multi-stage problem is

\begin{equation}
\begin{aligned}
\min_{\{\mathcal{M}_v\}_{v \in \mathcal{V}}}
\quad &
\sum_{v \in \mathcal{V}} \mathcal{L}_v(\mathcal{M}_v) \\
\text{s.t.} \quad
& \text{mDP constraints at each node}, \\
& \text{tree-structured predecessor consistency constraints}.
\end{aligned}
\end{equation}

}

\paragraph{Extension Operator.}
For any subset $S \subseteq \mathcal{X}$, define the space of perturbation mechanism
\begin{equation}
\mathcal{M}(\mathcal{S}) := \Big\{
Q: \mathcal{S} \times \mathcal{Y} \to [0,1]
\Big|
\sum_{y \in \mathcal{Y}} Q(y \mid \mathbf{x})=1,\ \forall \mathbf{x} \in S
\Big\}.
\end{equation}
For an edge $(u,v) \in \mathcal{E}$, an \emph{extension operator} is a  map
\begin{equation}
\mathcal{T}_{u \to v}: \mathcal{M}(\mathcal{X}_u) \to \mathcal{M}(\mathcal{X}_v),
\qquad
\mathcal{M}_v = \mathcal{T}_{u \to v}(\mathcal{M}_u).
\end{equation}

{\rd Here, we need to discuss the feasibility of the extension, which is non-trivial.}

\paragraph{Closure Under Multi-Step Extension.}

If every edge operator on a root-to-leaf path is non-expansive, then the composed extension operator along the path is also non-expansive, hence preserves $(\epsilon,d)$-mDP throughout the multi-stage construction.

{\rd Ruiyao, it would be interesting if you could formally prove that when using certain extension algorithms, such as predefined function–based interpolation, the final result is independent of the order of extension (which has been discussed). If this can be established, it would justify focusing on the more general cases where order does matter in the remaining part of the paper.}

\paragraph{Tree-Structured Restriction.} In a general DAG, a node may have multiple predecessors,
which may induce incompatible extension constraints.
To ensure well-defined recursive perturbation construction,
we restrict $\mathcal{G}$ to be a rooted tree:

\begin{equation}
|\mathrm{Pa}(v)| \le 1,
\qquad \forall v \in \mathcal{V} \setminus \{r\},
\end{equation}
where $r$ denotes the root node.
This guarantees that each node has at most one predecessor,
thus avoiding multi-predecessor inconsistency.}


\section{Tree-based Extension Algorithms}
\label{sec:tree}
In this section, we instantiate the general graph-based framework with a
\emph{tree-based} extension algorithm for hierarchical grid domains. This
setting is natural for geo-location secrets~\cite{Qiu-EDBT2024} and other
structured domains that admit multi-resolution representations~\cite{Samet-CSurvey1984}.
The key idea is to optimize perturbation distributions on coarse seed records
and then recursively extend them to finer grid records through local interpolation.

We first define the grid-based secret representation and the induced extension
tree in \textbf{\S\ref{subsec:grid-tree}}. We then present the coordinate-wise
interpolation principle and dimension-wise composition argument in
\textbf{\S\ref{subsec:coordinate}}, which provide the key tool for
preserving mDP constraints under extension. Finally, in
\textbf{\S\ref{subsec:algorithm}}, we describe the recursive tree-based extension
algorithm and show how the framework requirements \textbf{\emph{(A1)}}--\textbf{\emph{(A3)}}
are enforced by construction. Together, these results yield a scalable mechanism
for fine-grained domains under an $\ell_p$ metric while preserving the desired
mDP guarantee.

\subsection{Grid-based Secret Representation}
\label{subsec:grid-tree}
{\rev We consider a target geographic area in $\mathbb{R}^2$, such as a city boundary. For any two locations $\mathbf{x}$ and $\mathbf{x}'$ in this area, their distance is measured using the $\ell_p$ metric 
\[
\textstyle 
d(\mathbf{x},\mathbf{x}')
=
\|\mathbf{x}-\mathbf{x}'\|_p
=
\left(
\sum_{t=1}^{2}|x_t-x_t'|^p
\right)^{1/p},
\qquad p\geq 1,
\]
where $\|\cdot\|_p$ denotes the $\ell_p$ norm. We discretize the target area using a multi-resolution grid and treat the grid vertices lying within the area (i.e., cell corners) as the secret records to be protected.} 

{\rev Here, each vertex represents a discretized location coordinate rather than membership in any of its neighboring cells; the cells serve only to organize the extension computation. When the original location is continuous, it can be mapped to a vertex through a predetermined quantization rule, such as nearest-vertex mapping. Vertices are convenient for our construction because adjacent cells share boundary vertices, which serve as common interpolation anchors and facilitate overlap consistency by construction. Although centroid-based grids can also be refined hierarchically, centroids are cell-specific and would require additional rules to coordinate interpolation and ensure consistency across neighboring cells.}
\begin{figure}[t]
\centering
\hspace{0.00in}
\begin{minipage}{0.48\textwidth}
 \subfigure{
\includegraphics[width=1.00\textwidth]{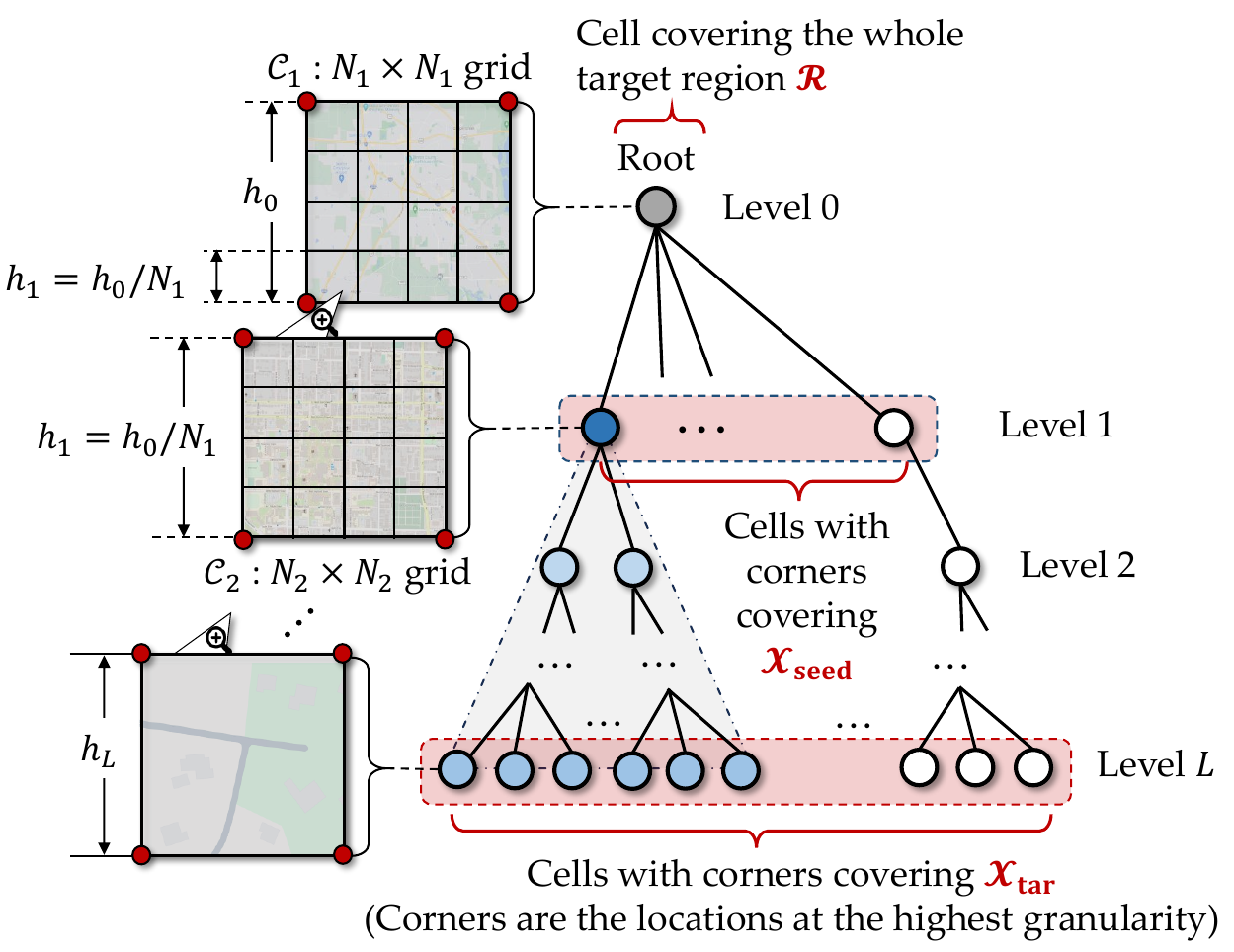}}
\end{minipage}
\caption{Tree representation of the grid-based secret records.}
\label{fig:tree}
\end{figure}

{\rev As illustrated in Fig.~\ref{fig:tree}, we fix a maximum extension level $L\in\mathbb{N}$ and a base cell side length $h_0>0$. For each level $r\in\{0,1,\ldots,L\}$, let $\mathcal{C}_r$ denote the collection of axis-aligned square grid cells with side length $h_r$ that intersect the target area. If a level-$r$ cell is refined into an $N_{r+1}\times N_{r+1}$ subgrid at level $r+1$, then $h_{r+1}=h_r/N_{r+1}$, and only the resulting subcells that intersect the target area are retained in $\mathcal{C}_{r+1}$.} 


We define the target secret domain $\mathcal{X}_{\mathrm{tar}}$ as the set of corner vertices of the retained level-$L$ cells, i.e., the \emph{finest-resolution discrete secret locations} in the target region. Let $r_{\mathrm{src}}<L$ denote the source level. The retained cells at this level provide the initial coarse representation and are chosen so that their associated record sets cover the seed records:
$\mathcal{X}_{\mathrm{seed}}\subseteq \bigcup_{v\in\mathcal{C}_{r_{\mathrm{src}}}}\mathcal{X}_v$.
These level-$r_{\mathrm{src}}$ cells serve as the source nodes of the extension graph and are equipped with initial mechanisms that agree with the prescribed seed mechanism on $\mathcal{X}_{\mathrm{seed}}$.

We then construct the extension tree over the retained cells from level $r_{\mathrm{src}}$ to level $L$, denoted by
$\mathcal{G}=(\mathcal{V},\mathcal{E},\{\mathcal{X}_v\}_{v\in\mathcal{V}})$:
\newline \textbf{(1) Node set $\mathcal{V}$}: Each node $v$ is identified with a retained cell at some level $r\in\{r_{\mathrm{src}},\ldots,L\}$. The source nodes are $\mathcal{V}_0\triangleq \mathcal{C}_{r_{\mathrm{src}}}$, and the full node set is $\mathcal{V}\triangleq \bigcup_{r=r_{\mathrm{src}}}^{L}\mathcal{C}_r$.
\newline \textbf{(2) Secret set $\{\mathcal{X}_v\}_{v\in\mathcal{V}}$}: For each node $v\in\mathcal{V}$, let $\mathrm{corner}(v)$ denote the set of corner vertices of cell $v$. We define $\mathcal{X}_v\triangleq \mathcal{X}_{\mathrm{tar}}\cap \mathrm{corner}(v)$. Thus, $\mathcal{X}_v$ contains the finest-resolution secret records located at the corners of $v$; a standard square cell contains at most four such records, while boundary cells may contain fewer.
\newline \textbf{(3) Edge set $\mathcal{E}$}: For any non-leaf node $v\in\mathcal{C}_r$ with $r<L$, its children are the retained subcells of $v$ in $\mathcal{C}_{r+1}$. We add an edge $(v,v')\in\mathcal{E}$ whenever $v'$ is a retained child of $v$. Since each retained cell at level $r+1$ is refined from exactly one retained cell at level $r$, every non-source node has a unique predecessor. Hence, the resulting extension graph is a rooted tree when there is a single source cell, and a forest when multiple source cells are used.

\begin{figure}
\begin{minipage}{0.48\textwidth}
  \subfigure{
\includegraphics[width=1.00\textwidth]{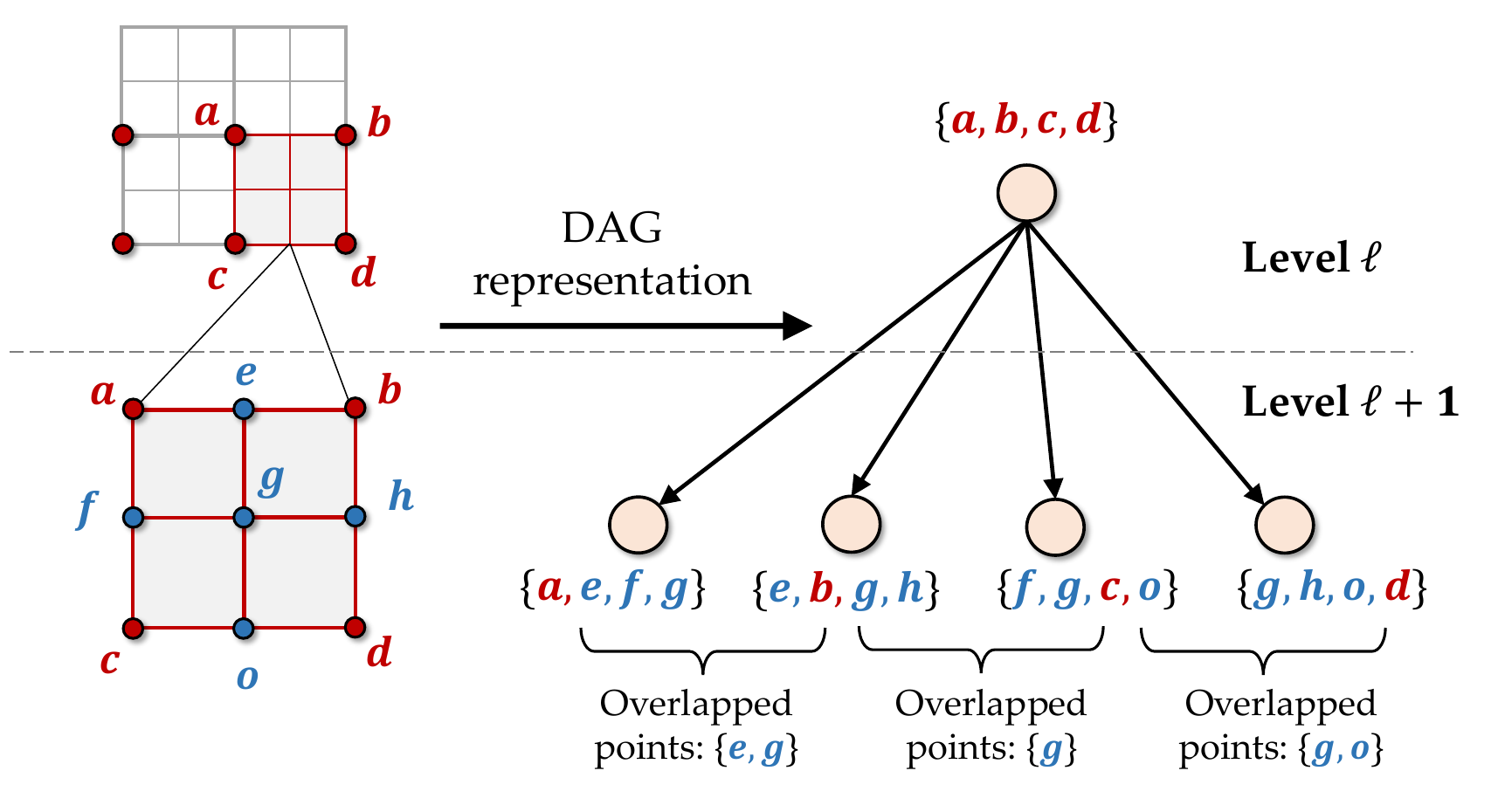}}
\end{minipage}
\caption{Example of a grid-cell hierarchy inducing a tree.}
\label{fig:grid-tree-example}
\end{figure}

Fig.~\ref{fig:grid-tree-example} illustrates the tree construction on a simple two-level grid. At level $t$, a coarse cell has four corner vertices $\{{\rd a},{\rd b},{\rd c},{\rd d}\}\subseteq \mathcal{X}_{\mathrm{tar}}$, which form the record set of the corresponding parent node. Extending this cell to level $t+1$ (here, a $2\times 2$ subgrid) introduces additional vertices (e.g., ${\bl e},{\bl f},{\bl g},{\bl h},{\bl o}$) and yields four child cells. In the induced tree (a special case of the extension graph), each child node corresponds to one subcell and is associated with the corner-vertex set of that subcell, e.g., $\{{\rd a},{\bl e},{\bl f},{\bl g}\}$, $\{{\bl e},{\rd b},{\bl g},{\bl h}\}$, $\{{\bl f},{\bl g},{\rd c},{\bl o}\}$, $\{{\bl g},{\bl h},{\bl o},{\rd d}\}$. Thus, the extension process from level $t$ to level $t+1$ is represented by a single parent node with multiple successor nodes.


\smallskip
\noindent\textbf{Discussion: Selection of seed records
$\mathcal{X}_{\mathrm{seed}}$.}
{\rev We select the corners of coarse-grid cells as seed records because they naturally define local extension blocks, provide shared anchors that maintain consistency across overlapping blocks, and support coordinate-aligned, dimension-wise log-convex interpolation. These properties enable the extension steps to satisfy requirements (A1)--(A3).

The grid resolution controls the seed density and thus the utility--efficiency trade-off. A denser seed set reduces the extent of interpolation and can better preserve the optimized mechanism across the target domain, but increases the cost of seed-level optimization. Conversely, a sparser seed set improves scalability but requires extension over larger regions, which may increase utility loss. In practice, candidate coarse-grid resolutions can be evaluated to select the sparsest seed set that meets the desired utility requirement or, equivalently, the densest set permitted by the computational budget. We empirically examine this trade-off across different seed-grid resolutions in Section~\ref{sec:experiments}.}




\subsection{Interpolation-Based Preservation via Dimension-Wise Composition}
\label{subsec:coordinate}

A central challenge in the tree-based extension algorithm is that direct
multi-dimensional interpolation does not, in general, preserve the
$(\epsilon,d)$-mDP relations required for recursive extension~\cite{Qiu-USec2026}.
For this reason, we construct the extension through multiple one-dimensional
interpolation steps, each of which can be controlled by a coordinate-wise
Lipschitz bound in log-probability space.
\textbf{Proposition~\ref{prop:1d-unified}} establishes the validity of a
single one-dimensional interpolation step, while
\textbf{Theorem~\ref{thm:composition}} shows how these coordinate-wise
guarantees compose into a global $(\epsilon,d_p)$-Lipschitz bound~\cite{Qiu-USec2026}.
Building on these two ingredients,
\textbf{Theorem~\ref{thm:successor-preservation-2d}} further shows that the
two-dimensional interpolation operator preserves cross-cell Lipschitz
relations after refinement, which is the key property needed to satisfy
successor-level mDP preservation \textbf{\emph{(A3)}}. Together, these results
provide the theoretical foundation for the tree-based extension procedure
described next.

To accommodate different interpolation schemes, we first define a generic one-dimensional interpolation rule in log-probability space. The rule is specified along a single coordinate line, matches the anchor log-probabilities, and satisfies a coordinate-wise slope cap. Proposition~\ref{prop:1d-validity} then shows that any such rule automatically preserves the desired $(\epsilon_t,d_1)$-Lipschitz bound for all interpolated records on that line.

\begin{definition}[One-dimensional interpolation]
\label{def:1d-interpolation-rule}
Fix a coordinate $t\in\{1,\dots,N\}$ and an output $\mathbf{y}\in\mathcal{Y}$.
Let $\mathbf{x}^{(1)},\dots,\mathbf{x}^{(m)}\in\hat{X}$ be anchor records that differ only in their $t$-th coordinate, with $
x^{(1)}_t < x^{(2)}_t < \cdots < x^{(m)}_t$.
Define the interval $\mathcal{I}_t := [x^{(1)}_t,\,x^{(m)}_t]$. A one-dimensional interpolation rule along coordinate $t$ is an absolutely continuous function $f_{t,\mathbf{y}}: \mathcal{I}_t \to \mathbb{R}$ 
such that:
\begin{enumerate}[label=(\roman*)]
    \item \textbf{Anchor consistency:} $f_{t,\mathbf{y}}(x^{(r)}_t)=\ln \mathcal{M}(\mathbf{y}
    |\mathbf{x}^{(r)})$,
    $r=1,\dots,m$;
    \item {\rev \textbf{Slope cap:} $\left|\frac{\mathrm{d}f_{t,\mathbf{y}}(s)}{\mathrm{d}s}\right| \le \epsilon_t$,} 
    $\text{for a.e. } s\in (x^{(1)}_t,\,x^{(m)}_t)$.
\end{enumerate}
For any record $\mathbf{x}\in\mathcal{X}$ whose non-$t$ coordinates agree with those of the anchors and whose $t$-th coordinate lies in $\mathcal{I}_t$, we define the interpolated log-probability by $\ln f(\mathbf{y} \mid \mathbf{x}) := f_{t,\mathbf{y}}(x_t)$.
\end{definition}

\begin{proposition}[One-Dimensional Interpolation Validity]
\label{prop:1d-validity}
Let $f_{t,\mathbf{y}}$ be a one-dimensional interpolation rule as in Definition~\ref{def:1d-interpolation-rule}. Then for any two records $\mathbf{x},\mathbf{x}'\in\mathcal{X}$ that differ from the anchors only in the $t$-th coordinate, with
{\rev $x_t,\,x'_t\in \mathcal{I}_t$}, the interpolated log-probabilities satisfy
\begin{equation}
\bigl|\ln f(\mathbf{y} \mid \mathbf{x})-\ln f(\mathbf{y} \mid \mathbf{x}')\bigr|
\le
{\rev \epsilon_t |x_t-x'_t|}.
\end{equation}
Hence, the interpolated values satisfy the $(\epsilon_t,d_1)$-Lipschitz bound along coordinate $t$, regardless of whether $\mathbf{x}$ and $\mathbf{x}'$ lie in the same sub-interval
$
[x^{(r)}_t,\,x^{(r+1)}_t]
$
or in different sub-intervals. The detailed proof is given in {\bf Appendix~\ref{subsec:proof:prop:1d-unified}}.
\end{proposition}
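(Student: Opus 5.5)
Since both records differ from the anchors only in the $t$-th coordinate, their interpolated log-probabilities are the values of the single real function $f_{t,\mathbf{y}}$ at the two points $x_t,x'_t\in\mathcal{I}_t$. The proof therefore reduces to a one-variable statement: an absolutely continuous function on an interval whose derivative is bounded by $\epsilon_t$ almost everywhere is $\epsilon_t$-Lipschitz on that interval. The approach is the fundamental theorem of calculus for absolutely continuous functions. This avoids any case analysis over sub-intervals $[x^{(r)}_t,x^{(r+1)}_t]$, which is exactly why the statement holds ``regardless'' of where $\mathbf{x},\mathbf{x}'$ fall.

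The steps are as follows. First, by the definition $\ln f(\mathbf{y}\mid\mathbf{x}) := f_{t,\mathbf{y}}(x_t)$, rewrite the left-hand side as $|f_{t,\mathbf{y}}(x_t)-f_{t,\mathbf{y}}(x'_t)|$, and assume without loss of generality that $x_t\le x'_t$. Second, because $f_{t,\mathbf{y}}$ is absolutely continuous on $\mathcal{I}_t$, it is differentiable almost everywhere, its derivative is integrable, and
\[
f_{t,\mathbf{y}}(x'_t)-f_{t,\mathbf{y}}(x_t)=\int_{x_t}^{x'_t}\frac{\mathrm{d}f_{t,\mathbf{y}}(s)}{\mathrm{d}s}\,\mathrm{d}s .
\]
Third, take absolute values and apply the slope cap (ii), which holds almost everywhere and is therefore enough inside the Lebesgue integral:
\[
\bigl|f_{t,\mathbf{y}}(x'_t)-f_{t,\mathbf{y}}(x_t)\bigr|\le\int_{x_t}^{x'_t}\epsilon_t\,\mathrm{d}s=\epsilon_t|x_t-x'_t| .
\]
Since $\mathbf{x}$ and $\mathbf{x}'$ differ only in coordinate $t$, we have $|x_t-x'_t|=d_1(\mathbf{x},\mathbf{x}')$, which gives the $(\epsilon_t,d_1)$-Lipschitz bound along coordinate $t$.

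Finally, I would note two points. Anchor consistency (i) is not needed for the inequality itself. Its role is to make the interpolated values coincide with the existing mechanism at the anchors, so the bound also holds between an interpolated record and an anchor, or between two anchors; this case is already covered because the anchors lie in $\mathcal{I}_t$. The only point needing care is the regularity hypothesis. A function that is merely differentiable almost everywhere (for example, a Cantor-type function) can violate the conclusion, so absolute continuity is exactly what permits recovering $f$ from its derivative. For piecewise-linear or piecewise-smooth interpolants such as the log-convex scheme, this is immediate.
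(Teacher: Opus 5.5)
Your proposal is correct and follows the paper's own argument: the appendix applies the (ii)$\Rightarrow$(i) direction of its auxiliary lemma on the whole interval $[x^{(1)}_t,x^{(m)}_t]$, which is exactly the fundamental theorem of calculus for absolutely continuous functions combined with the almost-everywhere slope cap, and then evaluates the resulting global Lipschitz bound at $x_t$ and $x'_t$. As you note, anchor consistency plays no role in the inequality, and the paper likewise relies on absolute continuity specifically to rule out Cantor-type interpolants.
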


\begin{theorem}
[Dimension-Wise Composition for Lipschitz Bound Condition~\cite{Qiu-USec2026}]
\label{thm:composition}
Let $f:\mathcal X\to\mathbb R$ be a mechanism that interpolates values in an $N$-dimensional space. Suppose that for each $t \in \{1, \dots, N\}$, $f$ satisfies $(\epsilon_t, d_1)$-Lipschitz bound when the input records differ only in the $t$th coordinate. If the parameters $\epsilon_1, \dots, \epsilon_N$ satisfy the following budget composition condition:
\begin{eqnarray}
\label{eq:budgetcompo1}
&& \textstyle \sum_{t=1}^{N} \epsilon_t^{\frac{p}{p-1}} \leq \epsilon^{\frac{p}{p-1}}, \quad \text{for } p > 1, 
\\  \label{eq:budgetcompo2}
\text{and} && 
\max_{t} \epsilon_t \leq \epsilon, \quad \text{for } p = 1,
\end{eqnarray}
then $f$ is $(\epsilon, d_p)$-Lipschitz continuous.
\end{theorem}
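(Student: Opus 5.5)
The plan is to connect any two records $\mathbf{x},\mathbf{x}'\in\mathcal{X}$ by a staircase path that changes one coordinate at a time, bound the change of $f$ along each leg using the coordinate-wise hypothesis, and then combine the $N$ leg bounds with Hölder's inequality. Concretely, define intermediate points $\mathbf{z}^{(0)}=\mathbf{x}$ and, for $t=1,\dots,N$, let $\mathbf{z}^{(t)}$ be obtained from $\mathbf{z}^{(t-1)}$ by replacing its $t$-th coordinate with $x'_t$, so that $\mathbf{z}^{(N)}=\mathbf{x}'$. Consecutive points $\mathbf{z}^{(t-1)},\mathbf{z}^{(t)}$ differ only in coordinate $t$, and the difference there is $|x_t-x'_t|$. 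We assume, as the grid setting guarantees (vertices of a product grid, or interpolation defined on the full coordinate box), that these intermediate points lie in the domain of $f$. By the triangle inequality and the hypothesis,
\[
|f(\mathbf{x})-f(\mathbf{x}')|\le \sum_{t=1}^{N}|f(\mathbf{z}^{(t-1)})-f(\mathbf{z}^{(t)})|\le \sum_{t=1}^{N}\epsilon_t|x_t-x'_t|.
\]

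Next, bound the weighted sum by $\epsilon\|\mathbf{x}-\mathbf{x}'\|_p$. For $p>1$, let $q=p/(p-1)$ be the conjugate exponent. Hölder's inequality gives
\[
\textstyle\sum_t\epsilon_t|x_t-x'_t|\le \bigl(\sum_t\epsilon_t^{q}\bigr)^{1/q}\bigl(\sum_t|x_t-x'_t|^p\bigr)^{1/p}.
\]
The budget condition~\eqref{eq:budgetcompo1} states exactly that $(\sum_t\epsilon_t^q)^{1/q}\le\epsilon$, which yields $|f(\mathbf{x})-f(\mathbf{x}')|\le\epsilon\, d_p(\mathbf{x},\mathbf{x}')$. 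For $p=1$, use $\sum_t\epsilon_t|x_t-x'_t|\le(\max_t\epsilon_t)\|\mathbf{x}-\mathbf{x}'\|_1$ together with~\eqref{eq:budgetcompo2}. The $p=\infty$ case, if it is needed, follows in the same way with $q=1$.

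The main obstacle is not the inequality chain but the validity of the staircase path. The hypothesis controls $f$ only for pairs that differ in a single coordinate and that both lie in the domain where $f$ is defined. The step to check carefully is therefore that every intermediate corner $\mathbf{z}^{(t)}$ is an admissible record on which the coordinate-wise bound applies. On a rectangular grid, or for interpolation over a full axis-aligned cell, this holds because the domain is a Cartesian product. For irregular boundary regions I would restrict the statement to product-structured domains, or apply the argument inside each cell's bounding box, where the interpolant is defined on the whole box.
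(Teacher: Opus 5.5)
Your argument is correct and matches the paper's approach. The paper does not reprove this theorem; it imports it from \cite{Qiu-USec2026}. However, its proofs of Theorems~\ref{thm:successor-preservation-2d} and~\ref{thm:subtree-preservation} use exactly your steps: a staircase through the intermediate point $(x_1,x_2')$, the triangle inequality, and then the budget condition to pass from $\sum_t\epsilon_t|x_t-x'_t|$ to $\epsilon\, d_p$ (the paper asserts this last step without naming it; it is precisely your H\"older step with conjugate exponent $p/(p-1)$). Your caveat about the intermediate points is warranted, since without it the statement fails on non-product domains; the paper handles it the same way, by treating missing intermediate records as virtual interpolation points on the full aligned grid.
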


\DEL{
\begin{proposition}[One-Dimensional Interpolation Validity]
\label{prop:1d-unified}
Fix a coordinate~$t$ and an output $y \in \mathcal{Y}$.
Let $\mathbf{x}^{(1)}, \ldots, \mathbf{x}^{(m)} \in \hat{X}$ be anchor records
that differ only in their $t$-th coordinate, with $x^{(1)}_t < x^{(2)}_t < \cdots < x^{(m)}_t$. Suppose that consecutive anchor pairs satisfy the
$(\epsilon_t, d_1)$-Lipschitz bound:
\begin{equation}
  \bigl|\ln \mathcal{M}(\mathbf{y} | \mathbf{x}^{(r)})
       - \ln \mathcal{M}(\mathbf{y} | \mathbf{x}^{(r+1)})\bigr|
  \leqslant \epsilon_t \bigl|x^{(r)}_t - x^{(r+1)}_t\bigr|,
  ~r = 1,\ldots,m-1.
\end{equation}
Let $\phi:[x^{(1)}_t,\, x^{(m)}_t]\to\mathbb{R}$ be any absolutely
continuous interpolation function satisfying:
\begin{enumerate}[label=(\roman*)]
  \item $\phi\bigl(x^{(r)}_t\bigr)
  = \ln \mathcal{M}(\mathbf{y} | \mathbf{x}^{(r)})$
  for all $r=1,\ldots,m$;
  \item {\rd $\left|\frac{\mathrm{d}f(t)}{\mathrm{d}t}\right| \leqslant \epsilon_t$
  for a.e.\ $t\in\bigl(x^{(1)}_t,\, x^{(m)}_t\bigr)$.}
\end{enumerate}
Then for \emph{any} two records $\mathbf{x}, \mathbf{x}' \in \mathcal{X}$
that differ from the anchors only in the $t$-th coordinate,
with $x_{a,t},\, x_{t}
\in [x^{(1)}_t,\, x^{(m)}_t]$, regardless of whether $x_{a,t}$ and $x_{t}$ fall in
the same sub-interval
$[x^{(r)}_t,\, x^{(r+1)}_t]$
or in different sub-intervals, 
the interpolated log-perturbation probabilities satisfy
\begin{equation}   
  \bigl|\ln \hat{z}(y \mid \mathbf{x})
       - \ln \hat{z}(y \mid \mathbf{x}')\bigr|
  \leqslant \epsilon_t\,|x_{a,t} - x_{t}|,
\end{equation}
i.e., the $(\epsilon_t, d_1)$-Lipschitz bound holds between
$\mathbf{x}$ and $\mathbf{x}'$. The detailed proof can be found in Appendix \ref{subsec:proof:prop:1d-unified}. 
\end{proposition}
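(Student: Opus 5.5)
The statement is the (deleted) Proposition~\ref{prop:1d-unified}. It is essentially the same as Proposition~\ref{prop:1d-validity}, with the added hypothesis that consecutive anchors already satisfy the $(\epsilon_t,d_1)$-Lipschitz bound. That hypothesis is only what makes a slope-capped interpolant $\phi$ with properties (i)--(ii) exist: piecewise-linear interpolation between consecutive anchors is one such choice. Given such a $\phi$, my plan is to ignore how it was built and argue directly from absolute continuity and the slope cap. The key observation is that the quantity to be bounded is a difference of values of one function $\phi$ on a single interval $\mathcal{I}_t=[x^{(1)}_t,x^{(m)}_t]$. For records $\mathbf{x},\mathbf{x}'$ that differ from the anchors only in coordinate $t$, the interpolated log-probabilities are, by definition, $\ln\hat z(y\mid\mathbf{x})=\phi(x_t)$ and $\ln\hat z(y\mid\mathbf{x}')=\phi(x'_t)$.

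The first step reduces everything to a one-variable statement. Assume without loss of generality that $x'_t\le x_t$, with both points in $\mathcal{I}_t$. Since $\phi$ is absolutely continuous on $\mathcal{I}_t$, it is differentiable almost everywhere. Its derivative is integrable, and the fundamental theorem of calculus for Lebesgue integrals gives
\[
\phi(x_t)-\phi(x'_t)=\int_{x'_t}^{x_t}\phi'(s)\,\mathrm{d}s .
\]
In the second step I take absolute values and use the a.e.\ slope cap $|\phi'(s)|\le\epsilon_t$. This gives $|\phi(x_t)-\phi(x'_t)|\le\int_{x'_t}^{x_t}|\phi'(s)|\,\mathrm{d}s\le\epsilon_t|x_t-x'_t|$. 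The cap holds only almost everywhere, but that is harmless because a null set does not change the integral. The third step notes that $|x_t-x'_t|=d_1(\mathbf{x},\mathbf{x}')$, since the other coordinates coincide. This yields the $(\epsilon_t,d_1)$-Lipschitz bound.

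The only point needing care is the phrase ``regardless of whether they fall in the same sub-interval''. A naive proof would apply the bound piece by piece: first from $x'_t$ to the next anchor, then anchor to anchor using the consecutive-anchor hypothesis, and finally to $x_t$. It would then sum these pieces, relying on collinearity so that the distances add exactly along the line. The integral argument avoids this casework entirely, because it never refers to the anchor partition. For that reason I expect no real obstacle. The only subtlety is to insist on absolute continuity rather than mere a.e.\ differentiability, since a Cantor-type function would break the fundamental theorem of calculus step. I would also remark that condition (i) plays no role in the inequality; it matters only for consistency with the seed and anchor values.
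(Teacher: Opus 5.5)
Your proposal is correct and matches the paper's proof. The paper applies the (ii)$\Rightarrow$(i) direction of its single-interval lemma (the fundamental theorem of calculus for absolutely continuous functions plus the a.e.\ slope cap) once to the whole interval $[x^{(1)}_t, x^{(m)}_t]$, getting a global $\epsilon_t$-Lipschitz bound without sub-interval casework, just as you do; your observation that the consecutive-anchor hypothesis and condition (i) play no role in the inequality agrees with the paper's remark that anchors only restrict the feasible set of interpolants.
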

}

\begin{definition}[Two-Dimensional Interpolation]
\label{def:2d-interpolation}
Fix a non-leaf parent cell $v\subseteq \mathcal{X}$, with corner-anchor set $
\mathcal{X}_v=\{\mathbf{x}^{(\gamma)}:\gamma\in\{0,1\}^2\}$, 
$\mathbf{x}^{(\gamma)}=\mathbf{x}^{(0)}+\gamma\odot\Delta$, 
where $\Delta=(\Delta_1,\Delta_2)$. Assume that:
\begin{itemize}
    \item for each coordinate $t\in\{1,2\}$, the restriction of the interpolation to any axis-aligned segment in $v$ parallel to coordinate $t$ is $(\epsilon_t,d_1)$-valid in the sense of \textbf{Proposition~\ref{prop:1d-validity}};
    \item the coordinate-wise privacy budgets $\epsilon_1,\epsilon_2$ satisfy the composition conditions in Eq.~\eqref{eq:budgetcompo1} and Eq.~\eqref{eq:budgetcompo2}.
\end{itemize}

The local extension operator at $v$ is defined as
\[
\textstyle \mathrm{Ext}_v:\mathrm{Mech}(\mathcal{X}_v)\to \prod_{u\in \mathrm{Succ}(v)} \mathrm{Mech}(\mathcal{X}_u),
\]
where, for any parent mechanism $\mathcal{M}_v\in \mathrm{Mech}(\mathcal{X}_v)$, for all $\mathbf{x}\in \mathcal{X}_u$, $u\in \mathrm{Succ}(v)$, and $\mathbf{y}\in\mathcal{Y}$, the child mechanism $\mathcal{M}_{v\to u}$ is defined by the normalized 2-dimensional interpolation
\begin{equation}
\label{eq:tree-ext-child-2d}
\mathcal{M}_{v\to u}(\mathbf{y}\mid \mathbf{x})
=
{\rev
\frac{f_v(\mathbf{y}\mid \mathbf{x})}
{\sum_{\mathbf{y}'\in\mathcal{Y}}f_v(\mathbf{y}'\mid \mathbf{x})}
},
\end{equation}
where $\ln f_v(\mathbf{y}\mid \mathbf{x})
=
\sum_{\gamma\in\{0,1\}^2}
w_\gamma(\mathbf{x})\,
\ln \mathcal{M}_v\!\left(\mathbf{y}\mid \mathbf{x}^{(\gamma)}\right)$, with $w_\gamma(\mathbf{x})
=
\prod_{t=1}^2
\Bigl(
(1-\gamma_t)\lambda_t(\mathbf{x})
+
\gamma_t\bigl(1-\lambda_t(\mathbf{x})\bigr)
\Bigr)$ and $\lambda_t(\mathbf{x})
=
\frac{x^{(0)}_t+\Delta_t-x_t}{\Delta_t}$,
$(t\in\{1,2\})$. Thus, $\mathrm{Ext}_v$ maps the perturbation mechanism on the parent
corner-vertex set $\mathcal{X}_v$ to a collection of child mechanisms on
$\{\mathcal{X}_u\}_{u\in\mathrm{Succ}(v)}$ by treating $\mathcal{X}_v$ as the anchor set and
applying 2-dimensional interpolation inside each child cell.
\end{definition}

\begin{theorem}[Successor-level preservation under two-dimensional interpolation]
\label{thm:successor-preservation-2d}
Let $u$ and $w$ be two cells at the same extension level, and suppose
their mechanisms satisfy the coordinate-wise cross-cell Lipschitz bound:
for any $\mathbf{x}\in\mathcal{X}_u$, $\mathbf{x}'\in\mathcal{X}_w$, and $\mathbf{y}\in\mathcal{Y}$,
\[
\textstyle  \left| \ln \mathcal{M}_u(\mathbf{y}|\mathbf{x})-\ln \mathcal{M}_w(\mathbf{y}|\mathbf{x}')\right| \le \sum_{t=1}^{2}\epsilon_t |x_t-x'_t|.
\]
Let $u'\in\mathrm{Succ}(u)$ and $w'\in\mathrm{Succ}(w)$, and suppose that
the mechanisms on $u'$ and $w'$ are generated from $\mathcal{M}_u$ and $\mathcal{M}_w$,
respectively, using the two-dimensional interpolation operator in
Definition~\ref{def:2d-interpolation}. Then, for any
$\mathbf{x}\in\mathcal{X}_{u'}$, $\mathbf{x}'\in\mathcal{X}_{w'}$, and $\mathbf{y}\in\mathcal{Y}$,
the pre-normalization interpolants satisfy
\[
\textstyle \left|
\ln f_u(\mathbf{y}|\mathbf{x})-\ln f_w(\mathbf{y}|\mathbf{x}')
\right|
\le
\sum_{t=1}^{2}\epsilon_t |x_t-x'_t|.
\]
Consequently, if $\epsilon_1$ and $\epsilon_2$ satisfy the
dimension-wise composition condition in Theorem~\ref{thm:composition},
then
\[
\left|
\ln f_u(\mathbf{y}|\mathbf{x})-\ln f_w(\mathbf{y}|\mathbf{x}')
\right|
\le
\bar{\epsilon} d_p(\mathbf{x},\mathbf{x}'),
\]
where $\bar{\epsilon}$ is the composed pre-normalization budget. In
particular, with the calibrated choice $\bar{\epsilon}=\epsilon/2$, the
normalized successor mechanisms satisfy the target $(\epsilon,d)$-mDP bound. The detailed proof can be found in \textbf{Appendix~\ref{subsec:app:proof:thm:successor-preservation-2d}}. 
\end{theorem}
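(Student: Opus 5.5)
The plan is to write both pre-normalization interpolants as expectations over corner anchors and then couple the two weight distributions so that the coupling never inflates coordinate distances. By Definition~\ref{def:2d-interpolation},
\[
\textstyle \ln f_u(\mathbf{y}\mid\mathbf{x})=\sum_{\gamma}w_\gamma(\mathbf{x})\ln\mathcal{M}_u(\mathbf{y}\mid\mathbf{x}^{(\gamma)}).
\]
The weights are nonnegative, sum to one, and factor as a product over $t\in\{1,2\}$ of two-point weights $(\lambda_t,1-\lambda_t)$. So $w(\mathbf{x})$ is the law of a random corner $X=(X_1,X_2)$ of $u$ with independent coordinates, and $\mathbb{E}[X_t]=x_t$. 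Similarly $w(\mathbf{x}')$ is the law of a random corner $X'$ of $w$ with $\mathbb{E}[X'_t]=x'_t$.

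For any joint law $\pi$ of $(X,X')$ with these marginals, the difference of interpolants equals $\mathbb{E}_\pi\bigl[\ln\mathcal{M}_u(\mathbf{y}\mid X)-\ln\mathcal{M}_w(\mathbf{y}\mid X')\bigr]$. Applying the hypothesized cross-cell bound pointwise to each pair of corners gives
\[
\textstyle \bigl|\ln f_u(\mathbf{y}\mid\mathbf{x})-\ln f_w(\mathbf{y}\mid\mathbf{x}')\bigr|\le\sum_{t=1}^2\epsilon_t\,\mathbb{E}_\pi|X_t-X'_t|.
\]
The goal is therefore a coupling with $\mathbb{E}_\pi|X_t-X'_t|=|x_t-x'_t|$ for each $t$. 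A naive independent coupling would only give the weaker bound with $\mathbb{E}|X_t-X'_t|\ge|x_t-x'_t|$, which is why the coupling must be chosen carefully.

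I would build $\pi$ as a product over coordinates of one-dimensional comonotone (quantile) couplings of the two-point laws of $X_t$ and $X'_t$. The product of couplings has product marginals, so it reproduces exactly $w_\gamma(\mathbf{x})$ and $w_{\gamma'}(\mathbf{x}')$. Since $\mathbb{E}|Z|=|\mathbb{E}Z|$ whenever $Z$ has constant sign, it suffices to show that $X_t-X'_t$ has constant sign almost surely under this coupling.

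This sign argument is the step I expect to be the main obstacle, and it rests on the grid geometry. The cells $u$ and $w$ are at the same level, so they have the same side length $\Delta_t$ and lie on a common lattice. Hence in coordinate $t$ their projected intervals either coincide or overlap at most at an endpoint.
\begin{itemize}
\item If the intervals coincide, the two laws share the support $\{a_t,a_t+\Delta_t\}$ and are ordered by first-order stochastic dominance according to their means. The comonotone coupling then gives $X_t-X'_t$ the sign of $x_t-x'_t$ almost surely.
\item If the intervals meet at most at an endpoint, then $X_t\le X'_t$ (or the reverse) holds surely, whatever the coupling.
\end{itemize}
The case $u=w$ falls under the first item. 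If the children $u',w'$ are not required to be cells of the parents' subgrid, the lattice structure still supplies the needed support ordering. This establishes the first displayed inequality of the theorem.

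For the $\ell_p$ consequence, I would apply Hölder's inequality, as in Theorem~\ref{thm:composition}, with exponents $q=p/(p-1)$ and $p$:
\[
\textstyle \sum_t\epsilon_t|x_t-x'_t|\le\bigl(\sum_t\epsilon_t^{q}\bigr)^{1/q}\|\mathbf{x}-\mathbf{x}'\|_p\le\bar{\epsilon}\,d_p(\mathbf{x},\mathbf{x}'),
\]
using the budget condition \eqref{eq:budgetcompo1}; for $p=1$, condition \eqref{eq:budgetcompo2} gives the same conclusion. For normalization, write $Z(\mathbf{x})=\sum_{\mathbf{y}'}f_u(\mathbf{y}'\mid\mathbf{x})$. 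Summing the pointwise ratio bound over $\mathbf{y}'$ gives $|\ln Z(\mathbf{x})-\ln Z'(\mathbf{x}')|\le\bar{\epsilon}\,d_p(\mathbf{x},\mathbf{x}')$. Since $\ln\mathcal{M}=\ln f-\ln Z$, the normalized mechanisms incur at most $2\bar{\epsilon}\,d_p$, which equals $\epsilon\,d_p$ under the calibration $\bar{\epsilon}=\epsilon/2$.
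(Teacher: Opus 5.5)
Your proof is correct and takes a genuinely different route from the paper's.

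\textbf{The paper's route.} It chains through the intermediate point $\mathbf{m}=(x_1,x_2')$, placed in an auxiliary and possibly virtual cell $C_{i_0j_1}$. It applies twice a lemma for pairs that share one coordinate. That lemma restricts the bilinear interpolant to an axis line with a common coefficient $\lambda$, then invokes the cross-cell bound on the segment endpoints.

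\textbf{Your route.} You read the weights $w_\gamma$ as product laws on the corners with barycenters $\mathbf{x}$ and $\mathbf{x}'$. You couple them coordinatewise by comonotone couplings. Lattice alignment of same-level cells then shows that $X_t-X'_t$ never changes sign, so $\mathbb{E}_\pi|X_t-X'_t|=|x_t-x'_t|$. In effect this is a Kantorovich-type bound in which the transport cost equals the distance between barycenters.

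\textbf{What your route buys.} You use only the stated hypothesis, namely corners of $u$ against corners of $w$. The paper's chain needs more:
\begin{itemize}
\item cross-cell bounds between $u$ and $C_{i_0j_1}$ and between $C_{i_0j_1}$ and $w$, which are not among the hypotheses and are not clearly defined when that cell is virtual;
\item the bound on the segment endpoints, which are generally interpolated edge points rather than corners; its lemma assumes this rather than derives it;
\item $u'$ and $w'$ in the same relative position within their parents, which it tacitly assumes.
\end{itemize}
Your argument avoids all of this, treats every pair uniformly, and extends verbatim to $N$ coordinates.

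\textbf{What the paper's route buys.} Its decomposition aligns more closely with the dimension-wise story of Proposition~\ref{prop:1d-validity} and Theorem~\ref{thm:composition}. However, both arguments rely on the interpolant being linear along axis lines. So neither covers nonlinear one-dimensional rules, such as the McShane--Whitney midpoint, without further work.

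\textbf{One small point.} The only thing you need from the children is $\mathbf{x}\in u$ and $\mathbf{x}'\in w$, so that the weights form a probability law with the right mean. Your remark about children outside the parents' subgrid is unnecessary. If a child did leave its parent, the weights would become negative and the coupling argument would break.
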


\DEL{
\begin{theorem}[Correctness of Two-Dimensional Interpolation]
\label{thm:N-interpolation-correct}
Let $u$ and $w$ be two cells. Assume that for any anchor pair
$\mathbf{x}\in\mathcal{X}_u$, $\mathbf{x}'\in\mathcal{X}_w$, and any
output $\mathbf{y}\in\mathcal{Y}$,
\begin{equation}
\label{eq:uw-dw-bound}
\bigl|
\ln \mathcal{M}_u(\mathbf{y}\mid \mathbf{z})
-
\ln \mathcal{M}_w(\mathbf{y}\mid \mathbf{z}')
\bigr|
\le
\sum_{t=1}^N \epsilon_t |x_{t}-x'_{t}|.
\end{equation}
Let $u'\in\mathrm{Succ}(u)$ and $w'\in\mathrm{Succ}(w)$, and let
their mechanisms be generated by the local extension operators
$\mathrm{Ext}_u$ and $\mathrm{Ext}_w$, respectively. Then, for any
$\mathbf{x}\in\mathcal{X}_{u'}$, $\mathbf{x}'\in\mathcal{X}_{w'}$, and
any $\mathbf{y}\in\mathcal{Y}$,
\begin{equation}
\label{eq:uw-dw-child}
\bigl|
\ln f_u(\mathbf{y}\mid \mathbf{x})
-
\ln f_w(\mathbf{y}\mid \mathbf{x}')
\bigr|
\le
\sum_{t=1}^N \epsilon_t |x_t-x_t'|.
\end{equation}
Consequently, if the budgets satisfy the composition condition in
Theorem~2, then
\begin{equation}
\label{eq:uw-global-child}
\bigl|
\ln f_u(\mathbf{y}\mid \mathbf{x})
-
\ln f_w(\mathbf{y}\mid \mathbf{x}')
\bigr|
\le
\epsilon\, d_p(\mathbf{x},\mathbf{x}').
\end{equation}
\end{theorem}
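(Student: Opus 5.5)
The plan is to exploit the fact that the two-dimensional log-interpolant in Definition~\ref{def:2d-interpolation} is a convex combination of anchor log-probabilities with product weights. Write $\mathbf{x}^{(\gamma)}$ and $\mathbf{x}'^{(\delta)}$ for the corner anchors of $u$ and $w$. For $\mathbf{x}\in\mathcal{X}_{u'}\subseteq u$ and $\mathbf{x}'\in\mathcal{X}_{w'}\subseteq w$, the weights $w_\gamma(\mathbf{x})$ and $w'_\delta(\mathbf{x}')$ are nonnegative and each family sums to one. Hence
\[
\ln f_u(\mathbf{y}\mid\mathbf{x})-\ln f_w(\mathbf{y}\mid\mathbf{x}')
=\sum_{\gamma,\delta} w_\gamma(\mathbf{x})\,w'_\delta(\mathbf{x}')\Bigl(\ln\mathcal{M}_u(\mathbf{y}\mid\mathbf{x}^{(\gamma)})-\ln\mathcal{M}_w(\mathbf{y}\mid\mathbf{x}'^{(\delta)})\Bigr).
\]
Applying the hypothesis to each anchor pair bounds the absolute value by $\sum_t \epsilon_t \sum_{\gamma,\delta} w_\gamma w'_\delta\,|x^{(\gamma)}_t-x'^{(\delta)}_t|$.

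This bound has absolute values on the inside, so Jensen's inequality goes the wrong way. We cannot conclude $|x_t-x'_t|$ directly, and this is the main obstacle. I would handle it by a case split per coordinate $t$, using the product structure of the weights: the $t$-th coordinate of an anchor depends only on $\gamma_t$, and the marginal weights are $\lambda_t$ and $1-\lambda_t$. So the $t$-th term reduces to a one-dimensional quantity $\sum_{a,b\in\{0,1\}}\mu_a\nu_b\,|s_a-s'_b|$, where $s_0<s_1$ are the $t$-coordinates of $u$, $s'_0<s'_1$ are those of $w$, and the means $\sum_a\mu_a s_a=x_t$ and $\sum_b\nu_b s'_b=x'_t$.

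If the intervals $[s_0,s_1]$ and $[s'_0,s'_1]$ have disjoint interiors, which is the case for distinct cells at the same level in a coordinate where they are separated, then all differences $s_a-s'_b$ share a sign. The sum then equals $|x_t-x'_t|$ exactly. If the intervals coincide (same column or row of cells, or $u=w$), the cross-cell bound is not usable termwise. In that case I would instead not split the coordinate at all: there the two interpolants use the same anchors along $t$, so I fall back on the one-dimensional slope-cap argument of Proposition~\ref{prop:1d-validity}. The hypothesis applied with equal cells makes the linear interpolant in $t$ have slope at most $\epsilon_t$. To make this rigorous I would not use the full convex-combination bound; I would instead move from $\mathbf{x}$ to $\mathbf{x}'$ one coordinate at a time. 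First change coordinate $1$ through an intermediate point, treating it by either the disjoint-interval identity or the shared-interval slope cap. Then change coordinate $2$ the same way. Bilinearity ensures that each one-dimensional restriction is linear between anchor values whose differences are controlled by the hypothesis. I expect making this path argument clean when $u\neq w$ share exactly one coordinate interval to be the delicate bookkeeping step.

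Summing the per-coordinate bounds gives the first claim. Theorem~\ref{thm:composition}, via Hölder's inequality with $\sum_t\epsilon_t|x_t-x'_t|\le(\sum\epsilon_t^{q})^{1/q}\|\mathbf{x}-\mathbf{x}'\|_p$, then yields the $\bar\epsilon\, d_p$ bound. For the normalized mechanisms, the ratio of normalizers satisfies $\sum_{\mathbf{y}'}f_u(\mathbf{y}'\mid\mathbf{x})\le e^{\bar\epsilon d_p}\sum_{\mathbf{y}'}f_w(\mathbf{y}'\mid\mathbf{x}')$ by the same per-$\mathbf{y}'$ bound. The log-ratio of normalized probabilities is therefore at most $2\bar\epsilon d_p=\epsilon d_p$ when $\bar\epsilon=\epsilon/2$.
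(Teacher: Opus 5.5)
Your argument is correct in outline, but it organizes the proof differently from the paper's proof of the live version, Theorem~\ref{thm:successor-preservation-2d}.

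\textbf{The paper's route.} The paper never uses the product-weight convex combination. It always passes through the third cell $C_{i_0j_1}$, which shares $u$'s first-coordinate interval and $w$'s second-coordinate interval, and through the point $\mathbf{m}=(x_1,x_2')$. It then applies Lemma~\ref{lem:cross-cell-one-coordinate} twice. That lemma pairs the two points with the same coefficient $\lambda$ in the shared coordinate. It simply \emph{assumes} the cross-cell bound for the edge endpoints in the other coordinate.

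\textbf{Mixed case.} When exactly one interval is shared, your path argument is essentially that lemma. The paper's third cell is then $u$ or $w$ itself, so the two routes coincide there. The difference is that you derive the endpoint bound from the corner hypothesis, via your sign-coherence identity or the slope cap, rather than assuming it.

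\textbf{Doubly separated case.} This is where the routes genuinely differ. You close this case directly from the $u$--$w$ anchor pairs. The paper instead needs cross-cell bounds between $u$ and $C_{i_0j_1}$ and between $C_{i_0j_1}$ and $w$. These are not hypotheses of the theorem. When $C_{i_0j_1}$ is not retained, the paper asserts them for a ``virtual'' cell whose corner values are never defined. So the paper's route buys one uniform argument at the price of an implicit level-wide invariant.

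\textbf{Remaining assumption in your route.} In the mixed case you still use the coordinate-wise bound inside $u$ or $w$ (your ``hypothesis applied with equal cells''). The statement does not literally give this either.

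\textbf{How to remove it.} Both that extra assumption and the bookkeeping you flag disappear with one observation. The difference $\ln f_u(\mathbf{y}\mid\mathbf{x})-\ln f_w(\mathbf{y}\mid\mathbf{x}')$ equals the expected corner difference under \emph{any} coupling of the two corner-weight distributions, not only the product $w_\gamma(\mathbf{x})w'_\delta(\mathbf{x}')$. In each coordinate, choose the monotone coupling of the two-point marginals. The coordinate-$t$ term is then $\epsilon_t$ times the expected coordinate gap under that coupling. For same-level aligned cells this gap equals $|x_t-x'_t|$, whether the intervals coincide or are separated. This gives $\sum_t\epsilon_t|x_t-x'_t|$ in one step, with no intermediate point and using only the $u$--$w$ anchor pairs. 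When the shared coordinate values are equal, this coupling is exactly the matched-$\lambda$ pairing of the paper's lemma.
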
}

\begin{figure}[t]
\begin{minipage}{0.40\textwidth}
  \subfigure{
\includegraphics[width=1.00\textwidth]{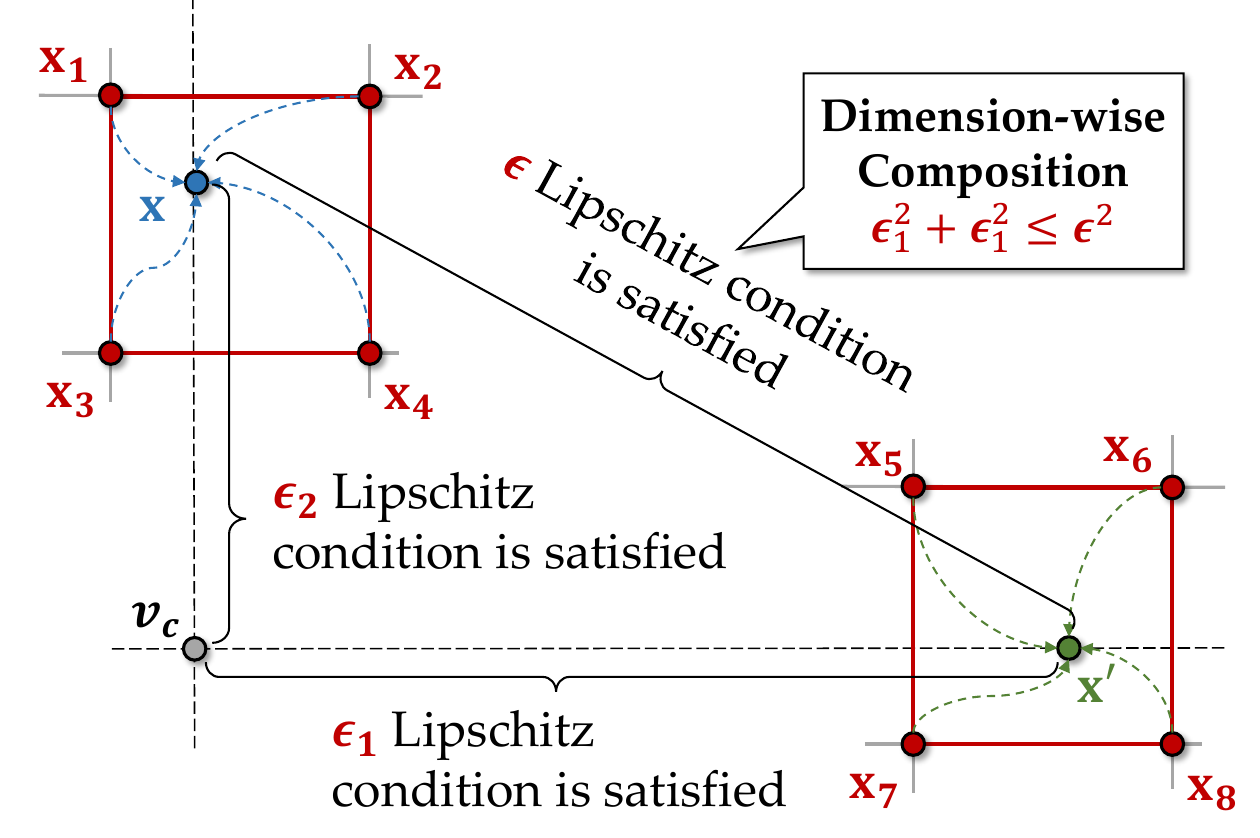}}
\end{minipage}
\caption{Illustration of one-dimensional interpolation validity and dimension-wise composition.}
\label{fig:mDPpreserv_eg}
\end{figure}

Fig.~\ref{fig:mDPpreserv_eg} illustrates the intuition behind
successor-level preservation in the tree-based extension. Consider two
records $\mathbf{x}$ and $\mathbf{x}'$ generated in different successor
cells. Instead of directly bounding the log-probability difference between
them through a single two-dimensional interpolation, we decompose the
comparison into coordinate-wise steps through an intermediate aligned point
$v_c$. Along each axis, the one-dimensional interpolation rule enforces the
corresponding coordinate-wise Lipschitz bound, with budget $\epsilon_1$ for
the first coordinate and $\epsilon_2$ for the second coordinate. By
Proposition~\ref{prop:1d-validity}, these one-dimensional interpolation
steps preserve the required $(\epsilon_t,d_1)$-Lipschitz property. Then,
Theorem~\ref{thm:composition} combines the two coordinate-wise bounds via
dimension-wise composition, e.g., $\epsilon_1^2+\epsilon_2^2\le \epsilon^2$
when $p=2$, to obtain the global $(\epsilon,d_p)$-Lipschitz guarantee.
Thus, the figure shows why the proposed coordinate-wise interpolation
preserves cross-cell mDP relations during recursive refinement.

\subsection{Extension Algorithm Design}
\label{subsec:algorithm}

Having defined the interpolation-based local extension operator in
Definition~\ref{def:2d-interpolation}, we now describe how the operator is
applied recursively over the extension tree. The procedure starts from source
cells whose corner records are equipped with seed mechanisms, and then
propagates these mechanisms from coarse cells to finer cells until all records
in $\mathcal{X}_{\mathrm{tar}}$ are covered.

\noindent \textbf{Seed record optimization.}
Before the recursive extension begins, we first construct a utility-aware
perturbation mechanism on the seed records $\mathcal{X}_{\mathrm{seed}}$,
which correspond to coarse grid corners and serve as the initial anchors for
extension. Instead of optimizing over all fine-grained target records, we
project the full-domain utility loss onto the seed records using interpolation
weights and solve a constrained linear program only on
$\mathcal{X}_{\mathrm{seed}}$. The resulting seed mechanism
$\mathcal{M}_{\mathrm{seed}}$ satisfies the seed-level mDP constraints and is used to
initialize the source mechanisms on the source cells. The detailed seed-level
cost construction and LP formulation are deferred to
Appendix~\ref{sec:seed-optimization}.

\smallskip
\noindent \textbf{Algorithm description.}
The tree-based extension procedure performs a breadth-first traversal over
the extension tree. Starting from the source nodes, whose mechanisms are
initialized from the seed mechanism, it processes cells level by level from
coarse to fine resolution. When a non-leaf cell $v$ is processed, its
mechanism on the corner-vertex set $\mathcal{X}_v$ is used as the input to
the local extension operator $\mathrm{Ext}_v$, which jointly constructs
mechanisms for all child cells in $\mathrm{Succ}(v)$.

At each refinement step, the operator first treats the parent-corner
distributions as anchors. It then assigns mechanisms to child-boundary
vertices using log-convex interpolation along the corresponding parent
edges. Once the boundary values are fixed, the operator extends to
child-interior vertices using coordinate-wise interpolation. The resulting
pre-normalization values are normalized over the output domain
$\mathcal{Y}$ to obtain valid perturbation distributions on the child record
sets. The procedure continues recursively until all reachable nodes have
been processed, after which the global mechanism on
$\mathcal{X}_{\mathrm{tar}}$ is assembled from the resulting local node
mechanisms. Detailed pseudocode is provided in
Appendix~\ref{sec:app:tree-extension-pseudocode}.

We next show how this construction satisfies
\textbf{\emph{(A1)}}--\textbf{\emph{(A3)}} and therefore induces a globally
valid $(\epsilon,d)$-mDP mechanism.

\subsubsection*{\textbf{(A1) Local mDP constraints.}}
For each non-leaf cell $v$, the local operator extends the parent mechanism
jointly to all child cells $\mathrm{Succ}(v)=\{v_1,\ldots,v_m\}$. The resulting
child mechanisms must satisfy $(\epsilon,d)$-mDP over their union, covering both
within-child and cross-sibling pairs: for any $i,j$,
$\mathcal{M}_{v\to v_i}(\cdot\mid \mathbf{x}) \overset{\epsilon}{\sim}
\mathcal{M}_{v\to v_j}(\cdot\mid \mathbf{x}')$, for all
$\mathbf{x}\in\mathcal{X}_{v_i}$ and
$\mathbf{x}'\in\mathcal{X}_{v_j}$. By Proposition~\ref{prop:1d-validity} and
Theorem~\ref{thm:composition}, the two-dimensional interpolation operator
yields a pre-normalization $(\bar{\epsilon},d_p)$-Lipschitz bound over the
child-cell union; after calibrated normalization with
$\bar{\epsilon}=\epsilon/2$, the resulting child mechanisms satisfy the
required $(\epsilon,d)$-mDP constraints.

\subsubsection*{\textbf{(A2) Overlap consistency.}}
In the grid-based tree, neighboring child cells may share boundary vertices,
so a target record may belong to multiple cell record sets. Overlap consistency
requires that every shared record receive the same perturbation distribution
regardless of which cell generates it. That is, for any two cells $u$ and $w$
with $\mathcal{X}_u\cap\mathcal{X}_w\neq\varnothing$, $\mathcal{M}_u(\mathbf{y}\mid \mathbf{x}) = \mathcal{M}_w(\mathbf{y}\mid \mathbf{x})$, $\forall \mathbf{x}\in\mathcal{X}_u\cap\mathcal{X}_w,\ \forall \mathbf{y}\in\mathcal{Y}$. This ensures that the induced global mechanism is well defined.

In our construction, overlaps occur only on shared boundaries. For boundary
vertices introduced along a parent edge, the mechanism is generated by
log-convex interpolation using only the two endpoint anchors of that edge.
Thus, sibling cells sharing the same boundary vertex compute its distribution
from the same two endpoints and assign identical values. If two cells meet
only at a corner, consistency is inherited from the shared corner value.
For neighboring cells with different parents, source-level compatibility and
deterministic edge-based boundary interpolation ensure consistency across
shared boundary records. Thus, \textbf{\emph{(A2)}} is satisfied by construction.

\noindent \textbf{Discussion: Order independence of transform-linear interpolation.}
The consistency argument above relies on deterministic interpolation rules for
shared boundary records. For transform-linear rules, such as log-convex
interpolation, we further have an order-independence property: repeated
interpolation toward a fixed endpoint yields the same final value regardless
of the order in which the interpolation parameters are applied. This property
is not needed for the mDP proof, but it simplifies implementation by ensuring
that coordinate-wise updates are insensitive to the chosen interpolation
order. The formal statement and proof are given in
\textbf{Appendix~\ref{sec:app:discussion:order-independence}}.{\rev \par\medskip These transform-linear assumptions are sufficient, but they are not a necessary characterization of order-independent interpolation. The essential requirement is that repeated update parameters admit a well-defined associative and commutative composition law, or an equivalent path-independent merge rule. Other interpolation operators may satisfy this algebraic property without being representable by the same functions $f$ and $\alpha$. Within the transform-linear family, strict monotonicity is useful because it makes the effective parameter unique; beyond that restricted family, necessity need not hold.} \looseness = -1

\subsubsection*{\textbf{(A3) Successor-level mDP preservation.}} The remaining challenge is to show that cross-region mDP relations are preserved under recursive refinement. In the tree-based setting, this means that if an mDP relation holds between two coarse regions, then the same relation should continue to hold between their descendant regions. This condition is subtle because direct multi-dimensional interpolation does not necessarily preserve cross-region mDP guarantees: simultaneous interpolation along multiple coordinates may produce log-probability changes that are not tightly controlled by the true $\ell_p$-distance between two records, even when the anchor mechanisms themselves satisfy mDP~\cite{Qiu-USec2026}. Our construction avoids this issue by decomposing each local extension into one-dimensional interpolation steps. Each step enforces a coordinate-wise Lipschitz bound in log-probability space, and the dimension-wise composition property in Theorem~\ref{thm:composition} converts these coordinate-wise bounds into an $\ell_p$-metric guarantee. The following theorem states the subtree-level invariant that establishes \textbf{\emph{(A3)}} for the tree-based construction.

\begin{theorem}[Subtree-level preservation of the tree-based extension]
\label{thm:subtree-preservation}
Consider the tree-based extension algorithm. Fix a node $v$ in the extension
tree, and define $\mathcal{X}_{\mathrm{sub}}(v):=\bigcup_{z\in \mathrm{Desc}(v)} \mathcal{X}_z$.
Assume that the corner anchors of $v$ satisfy the coordinate-wise Lipschitz
bounds in log space with pre-normalization budgets $\epsilon_1,\epsilon_2$,
and that every extension inside the subtree rooted at $v$ is performed by
the two-dimensional interpolation operator in
Definition~\ref{def:2d-interpolation}. Assume further that
$\epsilon_1,\epsilon_2$ satisfy the dimension-wise composition condition
with total pre-normalization budget $\bar{\epsilon}$ under the $\ell_p$
metric.

Then, for any two records
$\mathbf{x},\mathbf{x}'\in \mathcal{X}_{\mathrm{sub}}(v)$ and any output
$\mathbf{y}\in\mathcal{Y}$, the pre-normalization interpolants satisfy
\[
\left|
\ln f_v(\mathbf{y}\mid \mathbf{x})
-
\ln f_v(\mathbf{y}\mid \mathbf{x}')
\right|
\le
\bar{\epsilon} d_p(\mathbf{x},\mathbf{x}').
\]
Consequently, after normalization, the induced mechanism on
$\mathcal{X}_{\mathrm{sub}}(v)$ satisfies $2\bar{\epsilon}$-mDP. In
particular, choosing $\bar{\epsilon}=\epsilon/2$ yields the target
$(\epsilon,d)$-mDP guarantee. The detailed proof can be found in \textbf{Appendix~\ref{subsec:app:proof:thm:subtree-preservation}}. 
\end{theorem}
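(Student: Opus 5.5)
We prove the statement by induction on the depth of the subtree rooted at $v$. The invariant, for a node $z\in\mathrm{Desc}(v)$, is the coordinate-wise cross-cell bound
\[
\textstyle\bigl|\ln f(\mathbf{y}\mid\mathbf{x})-\ln f(\mathbf{y}\mid\mathbf{x}')\bigr|\le\sum_{t=1}^{2}\epsilon_t|x_t-x'_t|
\]
for all $\mathbf{x}\in\mathcal{X}_z$, all $\mathbf{x}'\in\mathcal{X}_{z'}$, and every other node $z'\in\mathrm{Desc}(v)$ processed so far. Here $\ln f(\cdot\mid\mathbf{x})$ denotes the pre-normalization log-interpolant assigned to $\mathbf{x}$. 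Under overlap consistency (A2) this is a single well-defined function on $\mathcal{X}_{\mathrm{sub}}(v)$, written $\ln f_v$ in the statement.

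\textbf{Base case.} The invariant holds among the corner anchors of $v$ by hypothesis.

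\textbf{Inductive step.} Pass from level $r$ to level $r+1$ in two parts.
\begin{enumerate}
\item \emph{Pairs inside one child family.} Take a parent $z$ and two of its children. The weights $w_\gamma$ in Definition~\ref{def:2d-interpolation} factor as a product of one-dimensional linear weights. So $\ln f_z$ is bilinear in $(x_1,x_2)$ on the cell. Its restriction to any axis-parallel segment is affine, with slope equal to a convex combination of the parent-edge slopes in that direction. Each such slope is at most $\epsilon_t$ in absolute value by the invariant at the corners. Proposition~\ref{prop:1d-validity} then gives an $(\epsilon_t,d_1)$ bound along each axis. We join $\mathbf{x}$ and $\mathbf{x}'$ through the aligned point $(x'_1,x_2)$, which lies in the parent cell because the cell is a rectangle. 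The triangle inequality then yields the $\sum_t\epsilon_t|x_t-x'_t|$ bound.
\item \emph{Pairs across different parents.} Take children $u'\in\mathrm{Succ}(u)$ and $w'\in\mathrm{Succ}(w)$ with $u\neq w$. Apply Theorem~\ref{thm:successor-preservation-2d} directly, using the invariant for the pair $(u,w)$ as its premise. This is exactly the (A3) step.
\end{enumerate}
Induction on the level gives the invariant for all pairs in $\mathcal{X}_{\mathrm{sub}}(v)$.

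\textbf{From $\ell_1$-type to $\ell_p$.} Convert the coordinate-weighted bound to $\bar\epsilon\,d_p$ with H\"older's inequality:
\[
\textstyle\sum_t\epsilon_t|x_t-x'_t|\le\bigl(\sum_t\epsilon_t^{p/(p-1)}\bigr)^{(p-1)/p}\,\|\mathbf{x}-\mathbf{x}'\|_p\le\bar\epsilon\,d_p(\mathbf{x},\mathbf{x}').
\]
The second inequality is the budget composition condition of Theorem~\ref{thm:composition}. For $p=1$ we instead use $\max_t\epsilon_t\le\bar\epsilon$.

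\textbf{Normalization.} Let $Z(\mathbf{x})=\sum_{\mathbf{y}'}f(\mathbf{y}'\mid\mathbf{x})$. The pointwise bound $f(\mathbf{y}'\mid\mathbf{x})\le e^{\bar\epsilon d}f(\mathbf{y}'\mid\mathbf{x}')$ summed over $\mathbf{y}'$ gives $|\ln Z(\mathbf{x})-\ln Z(\mathbf{x}')|\le\bar\epsilon d$. Combining this with the numerator bound yields $2\bar\epsilon$-mDP, and choosing $\bar\epsilon=\epsilon/2$ gives the target guarantee.

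\textbf{Main obstacle.} The main difficulty is making the invariant genuinely closed under refinement. The child-family step needs the corner values of each child to inherit coordinate slopes bounded by $\epsilon_t$. A bilinear interpolant also carries a mixed term. That term does not break per-axis slope bounds, but it must be checked that new child corners stay within the slope caps along child edges. This holds because each axis slope is a convex combination of the parent edge slopes. The cross-parent step leans on Theorem~\ref{thm:successor-preservation-2d}. Its premise ranges over all corner pairs of $u$ and $w$, so the induction must carry the invariant for \emph{all} node pairs at a level, not only for siblings. For this reason I would state the inductive hypothesis as level-wide.
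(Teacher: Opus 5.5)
Your proof is correct, but it is organized differently from the paper's. The paper does not invoke Theorem~\ref{thm:successor-preservation-2d} here. It first proves a purely coordinate-wise bound (Lemma~\ref{lem:subtree-coordinate-preservation}): cut the axis-parallel segment between two records into pieces that each lie on a one-dimensional interpolation segment, apply the slope cap on each piece, and telescope. It then handles an arbitrary pair by a single detour through the (possibly virtual) point $(x_1,x_2')$, followed by composition and the normalizer bound. You instead carry an all-pairs $\ell_1$-weighted invariant level by level, treating siblings by an explicit computation on the bilinear interpolant and non-siblings by the (A3) result.

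The paper's route needs only the per-axis slope cap, which it builds into Definition~\ref{def:2d-interpolation} as an assumption. It is therefore agnostic to the interpolation formula and never carries pairwise invariants between cells. Your route actually derives that cap from the log-bilinear weights (axis slopes are convex combinations of parent-edge slopes). It also presents the subtree result as iterated successor-level preservation, in the spirit of Lemma~\ref{lem:descendant-closure}. The costs are that it is tied to the bilinear form and leans on Theorem~\ref{thm:successor-preservation-2d}. That theorem's proof in the paper passes through an intermediate cell $C_{i_0j_1}$ not covered by its stated premise, although the statement itself does hold, e.g.\ via a coordinate-wise coupling of the product weights $w_\gamma$.

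In fact you can drop the induction and step~2 altogether. Run the recursion on pre-normalization values, as you and the paper both do. Every record of $\mathcal{X}_{\mathrm{sub}}(v)$ lies in the closed cell $v$, and bilinear interpolation from the corners of an axis-aligned subrectangle reproduces a bilinear function exactly. So the composite log-interpolant on the subtree is simply the bilinear interpolant of the four corners of $v$. Your step~1 applied once at $v$ then yields the bound for every pair. This also covers the mixed-level pairs (a record of a coarser node against one of a finer node), which your level-wise inductive step never explicitly treats.
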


Theorem~\ref{thm:subtree-preservation} shows that the tree-based extension
operator preserves the Lipschitz invariant throughout every subtree. Together
with the local sibling constraints in \textbf{\emph{(A1)}} and the boundary
consistency in \textbf{\emph{(A2)}}, this implies that the recursive
tree-based construction satisfies all three framework requirements and
therefore induces a well-defined global mechanism satisfying $(\epsilon,d)$-mDP
on $\mathcal{X}_{\mathrm{tar}}$.

{\rev
\subsubsection*{\textbf{Complexity analysis.}}
A direct all-pairs optimization over the target domain requires
$O(|\mathcal{X}_{\mathrm{tar}}||\mathcal{Y}|)$ variables and up to
$O(|\mathcal{X}_{\mathrm{tar}}|^2|\mathcal{Y}|)$ mDP constraints. In contrast,
our framework optimizes only over the seed domain, using
$O(|\mathcal{X}_{\mathrm{seed}}||\mathcal{Y}|)$ variables and
$O(\Delta|\mathcal{X}_{\mathrm{seed}}||\mathcal{Y}|)$ neighboring-seed
constraints, where $\Delta$ is the maximum degree of the seed graph. For the
bounded-degree grid graphs considered here, this reduces to
$O(|\mathcal{X}_{\mathrm{seed}}||\mathcal{Y}|)$ constraints.

The remaining records are generated recursively without further global
optimization. Computing one extended distribution costs
$O(|\mathcal{Y}|)$, giving an overall extension cost of
$O(|\mathcal{X}_{\mathrm{tar}}||\mathcal{Y}|L)$ for $L$ extension levels and
$O(|\mathcal{X}_{\mathrm{tar}}||\mathcal{Y}|)$ space to store the final
mechanism. Hence, the computational savings are governed by the seed density
$|\mathcal{X}_{\mathrm{seed}}|/|\mathcal{X}_{\mathrm{tar}}|$: sparser seed sets
reduce the optimization size, while denser seed sets trade additional
computation for potentially improved utility.
}

{\rev In our experimental configuration, described in
Section~\ref{sec:experiments}, seed records constitute approximately $0.26\%$ of the target domain, corresponding to an approximately $378\times$ reduction in the number of optimization variables.}

\subsubsection*{\textbf{Discussion: Applicability beyond $\ell_p$ metrics.}}
{\rev The general graph-based framework is not restricted to $\ell_p$ metrics and can support both central and local instantiations of mDP, provided that the domain admits an appropriate extension graph and local operators satisfying requirements (A1)--(A3). Under suitable metrics, conventional central DP and $\epsilon$-LDP arise as special cases: the graph metric induced by dataset adjacency recovers central DP and its group-privacy extension, while the discrete metric
$d(\mathbf{x},\mathbf{x}')=\mathbf{1}_{[\mathbf{x}\neq\mathbf{x}']}$ recovers $\epsilon$-LDP. However, extension is generally less useful in these standard settings because their distance structures provide limited fine-grained proximity information for seed-based interpolation. Its usefulness for central DP therefore depends on whether the dataset space contains additional exploitable structure beyond basic adjacency.

The tree-based algorithm developed in this work focuses on $\ell_p$ metrics because they are widely used in the mDP literature and naturally support our hierarchical construction. Their coordinate-wise structure allows Theorem~\ref{thm:composition} to compose one-dimensional interpolation guarantees into a global $\ell_p$-metric guarantee. For other metrics, the general framework may still guide extension design, but different extension structures, local operators, algorithms, or proof techniques may be required to satisfy (A1)--(A3).}

{\rev
\subsubsection*{\textbf{Discussion: Limitations of tree-based extension.}} 
The current extension construction assumes that the secret domain admits a nested hierarchy in which each successor record is associated with a well-defined parent block. It may therefore be less suitable for non-hierarchical domains or arbitrary graphs containing cycles and multiple natural predecessors. Imposing a single-parent tree on such domains can discard important cross-branch connections and distort the underlying metric. For example, our tree construction should not be interpreted as a direct representation of the OSM road network: the tree organizes the multi-resolution spatial grid, whereas the road network is used only to evaluate travel-distance utility. Extending the construction to general graphs or multiple-parent structures remains future work.}
 
\section{Empirical Validation}
\label{sec:experiments}
In this section, we empirically evaluate the proposed tree-based extension framework on road-map datasets under different grid granularities and privacy budgets. 
We evaluate each method using three complementary metrics:

\begin{itemize}
    \item {\rev \emph{Utility loss} measures the expected distortion in travel-distance estimation caused by location perturbation. Let $\mathcal{Q}$ denote the set of spatial task locations, let $\pi_{\mathcal{X}}(\mathbf{x})$ and $\pi_{\mathcal{Q}}(\mathbf{q})$ denote the distributions of true and task locations, respectively, and let $d_\mathcal{G}(\cdot,\cdot)$ denote shortest-path distance on the OSM road network after nearest-road-node mapping. For a true location $\mathbf{x}$ and reported location $\mathbf{y}$, we define the expected utility loss as \looseness = -1
    \begin{equation}
    \label{eq:UL1}
    \textstyle c(\mathbf{x},\mathbf{y}) = \sum_{\mathbf{q}\in\mathcal{Q}}\pi_{\mathcal{Q}}(\mathbf{q})\left|d_\mathcal{G}(\mathbf{x},\mathbf{q})-d_\mathcal{G}(\mathbf{y},\mathbf{q})\right|.
    \end{equation}
    The overall utility loss of mechanism $\mathcal{M}$ is
    \begin{equation}
    \label{eq:UL2}
    \textstyle \operatorname{UL}(\mathcal{M})=\sum_{\mathbf{x}\in\mathcal{X}}\pi_{\mathcal{X}}(\mathbf{x})\sum_{\mathbf{y}\in\mathcal{Y}}\mathcal{M}(\mathbf{y}\mid\mathbf{x})c(\mathbf{x},\mathbf{y}).
    \end{equation}
    The loss is measured in kilometers, and lower values indicate more accurate travel-distance estimates. This metric is meaningful for location-based services such as spatial task allocation~\cite{Qiu-TMC2022}, where decisions depend on estimated travel distances from users to task locations.}


    \item \emph{Computation time}, {\rev reported in seconds, measures the end-to-end time required to construct the perturbation mechanism, including seed-level optimization and recursive extension. All experiments were conducted in MATLAB R2025b on a system equipped with an Apple M5 processor and 24\,GB of unified memory.}

    \item \emph{mDP violation rate} empirically checks whether the constructed mechanism satisfies the target mDP constraints. We compute the fraction of triples     $(\mathbf{x},\mathbf{x}',\mathbf{y})$ for which the log-probability difference exceeds the allowed privacy bound, i.e., $|\ln \mathcal{M}(\mathbf{y}\mid\mathbf{x})-\ln \mathcal{M}(\mathbf{y}\mid\mathbf{x}')| > \epsilon d(\mathbf{x},\mathbf{x}')+\tau$, where $\tau$ is a small numerical tolerance. 
\end{itemize}

{\rev In our experiments, the Euclidean privacy metric $d(\mathbf{x},\mathbf{x}')$ is measured in kilometers. Therefore, the privacy budget $\epsilon$ has unit $\mathrm{km}^{-1}$ (i.e., per kilometer), ensuring that $\epsilon d(\mathbf{x},\mathbf{x}')$ in the mDP guarantee is dimensionless. For a fixed distance, a larger $\epsilon$ corresponds to a weaker privacy guarantee.}

\smallskip
\noindent\textbf{Datasets.}
We conduct experiments on three road-network datasets: \emph{Rome}, \emph{London}, and \emph{New York}, with road-network data retrieved from OpenStreetMap (OSM)~\cite{openstreetmap}. {\rev The OSM road network and the grid serve distinct purposes in our experiments. The multi-resolution grid discretizes the geographic domain into secret locations and provides the hierarchical structure used by the extension algorithm. Each grid vertex is mapped to its nearest OSM road-network node, and the shortest-path travel distance between the corresponding road-network nodes is used to compute the utility loss. For each city, we overlay the multi-resolution grid described in Section~\ref{sec:tree} on the corresponding geographic region and retain the grid vertices in the target domain.} These datasets cover urban road networks with different spatial layouts and domain sizes, allowing us to evaluate the proposed extension framework across different geographic settings. Due to space limitations, we present the Rome results in the main text and report the London and New York results in Appendix~\ref{sec:appendix:addexp}, where similar trends are observed.

\smallskip
\noindent\textbf{Compared Methods.} We compare against representative methods from five categories. 
\begin{itemize}
    \item  {\rev \textbf{Predefined noise mechanisms}: We include two predefined distance-based mechanisms: the \emph{Exponential Mechanism (EM)}~\cite{chatzikokolakis2015constructing} and the \emph{planar Laplace mechanism}~\cite{Andres-CCS2013}. Both use the Euclidean ($\ell_2$) distance $d_2(\mathbf{x},\mathbf{y})=\lVert\mathbf{x}-\mathbf{y}\rVert_2$. EM uses the score function $u(\mathbf{x},\mathbf{y})=-d_2(\mathbf{x},\mathbf{y})$ and samples $\mathbf{y}\in\mathcal{Y}$ according to $\Pr[\mathcal{M}_{\mathrm{EM}}(\mathbf{x})=\mathbf{y}] = \frac{\exp\!\left(-\frac{\epsilon}{2}d_2(\mathbf{x},\mathbf{y})\right)}{\sum_{\mathbf{y}'\in\mathcal{Y}} \exp\!\left(-\frac{\epsilon}{2}d_2(\mathbf{x},\mathbf{y}')\right)}$. The planar Laplace mechanism samples a continuous output $\mathbf{y}\in\mathbb{R}^{2}$ with density $p_{\mathrm{Lap}}(\mathbf{y}\mid\mathbf{x}) = \frac{\epsilon^{2}}{2\pi}\exp\!\left(-\epsilon d_2(\mathbf{x},\mathbf{y})\right)$. 
    These mechanisms are computationally efficient because their perturbation distributions are fixed by Euclidean distance, but they are not optimized for the task-specific travel-distance utility considered in our experiments.} 
    
    \item \textbf{Hybrid methods}: we consider \emph{EM+RMP}, which applies \emph{Bayesian remapping}~\cite{chatzikokolakis2017efficient} after \emph{EM} to improve utility through post-processing while preserving the privacy guarantee of the first-stage mechanism. We also include \emph{COPT}~\cite{imola2022balancing}, which scales LP-based mDP design by constraining selected transition probabilities to follow the exponential mechanism and optimizing the remaining probabilities.  \looseness =-1
    
    \item \textbf{Optimization-based methods}: We include three optimization-based baselines. \emph{linear programming (LP)}~\cite{Bordenabe-CCS2014} directly optimizes the perturbation probabilities over the full target domain under the mDP constraints; {\rev \emph{LP-A} uses the same $12\times12$ initial-grid setting as all other evaluated methods in the main comparison and solves the optimization problem on the corresponding reduced domain, which improves scalability but may lose fine-grained utility.} In addition, we also include \emph{PAnDA}~\cite{Liu-CCS2025}, which improves scalability by using anchor-based approximation to reduce the number of optimization variables and constraints. 
        
    \item \textbf{Existing extension-based methods}: We compare with extension-based baselines, including \emph{McShane--Whitney extension (MW)}~\cite{borgs2018extend} and \emph{anchor-based interpolation optimization (AIPO)}~\cite{Qiu-USec2026}, which construct mechanisms on anchor records and extend them to non-anchor records without solving a full-domain optimization problem. \looseness =-1
        
    \item \textbf{Tree-based extension methods}: We evaluate three variants of our tree-based extension framework, \emph{MLaEt-A}, \emph{MLaEt-M}, and \emph{MLaEt-O}, which recursively extend mechanisms from coarse seed records to fine-grained target records using different local extension or optimization strategies.
\end{itemize}
{\rev In the main comparison, all evaluated methods, including LP-A, use the same $12\times12$ initial grid. In the table headings, \emph{depth} denotes the number of recursive grid-refinement levels applied from this initial grid to construct the target domain; thus, depth 2, 3, and 4 correspond to two, three, and four successive refinement levels, respectively. The accompanying value is the total number of participating location points included at the resulting target resolution, rather than the number of possible grid cells.}

\smallskip 
{\rev \noindent\textbf{Experimental protocol.} The main comparison is conducted on the Rome, London, and New York City datasets with privacy budgets $\epsilon\in\{0.5,1.0,1.5\}\,\mathrm{km}^{-1}$ and target depths 2, 3, and 4, using the common $12\times12$ initial grid described above. The seed-grid-density study instead varies the initial grid from $5\times5$ to $17\times17$. For every method--dataset--parameter configuration, we perform 50 independent repetitions.} 
{\rev A ``---'' indicates that the method could not complete mechanism construction within the {\rev 1800-second time limit}; consequently, the corresponding experimental results were unavailable.}

\begin{table*}[t]
\footnotesize 
\begin{tabular}{p{1.85cm} |  p{1.00cm} p{1.21cm} p{1.21cm}  p{1.21cm} | p{1.21cm} p{1.21cm} p{1.21cm} | p{1.21cm} p{1.21cm} p{1.21cm}} 
\toprule
\multicolumn{2}{c }{{\rev \# Participating Points}} & \multicolumn{3}{c }{483 (depth=2)} & \multicolumn{3}{c }{4087 (depth=3)}& \multicolumn{3}{c }{63865 (depth=4)}\\
\midrule
\multicolumn{2}{c }{Privacy budget (km$^{-1}$)} & $\epsilon = 0.5$& $\epsilon = 1.0$& $\epsilon = 1.5$ & $\epsilon = 0.5$& $\epsilon = 1.0$& $\epsilon = 1.5$& $\epsilon = 0.5$& $\epsilon = 1.0$& $\epsilon = 1.5$\\
\midrule
\multicolumn{11}{c }{Rome road map}\\
\hline
Pre-defined& EM & 13.27±0.95 & 10.96±0.92 & 10.31±0.88 & 12.95±0.92 & 10.64±0.89 & 9.99±0.85 & 12.79±0.90 & 10.48±0.88 & 9.83±0.84 \\
\cline{2-2}
Noise Distribution & Laplace  & 12.97±0.95 & 10.62±0.96 & 9.97±0.91 & 12.66±0.93 & 10.32±0.94 & 9.67±0.89 & 12.51±0.92 & 10.17±0.93 & 9.52±0.88 \\
\cline{1-2}
Hybrid  & EM+RMP & 13.14±0.90 & 10.88±0.89 & 10.27±0.86 & 12.83±0.87 & 10.57±0.86 & 9.95±0.84 & 12.67±0.86 & 10.41±0.85 & 9.79±0.82 \\
\cline{2-2}
Methods &  COPT  & ------------- & ------------- & ------------- & ------------- & ------------- & ------------- & ------------- & ------------- & ------------- \\
\cline{1-2}
Optimization &  LP  & ------------- & ------------- & ------------- & ------------- & ------------- & ------------- & ------------- & ------------- & ------------- \\
\cline{2-2}
Based & LP-A & 9.81±0.65 & 9.61±0.66 & 10.13±0.63 & 9.52±0.64 & 9.32±0.64 & 9.83±0.60 & 9.38±0.62 & 9.18±0.63 & 9.65±0.59 \\
\cline{2-2} 
Methods & PAnDA & ------------- & ------------- & ------------- & ------------- & ------------- & ------------- & ------------- & ------------- & ------------- \\
\hline
\cline{1-2}
\rowcolor{lightgray!20} Existing Extension &  AIPO  & 7.05±0.14 & 5.23±0.22 & 4.55±0.27 & 7.09±0.14 & 5.27±0.22 & 4.59±0.26 & 7.10±0.14 & 5.28±0.22 & 4.60±0.26 \\
\cline{2-2}
\rowcolor{lightgray!20} Based Methods & MW & 9.04±0.04 & 7.09±0.12 & 5.95±0.15 & 9.08±0.04 & 7.14±0.11 & 6.01±0.14 & 9.09±0.04 & 7.15±0.11 & 6.03±0.14 \\
\midrule
\rowcolor{lightgray!40} Tree-based &  MLaEt-A  & 7.05±0.14 & 5.23±0.22 & 4.55±0.27 & 7.09±0.14 & 5.27±0.22 & 4.59±0.26 & 7.10±0.14 & 5.28±0.22 & 4.60±0.26 \\
\cline{2-2}
\rowcolor{lightgray!40} extension & MLaEt-M & 6.94±0.14 & 5.12±0.22 & 4.43±0.26 & 6.97±0.14 & 5.16±0.22 & 4.46±0.26 & 6.97±0.14 & 5.16±0.22 & 4.46±0.26 \\
\rowcolor{lightgray!40} algorithms & MLaEt-O & 6.84±0.14 & 5.02±0.23 & 4.39±0.27 & 6.80±0.14 & 4.98±0.23 & 4.37±0.27 & 6.78±0.15 & 4.96±0.23 & 4.35±0.27 \\
\bottomrule
\end{tabular}
\centering
\caption{\centering {\rev Utility loss of different perturbation methods (kilometers) \\ Mean$\pm$1.96$\times$standard error; for the algorithm without getting the results, we label its results by ``--------''.}}
\label{tab:UL}
\end{table*}

\begin{table*}[t]
\footnotesize 
\begin{tabular}{p{1.85cm} |  p{1.00cm} p{1.21cm} p{1.21cm}  p{1.21cm} | p{1.21cm} p{1.21cm} p{1.21cm} | p{1.21cm} p{1.21cm} p{1.21cm}} 
\toprule
\multicolumn{2}{c }{{\rev \# Participating Points}} & \multicolumn{3}{c }{483 (depth=2)} & \multicolumn{3}{c }{4087 (depth=3)}& \multicolumn{3}{c }{63865 (depth=4)}\\
\midrule
\multicolumn{2}{c }{Privacy budget (km$^{-1}$)} & $\epsilon = 0.5$& $\epsilon = 1.0$& $\epsilon = 1.5$ & $\epsilon = 0.5$& $\epsilon = 1.0$& $\epsilon = 1.5$& $\epsilon = 0.5$& $\epsilon = 1.0$& $\epsilon = 1.5$\\
\midrule
\multicolumn{11}{c }{Rome road map}\\
\hline
Pre-defined& EM & 0.0005±0.0017 & 0.0001±0.0000 & 0.0001±0.0000 & 0.0004±0.0002 & 0.0004±0.0000 & 0.0004±0.0000 & 0.0014±0.0000 & 0.0014±0.0000 & 0.0013±0.0000 \\
\cline{2-2}
Noise Distribution & Laplace  & 0.0002±0.0002 & 0.0001±0.0001 & 0.0001±0.0000 & 0.0002±0.0000 & 0.0003±0.0004 & 0.0002±0.0002 & 0.0004±0.0001 & 0.0004±0.0001 & 0.0004±0.0001 \\
\cline{1-2}
Hybrid  & EM+RMP & 0.0010±0.0032 & 0.0002±0.0001 & 0.0002±0.0001 & 0.0007±0.0002 & 0.0006±0.0001 & 0.0006±0.0001 & 0.0023±0.0006 & 0.0021±0.0003 & 0.0021±0.0004 \\
\cline{2-2}
Methods &  COPT  & ------------- & ------------- & ------------- & ------------- & ------------- & ------------- & ------------- & ------------- & ------------- \\
\cline{1-2}
Optimization &  LP  & ------------- & ------------- & ------------- & ------------- & ------------- & ------------- & ------------- & ------------- & ------------- \\
\cline{2-2}
Based & LP-A & 10.35±0.84 & 9.75±0.84 & 10.24±1.51 & 10.35±0.84 & 9.75±0.84 & 10.24±1.51 & 10.35±0.84 & 9.75±0.84 & 10.24±1.51 \\
\cline{2-2} 
Methods & PAnDA & ------------- & ------------- & ------------- & ------------- & ------------- & ------------- & ------------- & ------------- & ------------- \\
\hline
\cline{1-2}
\rowcolor{lightgray!20} Existing Extension &  AIPO  & 8.27±0.29 & 7.28±0.02 & 7.10±0.02 & 9.56±0.29 & 8.57±0.02 & 8.39±0.02 & 280.36±0.99 & 277.77±1.04 & 277.91±0.54 \\
\cline{2-2}
\rowcolor{lightgray!20} Based Methods & MW & 8.10±0.17 & 7.29±0.11 & 7.11±0.07 & 8.23±0.15 & 7.41±0.10 & 7.23±0.06 & 28.70±2.02 & 27.53±1.48 & 27.23±1.58 \\
\midrule
\rowcolor{lightgray!40} Tree-based &  MLaEt-A  & 4.59±0.21 & 3.50±0.06 & 3.26±0.04 & 4.65±0.21 & 3.57±0.06 & 3.32±0.04 & 5.30±0.21 & 4.21±0.08 & 3.96±0.06 \\
\cline{2-2}
\rowcolor{lightgray!40} extension & MLaEt-M & 10.87±0.85 & 6.73±0.18 & 5.72±0.10 & 10.92±0.85 & 6.77±0.18 & 5.76±0.10 & 11.29±0.85 & 7.12±0.18 & 6.11±0.11 \\
\rowcolor{lightgray!40} algorithms & MLaEt-O & 5.40±0.38 & 4.30±0.28 & 4.04±0.28 & 15.13±4.40 & 13.93±4.41 & 10.32±4.07 & 1488.3±241.2 & 821.7±154.1 & 868.9±269.9 \\
\bottomrule
\end{tabular}
\centering
\caption{\centering {\rev Computation time of different perturbation methods (seconds) \\ Mean$\pm$1.96$\times$standard error; for the algorithm without getting the results, we label its results by ``--------''.}}
\label{tab:time}
\end{table*}

\smallskip
Next, we introduce the experimental results: 

\noindent\textbf{Utility Loss.}
Table~\ref{tab:UL} reports the utility loss on the Rome road-map dataset under three domain granularities and three privacy budgets. Across all tested settings, the proposed tree-based variants consistently outperform EM, Laplace, and EM+RMP, reducing utility loss by approximately 40\%--55\%. These results show that optimizing the mechanism over coarse seed records and then extending it can achieve substantially better utility than fixed distance-based perturbation rules.

The proposed variants also remain competitive with existing extension-based methods. In particular, \textsc{MLaEt}-A closely matches AIPO across different tree depths and privacy budgets, indicating that the recursive construction preserves most of the utility benefit of anchor-based interpolation while supporting hierarchical, multi-stage extension. 
{\rev \textsc{MLaEt}-O reduces utility loss by 4.5\% relative to AIPO, although at a higher computational cost.} 

{\rev The limited improvement of the proposed tree-based methods over AIPO is partly attributable to the regular two-dimensional grids considered here, on which a strong one-stage method can already perform the extension effectively. The main advantage of recursive extension is therefore not necessarily lower utility loss, but its ability to support compositional, multi-stage construction while preserving global mDP. This flexibility may be particularly beneficial for large, multidimensional, hierarchical, or irregular domains, which we leave for future evaluation.

LP-A optimizes only over the coarse initial grid, improving feasibility but yielding higher target-domain utility loss than the proposed methods across all tested settings. Its computational cost also grows rapidly with the record density, whereas our methods achieve lower utility loss and scale more effectively through fine-grained recursive extension.}

As expected, utility loss decreases as $\epsilon$ increases because weaker privacy constraints permit more utility-preserving perturbations. The relative ranking of the methods remains stable across domain granularities, indicating that the utility advantage of tree-based extension persists as the domain becomes finer.

\noindent \textbf{Computation time.} Table~\ref{tab:time} compares computation time on the Rome road-map dataset. The proposed tree-based variants, particularly MLaEt-A and MLaEt-M, maintain low and stable runtime across domain granularities and privacy budgets, remaining comparable to existing extension-based methods. Both finish within a few seconds as the domain grows from hundreds to thousands of records, indicating that recursive tree-based extension introduces only modest computational overhead. In contrast, LP-A requires substantially more time when computationally feasible, while several full optimization-based methods exceed the 1000-second time limit. MLaEt-O becomes more expensive at finer granularities, indicating that the choice of local extension strategy directly affects end-to-end efficiency.

Although predefined and hybrid mechanisms are faster, they incur substantially higher utility loss. Overall, the results demonstrate a favorable utility--efficiency trade-off: the proposed tree-based methods retain much of the efficiency of lightweight extension methods while achieving substantially better utility than predefined baselines and scaling more effectively than optimization-intensive approaches.

\begin{table*}[t]
\footnotesize 
\begin{tabular}{p{1.85cm} |  p{1.00cm} p{1.21cm} p{1.21cm}  p{1.21cm} | p{1.21cm} p{1.21cm} p{1.21cm} | p{1.21cm} p{1.21cm} p{1.21cm}} 
\toprule
\multicolumn{2}{c }{{\rev \# Participating Points}} & \multicolumn{3}{c }{483 (depth=2)} & \multicolumn{3}{c }{4087 (depth=3)}& \multicolumn{3}{c }{63865 (depth=4)}\\
\midrule
\multicolumn{2}{c }{Privacy budget (km$^{-1}$)} & $\epsilon = 0.5$& $\epsilon = 1.0$& $\epsilon = 1.5$ & $\epsilon = 0.5$& $\epsilon = 1.0$& $\epsilon = 1.5$& $\epsilon = 0.5$& $\epsilon = 1.0$& $\epsilon = 1.5$\\
\midrule
\multicolumn{11}{c }{Rome road map}\\
\hline
Pre-defined& EM & 0.00±0.00 & 0.00±0.00 & 0.00±0.00 & 0.00±0.00 & 0.00±0.00 & 0.00±0.00 & 0.00±0.00 & 0.00±0.00 & 0.00±0.00 \\
\cline{2-2}
Noise Distribution & Laplace  & 0.00±0.00 & 0.00±0.00 & 0.00±0.00 & 0.00±0.00 & 0.00±0.00 & 0.00±0.00 & 0.00±0.00 & 0.00±0.00 & 0.00±0.00 \\
\cline{1-2}
Hybrid  & EM+RMP & 0.00±0.00 & 0.00±0.00 & 0.00±0.00 & 0.00±0.00 & 0.00±0.00 & 0.00±0.00 & 0.00±0.00 & 0.00±0.00 & 0.00±0.00 \\
\cline{2-2}
Methods &  COPT  & ------------- & ------------- & ------------- & ------------- & ------------- & ------------- & ------------- & ------------- & ------------- \\
\cline{1-2}
Optimization &  LP  & ------------- & ------------- & ------------- & ------------- & ------------- & ------------- & ------------- & ------------- & ------------- \\
\cline{2-2}
Based & LP-A & 0.0465±0.0031 & 0.0245±0.0025 & 0.0119±0.0012 & 0.0465±0.0031 & 0.0245±0.0025 & 0.0119±0.0012 & 0.0465±0.0031 & 0.0245±0.0025 & 0.0119±0.0012 \\
\cline{2-2} 
Methods & PAnDA & ------------- & ------------- & ------------- & ------------- & ------------- & ------------- & ------------- & ------------- & ------------- \\
\hline
\cline{1-2}
\rowcolor{lightgray!20} Existing Extension &  AIPO  & 0.00±0.00 & 0.00±0.00 & 0.00±0.00 & 0.00±0.00 & 0.00±0.00 & 0.00±0.00 & 0.00±0.00 & 0.00±0.00 & 0.00±0.00 \\
\cline{2-2}
\rowcolor{lightgray!20} Based Methods & MW & 0.00±0.00 & 0.00±0.00 & 0.00±0.00 & 0.00±0.00 & 0.00±0.00 & 0.00±0.00 & 0.00±0.00 & 0.00±0.00 & 0.00±0.00 \\
\midrule
\rowcolor{lightgray!40} Tree-based &  MLaEt-A  & 0.00±0.00 & 0.00±0.00 & 0.00±0.00 & 0.00±0.00 & 0.00±0.00 & 0.00±0.00 & 0.00±0.00 & 0.00±0.00 & 0.00±0.00 \\
\cline{2-2}
\rowcolor{lightgray!40} extension & MLaEt-M & 0.00±0.00 & 0.00±0.00 & 0.00±0.00 & 0.00±0.00 & 0.00±0.00 & 0.00±0.00 & 0.00±0.00 & 0.00±0.00 & 0.00±0.00 \\
\rowcolor{lightgray!40} algorithms & MLaEt-O & 0.00±0.00 & 0.00±0.00 & 0.00±0.00 & 0.00±0.00 & 0.00±0.00 & 0.00±0.00 & 0.00±0.00 & 0.00±0.00 & 0.00±0.00 \\
\bottomrule
\end{tabular}
\centering
\caption{\centering {\rev mDP violation ratio of different perturbation methods \\ Mean$\pm$1.96$\times$standard error; for the algorithm without getting the results, we label its results by ``--------''.}}
\label{tab:violation}
\end{table*}


\noindent \textbf{mDP violation rate.} We finally evaluate empirical mDP violation rates on the Rome road-map dataset, with results reported in Table~\ref{tab:violation}. The proposed tree-based variants, MLaEt-A, MLaEt-M, and MLaEt-O, achieve zero empirical violations across all tested granularities and privacy budgets. This is consistent with the theoretical design of the framework: local extension, overlap consistency, and successor-level preservation together prevent privacy errors from accumulating during recursive refinement. In contrast, LP-A has small but nonzero violation ratios, suggesting that its coarse-domain approximation may leave residual privacy violations. EM, Laplace, EM+RMP, MW, AIPO, and the proposed tree-based variants all achieve zero violation in the reported Rome settings. Together with the utility and
runtime results, these findings show that the proposed tree-based framework
offers a favorable balance among utility, efficiency, and empirical privacy
validity.


\begin{table}[h]
\centering
\caption{Effect of the seed-grid density on the three \textsc{MLaEt} variants and the LP-A baseline (Rome).}
\label{tab:seed_ablation}
\resizebox{\linewidth}{!}{
\begin{tabular}{lrrrrrrr}
\toprule
Grid dimensions & $5{\times}5$ & $7{\times}7$ & $9{\times}9$ & $11{\times}11$ & $13{\times}13$ & $15{\times}15$ & $17{\times}17$ \\
\midrule
\multicolumn{8}{l}{\textbf{\textsc{MLaEt}-A}} \\
Loss (km) & 7.2902 & 6.4534 & 5.9906 & 5.7265 & 5.5845 & 5.4542 & 5.6767 \\
Time (seconds) & 0.9248 & 1.3292 & 2.0040 & 3.2189 & 5.0931 & 8.8058 & 14.1384 \\
mDP violation ratio & 0.00 & 0.00 & 0.00 & 0.00 & 0.00 & 0.00 & 0.00 \\
\addlinespace
\hline
\multicolumn{8}{l}{\textbf{\textsc{MLaEt}-M}} \\
Loss (km) & 7.3541 & 6.4185 & 5.9390 & 5.7302 & 5.7552 & 5.6641 & 5.5718 \\
Time (seconds) & 0.9122 & 1.2934 & 1.9375 & 3.0449 & 4.8038 & 8.0187 & 13.0760 \\
mDP violation ratio & 0.00 & 0.00 & 0.00 & 0.00 & 0.00 & 0.00 & 0.00 \\
\addlinespace
\hline
\multicolumn{8}{l}{\textbf{\textsc{MLaEt}-O}} \\
Loss (km) & 6.7389 & 6.1400 & 5.5556 & 5.4801 & 5.4539 & 5.1714 & 5.2085 \\
Time (seconds) & 1.2254 & 2.0426 & 3.3818 & 5.4406 & 8.5851 & 13.6740 & 21.4490 \\
mDP violation ratio & 0.00 & 0.00 & 0.00 & 0.00 & 0.00 & 0.00 & 0.00 \\
\addlinespace
\hline
\multicolumn{8}{l}{LP-A} \\
Loss (km) & 10.2718 & 10.3325 & 10.5798 & 9.6375 & 9.3371 & 9.2056 & 9.1385 \\
Time (seconds) & 0.0622 & 0.3433 & 1.6355 & 6.0257 & 17.1640 & 48.2081 & 108.6403 \\
mDP violation ratio & 0.0334 & 0.0382 & 0.0326 & 0.0289 & 0.0254 & 0.0232 & 0.0203 \\
\bottomrule
\end{tabular}}
\end{table}


{\rev 
\noindent \textbf{Effect of seed-grid density.} 
To examine how seed-set selection affects performance, we conduct an ablation study of the three \textsc{MLaEt} variants and the LP-A baseline on the Rome dataset. We vary the seed-grid resolution from $5\times5$ to $17\times17$ while holding all other experimental settings fixed. A denser grid provides more seed records for representing the target domain but also increases the size of the seed-level optimization problem. Table~\ref{tab:seed_ablation} reports the resulting utility loss, computation time, and empirical mDP violation ratio.}

\begin{wrapfigure}{r}{0.20\textwidth}
\begin{minipage}{0.20\textwidth}
\centering
\subfigure{
\includegraphics[width=1.00\textwidth]{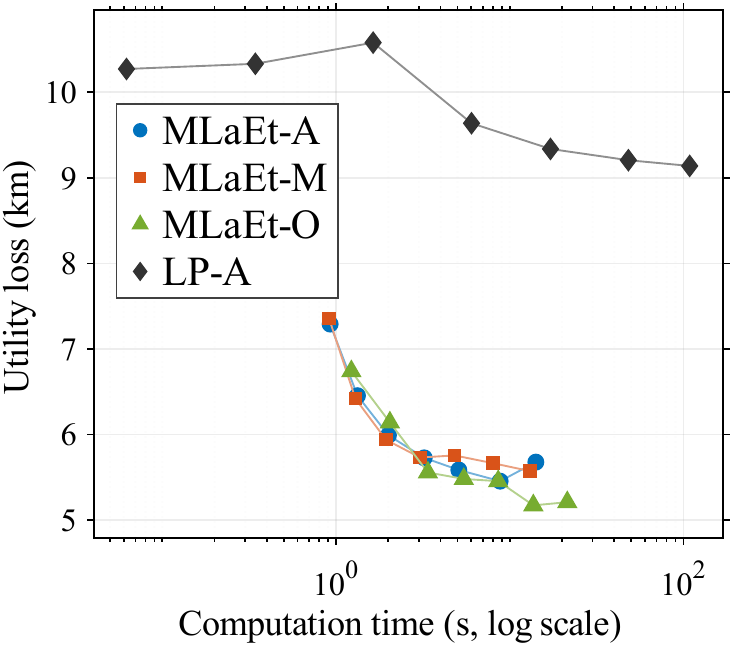}}
\caption{Utility vs. computation time (Rome).}
\label{fig:seed_tradeoff_scatter}
\end{minipage}
\end{wrapfigure}
The {\rev results show a clear utility--efficiency trade-off for the proposed variants. Increasing the seed-grid resolution generally reduces the utility loss of all three \textsc{MLaEt} variants, with most improvements occurring between $5\times5$ and $11\times11$--$13\times13$. Beyond this range, the gains diminish and are not strictly monotonic: \textsc{MLaEt}-A and \textsc{MLaEt}-O achieve their lowest losses at $15\times15$, while \textsc{MLaEt}-M reaches its lowest loss at $17\times17$. Computation time increases with seed density, while all three variants maintain zero empirical mDP violations across the tested settings. LP-A shows a less favorable trade-off: its runtime increases rapidly with seed density, exceeding 100 seconds at $17\times17$, while maintaining higher utility loss and nonzero empirical mDP violation ratios across all tested resolutions. Overall, the proposed \textsc{MLaEt} variants achieve a better utility--efficiency--privacy balance.}

{\rev Fig.~\ref{fig:seed_tradeoff_scatter} further illustrates the utility--efficiency trade-off by plotting utility loss against computation time on a logarithmic scale, where points closer to the lower-left indicate a better balance. Among the proposed variants, \textsc{MLaEt}-O generally achieves lower utility loss but requires more computation, while \textsc{MLaEt}-A and \textsc{MLaEt}-M exhibit similar utility and efficiency. LP-A shows higher utility loss across all tested seed densities, while its runtime grows much faster with seed density.}


\section{Conclusions and Discussions}
\label{sec:conclusion}

This paper presented a graph-based framework for extending mDP mechanisms from seed records to large, fine-grained target domains. We identified three {\rev necessary and sufficient} correctness requirements, local mDP, overlap consistency, and successor-level mDP preservation, and proved that they characterize a well-defined global mDP mechanism within our extension framework. We further instantiated the framework through a tree-based algorithm for multiresolution grid domains. Experiments on road-map datasets demonstrate a favorable utility--scalability--privacy trade-off. The proposed tree-based algorithms reduce utility loss compared with predefined and hybrid mechanisms, scale more effectively than computationally intensive optimization-based approaches, and maintain negligible empirical mDP violations. These results demonstrate the potential of extension-based design for scalable mDP mechanism construction over large fine-grained domains.

Several directions remain for future work. First, the framework can be extended
to higher-dimensional, non-grid, and irregular metric domains, where extension
graphs may be induced by meshes, road-network partitions, embeddings, or other
domain-specific structures. Second, adaptive seed selection could further
improve the utility--efficiency trade-off by placing more anchors in regions
with complex geometry, high prior probability, or large utility sensitivity.
Third, richer task-specific utility objectives could be incorporated, including
downstream prediction accuracy, task allocation quality, or application-aware
travel-cost metrics. Finally, it would be valuable to study tighter
normalization analysis, adaptive privacy-budget allocation across extension
levels, and deployment-oriented evaluations under realistic mobility traces and
stronger inference attacks. \looseness = -1

\section{Acknowledgements}
This paper was edited for grammar using ChatGPT. This research was partially supported by U.S. NSF grants CNS-2136948 and CNS-2313866.

\section*{Ethical Considerations}
We reviewed the venue's ethics guidance and conducted this study accordingly. The primary stakeholders include researchers and practitioners developing metric differential privacy (mDP) mechanisms, users whose locations may be protected, public data and infrastructure providers, and communities vulnerable to location disclosure. Our evaluation uses public OpenStreetMap road-network data for Rome, London, and New York. We conduct only offline perturbation simulations and utility evaluations; we collect no personal data, involve no human subjects, interact with no live platforms, and perform no re-identification. Therefore, the research process poses limited direct risk. The proposed framework may benefit researchers and practitioners by enabling scalable, utility-aware mDP mechanism design for large spatial domains. However, deployment risks remain. Inappropriate privacy budgets, poorly chosen distance metrics, violations of the framework's assumptions, or combination with auxiliary data could provide insufficient protection and enable profiling or tracking. To mitigate these risks, we document the framework's assumptions and calibration procedures, report empirical mDP violations, and recommend conservative privacy budgets and application-specific evaluation. Our artifacts contain only code, scripts, derived results, and public road-network data; no private trajectories or newly collected personal data are released.

We concluded that transparent publication offers a net ethical benefit because the study poses low direct risk while enabling community scrutiny and supporting more robust and auditable location-privacy mechanisms.

\section*{Open Science}
In compliance with the Open Science Policy, we provide all artifacts needed to evaluate the paper's core contributions in an anonymous, login-free repository:
\url{https://anonymous.4open.science/r/metric-DP-Extension-Algorithms-D0DF}.
The repository contains no personal identifiers and includes: 
\begin{itemize}
    \item \textbf{source code}, which implementations of the proposed methods and all evaluated baselines; 
    \item \textbf{datasets} used in the paper, with preprocessing scripts where applicable; and \item \textbf{documentation}, including environment information, reproduction commands, expected runtimes, and a minimal verification example. 
\end{itemize}
All artifacts required to reproduce the main experiments, figures, and tables are shared without licensing, privacy, safety, or responsible-disclosure restrictions.

\section*{Appendix}

\appendix
\setcounter{section}{0}

\section{Mathematical Notations}
\label{sec:notations}
\begin{table}[h]
\centering
\caption{Summary of main notation.}
\label{tab:notation}
\small
\setlength{\tabcolsep}{3pt}
\begin{tabular}{p{0.27\columnwidth} p{0.65\columnwidth}}
\toprule
\textbf{Symbol} & \textbf{Description} \\
\midrule

\multicolumn{2}{l}{\textbf{Mechanisms and mDP}} \\
$\mathcal{X}$ & Secret/input domain. \\
$\mathcal{Y}$ & Output/perturbed domain. \\
$\mathbf{x},\mathbf{x}'$ & Two secret records. \\
$\mathbf{y}$ & A perturbed output. \\
$\mathcal{M}$ & Randomized perturbation mechanism. \\
$\mathcal{M}(\mathbf{y}\mid \mathbf{x})$ & Probability of outputting $y$ given input $x$. \\
$d(\cdot,\cdot)$ & Distance metric on the secret domain. \\
$d_p(\cdot,\cdot)$ & $\ell_p$-metric for grid/location records. \\
$\epsilon$ & Total privacy budget. \\
$f_{\mathbf{y}}(\mathbf{x})$ & Log-probability function, $f_{\mathbf{y}}(\mathbf{x})=\ln \mathcal{M}(\mathbf{y}\mid \mathbf{x})$. \\

\midrule
\multicolumn{2}{l}{\textbf{Extension framework}} \\
$\mathcal{X}_{\mathrm{seed}}$ & Seed records with initially specified or optimized mechanisms. \\
$\mathcal{X}_{\mathrm{tar}}$ & Target records protected by the final mechanism. \\
$\mathcal{G}=(\mathcal{V},\mathcal{E},\{\mathcal{X}_v\})$ & Extension graph. \\
$\mathcal{V},\mathcal{E}$ & Node set and directed edge set. \\
$v,u,w$ & Nodes in the extension graph. \\
$\mathcal{X}_v$ & Record set associated with node $v$. \\
$\mathrm{Succ}(v)$ & Direct successors of $v$. \\
$\mathrm{Pred}(v)$ & Direct predecessors of $v$. \\
$\mathrm{Desc}(v)$ & Descendants of $v$, including $v$. \\
$\mathcal{M}_{v\to u}$ & Edge-local mechanism from $v$ to successor $u$. \\
$\mathrm{Ext}_v$ & Local extension operator at node $v$. \\
\midrule
\multicolumn{2}{l}{\textbf{Tree/grid extension}} \\
$L$ & Maximum extension level. \\
$r$ & Grid level, $r\in\{0,\ldots,L\}$. \\
$\mathcal{C}_r$ & Retained grid cells at level $r$. \\
$h_r$ & Side length of level-$r$ cells. \\
$N_r$ & extension factor between grid levels. \\
$\mathrm{corner}(v)$ & Corner vertices of cell/node $v$. \\
$\mathcal{X}_{\mathrm{sub}}(v)$ & Records in the subtree rooted at $v$. \\

\midrule
\multicolumn{2}{l}{\textbf{Interpolation}} \\
$\epsilon_t$ & Coordinate-wise privacy budget. \\
$f_{t,\mathbf{y}}$ & One-dimensional interpolation function in log-probability space. \\
$I_t$ & Interpolation interval along coordinate $t$. \\
$\lambda_t(\mathbf{x})$ & One-dimensional interpolation weight of point $x$. \\
$w_\gamma(\mathbf{x})$ & Multilinear interpolation weight for corner index $\gamma$. \\
$f_v(\mathbf{y}\mid \mathbf{x})$ & Pre-normalization interpolated value at node $v$. \\
$\gamma\in\{0,1\}^2$ & Corner index of a 2D cell. \\
\bottomrule
\end{tabular}
\end{table}

\section{Omitted Proofs}
\subsection{Proof of Lemma \ref{lem:descendant-closure}}

\begin{relemma}[Descendant closure of cross-node mDP]
Consider two possibly identical nodes $u,w\in\mathcal{V}$ whose associated mechanisms
satisfy
\begin{equation}
\mathcal{M}_u(\mathbf{x})
\stackrel{\epsilon}{\approx}
\mathcal{M}_w(\mathbf{x}'),
\qquad
\forall\mathbf{x}\in\mathcal{X}_u,\quad
\forall\mathbf{x}'\in\mathcal{X}_w.
\label{eq:lemma-premise}
\end{equation}
Suppose that \emph{(A1)--(A3)} hold along all subsequent extensions from
$u$ and $w$. Then, for every pair of
respective descendants $u''\in\mathrm{Desc}(u)$ and
$w''\in\mathrm{Desc}(w)$, their associated mechanisms satisfy
\begin{equation}
\mathcal{M}_{\mathrm{par}(u'')\to u''}(\mathbf{x})\ \stackrel{\epsilon}{\approx}\ \mathcal{M}_{\mathrm{par}(w'')\to w''}(\mathbf{x}'),
~\forall \mathbf{x}\in\mathcal{X}_{u''},\ \forall \mathbf{x}'\in\mathcal{X}_{w''}.
\label{eq:lem-conclusion}
\end{equation}
\end{relemma}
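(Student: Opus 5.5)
We argue by induction on the combined depth of the two descendants. Fix $u''\in\mathrm{Desc}(u)$ and $w''\in\mathrm{Desc}(w)$. Choose directed paths $u=u_0\to u_1\to\cdots\to u_a=u''$ and $w=w_0\to w_1\to\cdots\to w_b=w''$ in $\mathcal{G}$; such paths exist by the definition of $\mathrm{Desc}(\cdot)$. We prove, by induction on $a+b$, that the node mechanisms on $u_i$ and $w_j$ satisfy the cross-node relation $\stackrel{\epsilon}{\approx}$ over $\mathcal{X}_{u_i}\times\mathcal{X}_{w_j}$ for all $i\le a$, $j\le b$. Throughout, we identify the node mechanism $\mathcal{M}_{z}$ with the edge mechanism $\mathcal{M}_{\mathrm{par}(z)\to z}$; this identification is legitimate by (A2), in particular Eq.~\eqref{eq:A2-multiparent}. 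It makes $\mathcal{M}_z$ independent of which predecessor (and hence which path) is used. The base case $a=b=0$ is exactly the premise Eq.~\eqref{eq:lemma-premise}.

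\textbf{Case 1: $a=b$.} Suppose the relation holds for $(u_i,w_i)$. Since $u_i$ and $w_i$ are reached through edges $(u_{i-1},u_i)$ and $(w_{i-1},w_i)$, the premise Eq.~\eqref{eq:A3-premise} of (A3) is satisfied. Applying (A3) with $u'=u_{i+1}$ and $w'=w_{i+1}$ then gives the relation for $(u_{i+1},w_{i+1})$.

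\textbf{Case 2: $a\neq b$.} Assume, say, $a>b$. We would first advance both paths jointly to the pair $(u_b,w_b)$, and then advance only the $u$-side. One step on the $u$-side alone is not literally covered by (A3), which extends both sides at once. We handle it by treating $w_b$ as fixed. Two devices are available:
\begin{itemize}
\item If $w_b$ has a successor $w'$ with $\mathcal{X}_{w_b}\subseteq\mathcal{X}_{w'}$, we use (A3) with that successor. The relation for $(u_{i+1},w')$ then restricts to $(u_{i+1},w_b)$. The inclusion is supplied by (A2) together with the extension coverage condition (iii), which gives $\mathcal{X}_{w_b}\subseteq\bigcup_{w'\in\mathrm{Succ}(w_b)}\mathcal{X}_{w'}$ with consistent values on each piece.
\item Otherwise, we read (A3) as allowing the trivial ``stay'' extension, i.e.\ a path of length zero on one side, as the phrase ``possibly empty'' in the definition of $\mathrm{Desc}$ suggests.
\end{itemize}
The identical-node case $u=w$ needs care when $u''$ and $w''$ are siblings, or lie on divergent branches below a common ancestor. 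For siblings, (A1) gives the cross-successor relation directly, via Eq.~\eqref{eq:A1-local-cross}. For divergent branches, we take the sibling pair produced at the branching point as a new base pair and apply Case 1 and Case 2 from there.

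Finally, we would assemble the result. Each $\mathbf{x}\in\mathcal{X}_{u''}$ and $\mathbf{x}'\in\mathcal{X}_{w''}$ gets its distribution from the edge mechanisms into $u''$ and $w''$. The inductive invariant then yields Eq.~\eqref{eq:lem-conclusion}. Finiteness and acyclicity of $\mathcal{G}$ ensure the induction terminates.

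The main obstacle is the unequal-depth step in Case 2. (A3) as stated advances both nodes simultaneously, so propagating the relation when only one side is refined requires the coverage-plus-consistency argument above, or a mild reading of (A3) that includes zero-length extensions. I would try the coverage argument first. It reduces a one-sided step to a two-sided step followed by restriction, which the definitions appear to support.
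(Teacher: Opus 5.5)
Your proposal is essentially the paper's proof. The paper also pairs the two paths, inducts step by step with (A3), handles unequal lengths exactly by your second device (padding the shorter path with identity ``stay'' extensions that it reads into (A3)), and invokes (A2) so that node mechanisms do not depend on the incoming edge. Your coverage-plus-restriction device is a small refinement that confines the identity reading to the case where the shallower endpoint is a leaf, where that reading is genuinely unavoidable; it must be done pointwise, since (iii) only covers $\mathcal{X}_{w_b}$ by the \emph{union} of its successors, so you pick a successor containing each $\mathbf{x}'$ and use (A2). Your separate treatment of $u=w$ is unnecessary, because (A3) with $u=w$ already ranges over all pairs of successors, siblings included.
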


\begin{proof}
Fix descendants $u''\in\mathrm{Desc}(u)$ and
$w''\in\mathrm{Desc}(w)$, and choose directed paths
\begin{align}
u=u_0 \to u_1 \to \cdots \to u_{T_u}=u'',\\
w=w_0 \to w_1 \to \cdots \to w_{T_w}=w''.
\end{align}
Let $T=\max\{T_u,T_w\}$. If the paths have different lengths, pad the
shorter path with the identity extensions permitted by
\textbf{\emph{(A3)}}. Such steps leave the corresponding node and its
mechanism unchanged. We may therefore write
\begin{align}
u=u_0 \to u_1 \to \cdots \to u_T=u'',\\
w=w_0 \to w_1 \to \cdots \to w_T=w''.
\end{align}

We prove by induction on $t=0,\ldots,T$ that
\begin{equation}
\mathcal{M}_{u_t}(\mathbf{x})
\stackrel{\epsilon}{\approx}
\mathcal{M}_{w_t}(\mathbf{x}'),
\qquad
\forall\mathbf{x}\in\mathcal{X}_{u_t},\quad
\forall\mathbf{x}'\in\mathcal{X}_{w_t}.
\label{eq:induction-claim}
\end{equation}

For $t=0$, this is exactly the premise in
Eq.~\eqref{eq:lemma-premise}. Suppose that
Eq.~\eqref{eq:induction-claim} holds for some $t<T$. By construction,
$u_{t+1}$ is either a direct successor of $u_t$ or an identity extension
of $u_t$, and similarly for $w_{t+1}$. Applying
successor-level mDP preservation \textbf{\emph{(A3)}} gives
\begin{equation}
\mathcal{M}_{u_{t+1}}(\mathbf{x})
\stackrel{\epsilon}{\approx}
\mathcal{M}_{w_{t+1}}(\mathbf{x}'),
\qquad
\forall\mathbf{x}\in\mathcal{X}_{u_{t+1}},\quad
\forall\mathbf{x}'\in\mathcal{X}_{w_{t+1}},
\end{equation}
which establishes Eq.~\eqref{eq:induction-claim} for $t+1$.

Applying the induction through $t=T$ yields
\[
\mathcal{M}_{u''}(\mathbf{x})
\stackrel{\epsilon}{\approx}
\mathcal{M}_{w''}(\mathbf{x}'),
\qquad
\forall\mathbf{x}\in\mathcal{X}_{u''},\quad
\forall\mathbf{x}'\in\mathcal{X}_{w''},
\]
which is Eq.~\eqref{eq:lem-conclusion}. If either descendant has multiple
incoming edges, overlap consistency \textbf{\emph{(A2)}} ensures that
its associated node mechanism is independent of the selected incoming
edge.
\end{proof}

\subsection{Proof of Theorem \ref{thm:global-mdp}}
\label{subsec:proof:thm:global-mdp}
\begin{retheorem}[Global $(\epsilon,d)$-mDP guarantee via extension algorithms]
Consider a metric space $(\mathcal{X},d)$, a target set $\mathcal{X}_{\mathrm{tar}}\subseteq\mathcal{X}$, and a finite extension graph
$\mathcal{G}=(\mathcal{V},\mathcal{E}, \{\mathcal{X}_v\}_{v\in\mathcal{V}})$ whose local regions cover $\mathcal{X}_{\mathrm{tar}}$ and whose relevant nodes are reachable from source nodes
$\mathcal{V}_0=\{v\in\mathcal{V}:\mathrm{Pred}(v)=\varnothing\}$. Suppose the source regions are equipped with initial mechanisms $\{\mathcal{M}_v\}_{v\in\mathcal{V}_0}$ that agree with the prescribed seed mechanism $\mathcal{M}_{\mathrm{seed}}$ on $\mathcal{X}_{\mathrm{seed}}$, are consistent on overlapping source regions, and jointly satisfy $(\epsilon,d)$-mDP over $\bigcup_{v\in\mathcal{V}_0}\mathcal{X}_v$.

Assume that the extension algorithm propagates these source mechanisms through $\mathcal{G}$ and produces edge-local mechanisms $\{\mathcal{M}_{v\to u}\}_{(v,u)\in\mathcal{E}}$. The induced global mechanism $\mathcal{M}$ in Definition~\ref{def:globalmech} is well defined and satisfies $(\epsilon,d)$-mDP on $\mathcal{X}_{\mathrm{tar}}$ if and only if the source and edge-local mechanisms satisfy \textbf{\emph{(A1)--(A3)}}.
\end{retheorem}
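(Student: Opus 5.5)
The proof will handle the two directions separately, with most of the work in sufficiency. For sufficiency, I will first establish well-definedness. Take any $\mathbf{x}\in\mathcal{X}_{\mathrm{tar}}$. By target coverage in Definition~\ref{def:extension-dag}, $\mathbf{x}$ lies in some $\mathcal{X}_u$. If two local assignments both contain $\mathbf{x}$, there are three cases. Two source assignments agree by the assumed source-level compatibility. Two edge assignments $\mathcal{M}_{p\to v}$ and $\mathcal{M}_{q\to w}$ agree by \textbf{(A2)}, Eq.~\eqref{eq:A2-overlap-general}. A mixed source/edge pair needs a separate argument: I would use extension coverage (iii) together with the inheritance built into the extension operators, meaning successor mechanisms reproduce the already-fixed source values. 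If that is not available, I would treat the source mechanism as an edge from a virtual root so that (A2) applies directly. Either way, $\mathcal{M}$ is a valid distribution at every record.

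Next comes the mDP inequality. Fix $\mathbf{x},\mathbf{x}'\in\mathcal{X}_{\mathrm{tar}}$, $\mathbf{x}\in\mathcal{X}_{u''}$, $\mathbf{x}'\in\mathcal{X}_{w''}$. Since the relevant nodes are reachable from sources, choose source ancestors $u,w\in\mathcal{V}_0$ with $u''\in\mathrm{Desc}(u)$ and $w''\in\mathrm{Desc}(w)$. The source mechanisms jointly satisfy $(\epsilon,d)$-mDP over $\bigcup_{v\in\mathcal{V}_0}\mathcal{X}_v$, so the premise Eq.~\eqref{eq:lemma-premise} of Lemma~\ref{lem:descendant-closure} holds for $(u,w)$. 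This includes the case $u=w$, where it is just restriction. Applying the lemma then gives $\mathcal{M}_{\mathrm{par}(u'')\to u''}(\mathbf{x})\stackrel{\epsilon}{\approx}\mathcal{M}_{\mathrm{par}(w'')\to w''}(\mathbf{x}')$. By well-definedness these two distributions equal $\mathcal{M}(\cdot\mid\mathbf{x})$ and $\mathcal{M}(\cdot\mid\mathbf{x}')$, which is Eq.~\eqref{eq:mDP_discrete}. I will also check where \textbf{(A1)} enters. When $u''$ and $w''$ are siblings or sit at different depths, the induction in the lemma needs the first step out of a common node to be controlled, and that is exactly (A1). When a path is padded with identity steps, the padded comparison pairs a node with a successor of the other node, and (A1) together with (A3) covers this.

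For necessity, assume $\mathcal{M}$ is well defined and $(\epsilon,d)$-mDP on $\mathcal{X}_{\mathrm{tar}}$. Every edge mechanism is then a restriction of $\mathcal{M}$, because each local assignment equals $\mathcal{M}$ on its region. (A2) follows because any two edge mechanisms containing a shared record both equal $\mathcal{M}(\cdot\mid\mathbf{x})$. (A1) follows by restricting global mDP to $\mathcal{X}_u\cup\mathcal{X}_w$ for $u,w\in\mathrm{Succ}(v)$. (A3) holds because its conclusion Eq.~\eqref{eq:A3-conclusion} is a restriction of global mDP to $\mathcal{X}_{u'}\times\mathcal{X}_{w'}$, and this is true whatever the premise says.

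The main obstacle is the sufficiency step that links arbitrary records. Lemma~\ref{lem:descendant-closure} is stated with identity padding and with node mechanisms $\mathcal{M}_u$, whereas the theorem uses edge mechanisms. I need to make sure the chain of premises actually starts at source nodes and that mixed source/edge overlaps are covered by (A2). I would handle this by adjoining a virtual root whose single edge to each source node carries the source mechanism. This turns every assignment into an edge assignment and makes source compatibility a special case of (A2).
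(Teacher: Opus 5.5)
Your route is the paper's: source compatibility plus \textbf{(A2)} gives well-definedness, source ancestors plus Lemma~\ref{lem:descendant-closure} give the mDP inequality, and restriction of the global mechanism gives necessity. Two of your additions tidy up points the paper glosses over: the virtual root, and your cross-sibling version of \textbf{(A1)}-necessity. The gap is your claim that padded identity steps are ``covered by (A1) together with (A3)''. \textbf{(A1)} only relates records in successor regions of a common parent. \textbf{(A3)} always advances \emph{both} nodes by one edge. So neither, by itself, relates $u_{k+1}$ to an unmoved $w$. What you can derive is weaker: if $w$ has successors, \textbf{(A3)} applied to $(u_k,w)$ gives the relation for $(u_{k+1},w')$ for every $w'\in\mathrm{Succ}(w)$. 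Extension coverage (iii) together with \textbf{(A2)} then converts this into the relation for $(u_{k+1},w)$. So the shorter path must be padded at its start, not at its end. That is, hold its source fixed while the longer path advances, then proceed in lockstep; padding at the end can hold a leaf fixed, which this step cannot handle.

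Even then, a source node with no successors cannot be held fixed, and in that case the ``if'' direction genuinely fails. Here is a counterexample:
\begin{itemize}
\item Put $b$ at the origin of $\mathbb{R}^2$ and $a,c$ on the unit circle at a small angle.
\item Take sources $s,t$ with $\mathcal{X}_s=\{a\}$ (no successors) and $\mathcal{X}_t=\{b\}$, and a single edge $t\to t'$ with $\mathcal{X}_{t'}=\{b,c\}$.
\item Let $\mathcal{M}(\cdot\mid a)=(2/3,1/3)$, $\mathcal{M}(\cdot\mid b)=(1/2,1/2)$, $\mathcal{M}(\cdot\mid c)=(1/3,2/3)$, and $\epsilon=\ln(3/2)$.
\end{itemize}
Source mDP, \textbf{(A1)} and \textbf{(A2)} hold, and \textbf{(A3)} is vacuous because $\mathrm{Succ}(s)=\varnothing$. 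Yet the likelihood ratio between $a$ and $c$ is $2>(3/2)^{d(a,c)}$.

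Citing Lemma~\ref{lem:descendant-closure} does not rescue this. Its proof pads with ``identity extensions permitted by (A3)'', which \textbf{(A3)} does not provide, and the lemma fails for $u=t$, $w=s$ in this example. You need an extra hypothesis, for example either of these:
\begin{itemize}
\item every source node has a successor, which holds in the grid tree because all source cells lie at level $r_{\mathrm{src}}<L$;
\item a one-sided form of \textbf{(A3)}.
\end{itemize}
With such a hypothesis, start-padding followed by lockstep induction completes your sufficiency argument.
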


\begin{proof}[Proof of Theorem~\ref{thm:global-mdp}]
We prove the sufficiency and necessity directions separately.

\medskip
\noindent\textbf{Sufficiency.}
Suppose that the source and edge-local mechanisms satisfy
\textbf{\emph{(A1)--(A3)}}. We first show that they induce a well-defined
global mechanism. Let $\mathbf{x}\in\mathcal{X}_{\mathrm{tar}}$ and
suppose that two local mechanisms assign distributions to $\mathbf{x}$.
Each assignment may come from a source mechanism $\mathcal{M}_v$, where
$v\in\mathcal{V}_0$, or an edge-local mechanism
$\mathcal{M}_{p\to v}$, where $(p,v)\in\mathcal{E}$. If both assignments
come from source mechanisms, they agree by source-level consistency.
Otherwise, overlap consistency \textbf{\emph{(A2)}} ensures that they
agree on $\mathbf{x}$. Therefore, the distribution
$\mathcal{M}(\mathbf{x})$ is independent of the selected local
assignment, and the induced global mechanism is well defined.

We next establish the global $(\epsilon,d)$-mDP guarantee. Fix arbitrary
$\mathbf{x},\mathbf{x}'\in\mathcal{X}_{\mathrm{tar}}$. By target
coverage, choose nodes $v,z\in\mathcal{V}$ whose associated mechanisms
define their distributions, with
$\mathbf{x}\in\mathcal{X}_v$ and
$\mathbf{x}'\in\mathcal{X}_z$. If both records belong to the same local
region, the required relation follows directly from
\textbf{\emph{(A1)}} or, for a source region, from the source-level mDP
assumption.

Otherwise, by source reachability, choose source nodes
$s,t\in\mathcal{V}_0$ such that
$v\in\mathrm{Desc}(s)$ and $z\in\mathrm{Desc}(t)$, where each node is
regarded as its own depth-zero descendant. Because the source mechanisms
jointly satisfy $(\epsilon,d)$-mDP,
\[
\mathcal{M}_s(\tilde{\mathbf{x}})
\stackrel{\epsilon}{\approx}
\mathcal{M}_t(\tilde{\mathbf{x}}'),
\qquad
\forall\tilde{\mathbf{x}}\in\mathcal{X}_s,\quad
\forall\tilde{\mathbf{x}}'\in\mathcal{X}_t.
\]
Applying Lemma~\ref{lem:descendant-closure} to the source nodes $s$ and
$t$ propagates this relation to their respective descendants $v$ and
$z$. Therefore,
\[
\mathcal{M}_v(\mathbf{x})
\stackrel{\epsilon}{\approx}
\mathcal{M}_z(\mathbf{x}').
\]

By the definition and well-definedness of the induced global mechanism,
the node-level distributions coincide with
$\mathcal{M}(\mathbf{x})$ and $\mathcal{M}(\mathbf{x}')$, respectively.
Hence,
\[
\mathcal{M}(\mathbf{x})
\stackrel{\epsilon}{\approx}
\mathcal{M}(\mathbf{x}').
\]
Because $\mathbf{x}$ and $\mathbf{x}'$ were arbitrary, $\mathcal{M}$
satisfies $(\epsilon,d)$-mDP on $\mathcal{X}_{\mathrm{tar}}$.

\medskip
\noindent\textbf{Necessity.}
Conversely, suppose that the source and edge-local mechanisms jointly
induce a well-defined global mechanism $\mathcal{M}$ satisfying
$(\epsilon,d)$-mDP on $\mathcal{X}_{\mathrm{tar}}$, with every local
mechanism coinciding with the restriction of $\mathcal{M}$ to its
corresponding region. We show that
\textbf{\emph{(A1)--(A3)}} must hold.

First, consider any edge-local mechanism
$\mathcal{M}_{v\to u}$ and any
$\mathbf{x},\mathbf{x}'\in\mathcal{X}_u$. Because
$\mathcal{M}_{v\to u}$ is the restriction of $\mathcal{M}$ to
$\mathcal{X}_u$, global mDP gives
\[
\mathcal{M}_{v\to u}(\mathbf{x})
\stackrel{\epsilon}{\approx}
\mathcal{M}_{v\to u}(\mathbf{x}').
\]
Thus, local mDP validity \textbf{\emph{(A1)}} is necessary.

Second, suppose that two local mechanisms are defined on a shared record
$\mathbf{x}$. Since both coincide with the same global mechanism at
$\mathbf{x}$, they must assign the identical distribution
$\mathcal{M}(\mathbf{x})$. Therefore, overlap consistency
\textbf{\emph{(A2)}} is necessary.

Finally, consider any two successor regions
$\mathcal{X}_{u'}$ and $\mathcal{X}_{w'}$ appearing in
\textbf{\emph{(A3)}}. For every
$\mathbf{x}\in\mathcal{X}_{u'}$ and
$\mathbf{x}'\in\mathcal{X}_{w'}$, global mDP implies
\[
\mathcal{M}(\mathbf{x})
\stackrel{\epsilon}{\approx}
\mathcal{M}(\mathbf{x}').
\]
Because the successor mechanisms are restrictions of $\mathcal{M}$,
this is equivalent to
\[
\mathcal{M}_{u\to u'}(\mathbf{x})
\stackrel{\epsilon}{\approx}
\mathcal{M}_{w\to w'}(\mathbf{x}').
\]
Thus, the relation required by successor-level mDP preservation holds
whenever its premise holds, establishing the necessity of
\textbf{\emph{(A3)}}.

Therefore, within the proposed extension framework,
\textbf{\emph{(A1)--(A3)}} are both necessary and sufficient  for the
local mechanisms to induce a well-defined global $(\epsilon,d)$-mDP
mechanism on $\mathcal{X}_{\mathrm{tar}}$.
\end{proof}

\DEL{
\subsection{Proof of Lemma \ref{lem:iff_1d_multi_interval}}
\begin{proof}
We prove both directions.

\paragraph{(ii) $\Rightarrow$ (i) (Sufficiency).}
Since $f$ is absolutely continuous on $[t_1,t_m]$, for any $s,t\in[t_1,t_m]$ we have the fundamental theorem of calculus:
\begin{equation}
f(t)-f(s)=\int_s^t \frac{\mathrm{d}f(u)}{\mathrm{d}u}\,du.
\end{equation}
Taking absolute values and applying the triangle inequality yields
\begin{equation}
|f(t)-f(s)|
=
\left|\int_s^t \frac{\mathrm{d}f(u)}{\mathrm{d}u}\,du\right|
\le
\int_s^t \left|\frac{\mathrm{d}f(u)}{\mathrm{d}u}\right|\,du.
\end{equation}
Using the slope cap $\left|\frac{\mathrm{d}f(u)}{\mathrm{d}u}\right|\le \epsilon_t$ for a.e.\ $u$, we obtain
\begin{equation}
\int_s^t \left|\frac{\mathrm{d}f(u)}{\mathrm{d}u}\right|\,du \le \int_s^t \epsilon_t\,du = \epsilon_t |t-s|.
\end{equation}
Hence $|f(t)-f(s)|\le \epsilon_t |t-s|$ for all $s,t$, i.e., $f$ is $\epsilon_t$-Lipschitz.

\paragraph{(i) $\Rightarrow$ (ii) (Necessity).}
Assume $f$ is $\epsilon_t$-Lipschitz on $[t_1,t_m]$.
By Rademacher's theorem, every Lipschitz function on an interval is differentiable almost everywhere;
thus $\frac{\mathrm{d}f(t)}{\mathrm{d}t}$ exists for a.e.\ $t\in(t_1,t_m)$.
Fix any point $t$ where $f$ is differentiable. Then
\begin{equation}
\left|\frac{\mathrm{d}f(t)}{\mathrm{d}t}\right|
=
\lim_{h\to 0}\left|\frac{f(t+h)-f(t)}{h}\right|
\le
\lim_{h\to 0}\frac{\epsilon_t |h|}{|h|}
=
\epsilon_t,
\end{equation}
where we used the Lipschitz inequality $|f(t+h)-f(t)|\le \epsilon_t |h|$ in the middle step.
Therefore $\left|\frac{\mathrm{d}f(t)}{\mathrm{d}t}\right|\le \epsilon_t$ at every differentiability point, hence for almost every $t\in(t_1,t_m)$.

\end{proof}
}

\subsection{Proof of Proposition \ref{prop:1d-unified}}
\label{subsec:proof:prop:1d-unified}

\begin{lemma}[One-Dimensional mDP-Preserving Interpolation]
\label{lem:1d-iff-single}
Fix an output {\rev $\mathbf{y} \in \mathcal{Y}$} and a coordinate~$t$.
Consider a one-dimensional interval $[a,b] \subset \mathbb{R}$ along coordinate~$t$
(with all other coordinates held fixed),
and let two anchor records $\mathbf{x}_i, \mathbf{x}_{i'}$ satisfy
$x_{i,t} = a$, $x_{i',t} = b$.
Let {\rev $g:[a,b]\to\mathbb{R}$} be an interpolation curve satisfying
the endpoint constraints
\begin{equation}
  {\rev g(a) = \ln \mathcal{M}(\mathbf{y} \mid \mathbf{x}_i), \qquad
  g(b) = \ln \mathcal{M}(\mathbf{y} \mid \mathbf{x}_{i'}).}
\end{equation}
Assume {\rev $g$} is absolutely continuous on $[a,b]$.\footnote{%
Absolute continuity is a mild regularity condition satisfied by all
standard interpolation schemes (e.g., linear, log-convex, spline).
It guarantees the fundamental theorem of calculus
and rules out pathological (e.g., Cantor-function-type) interpolants.}
Then the following are equivalent:
\begin{enumerate}[label=(\roman*)]
  \item {\rev $g$ is $\epsilon_t$-Lipschitz} on $[a,b]$, i.e.,
  \begin{equation}
    {\rev |g(v) - g(u)| \leqslant \epsilon_t \,|v - u|,
    \qquad \forall\, u,v \in [a,b].}
  \end{equation}
  \item {\rev $g$ is differentiable} almost everywhere on $(a,b)$
  and its derivative satisfies
  \begin{equation}
    {\rev \left|\frac{\mathrm{d}g(u)}{\mathrm{d}u}\right| \leqslant \epsilon_t
    \quad \text{for almost every } u \in (a,b).}
  \end{equation}
\end{enumerate}
Consequently, defining {\rev $\ln f(\mathbf{y} \mid \mathbf{x}(u)) := g(u)$}
yields a one-dimensional $(\epsilon_t, d_1)$-Lipschitz log-probability
(and hence one-dimensional mDP along coordinate~$t$) on this interval.
\end{lemma}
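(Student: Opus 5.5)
The plan is to prove the two implications separately and then read off the ``consequently'' clause from (i). Throughout, the only structural facts used are that $g$ is absolutely continuous on $[a,b]$ and that the record $\mathbf{x}(u)$ differs from the anchors only in coordinate $t$. The latter gives $d_1(\mathbf{x}(u),\mathbf{x}(v))=|u-v|$.

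\emph{(ii) $\Rightarrow$ (i).} I would invoke the fundamental theorem of calculus for absolutely continuous functions, which is exactly what the regularity assumption buys. For any $u,v\in[a,b]$ with $u\le v$,
\[
g(v)-g(u)=\int_u^v g'(s)\,\mathrm{d}s .
\]
Taking absolute values and bounding the integrand by $\epsilon_t$ almost everywhere gives $|g(v)-g(u)|\le \epsilon_t|v-u|$. Sets of measure zero do not affect the Lebesgue integral, so the ``a.e.'' qualifier in (ii) costs nothing.

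\emph{(i) $\Rightarrow$ (ii).} A Lipschitz function on an interval is absolutely continuous, so it is differentiable almost everywhere. This follows from Lebesgue's differentiation theorem for AC (indeed BV) functions, or equivalently from Rademacher's theorem in one dimension. At any point $u$ of differentiability, the difference quotient satisfies
\[
\bigl|(g(u+h)-g(u))/h\bigr|\le \epsilon_t
\]
for all small $h\neq 0$, by (i). Passing to the limit yields $|g'(u)|\le\epsilon_t$, and this holds at almost every $u$.

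\emph{Consequence.} Define $\ln f(\mathbf{y}\mid\mathbf{x}(u)) := g(u)$. Then (i) reads
\[
|\ln f(\mathbf{y}\mid\mathbf{x}(u))-\ln f(\mathbf{y}\mid\mathbf{x}(v))|\le \epsilon_t\, d_1(\mathbf{x}(u),\mathbf{x}(v)).
\]
This is the $(\epsilon_t,d_1)$-Lipschitz bound along coordinate $t$. Exponentiating gives the one-dimensional mDP inequality of Eq.~\eqref{eq:mDP_discrete} for the (pre-normalization) values.

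\emph{Main obstacle.} The only genuinely delicate point is the necessity direction: one must justify a.e.\ differentiability without assuming smoothness. I would handle it by citing Rademacher/Lebesgue rather than proving it. In the sufficiency direction, the care needed is to stress that absolute continuity, and not mere a.e.\ differentiability, is required. A Cantor-type function has zero derivative a.e.\ yet is not Lipschitz, which is why the footnoted assumption cannot be dropped.
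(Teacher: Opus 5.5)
Your proposal is correct and follows essentially the same route as the paper: the fundamental theorem of calculus for absolutely continuous $g$ gives (ii)$\Rightarrow$(i), and a.e.\ differentiability (Rademacher) together with the difference-quotient bound at points of differentiability gives (i)$\Rightarrow$(ii). Your explicit derivation of the ``consequently'' clause from $d_1(\mathbf{x}(u),\mathbf{x}(v))=|u-v|$ is something the paper leaves implicit, and so is your caveat that exponentiating yields mDP only for the pre-normalization values.
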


\begin{proof}
We prove both directions.

\medskip\noindent
\emph{(ii)\,$\Rightarrow$\,(i) (Sufficiency).}
Since {\rev $g$} is absolutely continuous on $[a,b]$,
the fundamental theorem of calculus gives,
for any {\rev $u,v\in[a,b]$},
\begin{equation}
  {\rev g(v) - g(u) = \int_u^v \frac{\mathrm{d}g(w)}{\mathrm{d}w}\,dw.}
\end{equation}
Taking absolute values and using the derivative bound:
{\rev 
\begin{eqnarray}
|g(v)-g(u)|&=& \biggl|\int_u^v \frac{\mathrm{d}g(w)}{\mathrm{d}w}\,dw\biggr| \\
&\leqslant& \int_{\min\{u,v\}}^{\max\{u,v\}} \left|\frac{\mathrm{d}g(w)}{\mathrm{d}w}\right|\,dw
\\
&\leqslant& \int_{\min\{u,v\}}^{\max\{u,v\}} \epsilon_t\,dw
\\ 
&=& \epsilon_t\,|v-u|.
\end{eqnarray}}
\medskip\noindent
\emph{(i)\,$\Rightarrow$\,(ii) (Necessity).}
If {\rev $g$} is $\epsilon_t$-Lipschitz,
then by Rademacher's theorem {\rev $g$} is differentiable
almost everywhere on $(a,b)$.
At every point of differentiability,
\begin{equation}
  {\rev \left|\frac{\mathrm{d}g(u)}{\mathrm{d}u}\right|
  = \lim_{h\to 0}\biggl|\frac{g(u+h)-g(u)}{h}\biggr|
  \leqslant \lim_{h\to 0}\frac{\epsilon_t|h|}{|h|}
  = \epsilon_t,}
\end{equation}
where we used {\rev $|g(u+h)-g(u)|\leqslant\epsilon_t|h|$}.
Hence {\rev $\left|\frac{\mathrm{d}g(u)}{\mathrm{d}u}\right|\leqslant\epsilon_t$} for a.e.\ {\rev $u\in(a,b)$}.
\end{proof}
\begin{remark}[Why anchors and multiple intervals do not change the IFF]
The equivalence in Lemma~\ref{lem:1d-iff-single} is a
\emph{global} characterization on $[x^{(1)}_t,\, x^{(m)}_t]$
and does not depend on how {\rev $g$} is constructed
(e.g., via different formulas on different sub-intervals
$[x^{(r)}_t, x^{(r+1)}_t]$).
The anchor constraints
{\rev $g(x^{(r)}_t) = \ln \mathcal{M}(\mathbf{y} \mid \mathbf{x}^{(r)})$}
merely restrict the feasible set of interpolants;
they do not affect the Lipschitz--derivative equivalence.
\end{remark}

We now generalize to a multi-anchor setting, which is the
situation encountered in the tree-based extension:
each one-dimensional line through the grid passes through multiple anchor
records, and the interpolation may use different formulas on
different sub-intervals.

\begin{reproposition}[One-Dimensional Interpolation Validity --- Unified Intra- and Across-Interval]
\label{prop:1d-unified}
Fix a coordinate~$t$ and an output {\rev $\mathbf{y} \in \mathcal{Y}$}.
Let $\mathbf{x}^{(1)}, \ldots, \mathbf{x}^{(m)} \in \hat{X}$ be anchor records
that differ only in their $t$-th coordinate, with
\begin{equation}
  x^{(1)}_t < x^{(2)}_t < \cdots < x^{(m)}_t.
\end{equation}
Suppose that consecutive anchor pairs satisfy the
$(\epsilon_t, d_1)$-Lipschitz bound:
\begin{equation}
  \bigl|\ln \mathcal{M}(\mathbf{y} \mid \mathbf{x}^{(r)})
       - \ln \mathcal{M}(\mathbf{y} \mid \mathbf{x}^{(r+1)})\bigr|
  \leqslant \epsilon_t \,\bigl|x^{(r)}_t - x^{(r+1)}_t\bigr|,
  ~r = 1,\ldots,m-1.
\end{equation}
Let {\rev $g:[x^{(1)}_t,\, x^{(m)}_t]\to\mathbb{R}$} be any absolutely
continuous interpolation function satisfying:
\begin{enumerate}[label=(\roman*)]
  \item \textbf{Anchor consistency:}
  {\rev $g\bigl(x^{(r)}_t\bigr)
  = \ln \mathcal{M}(\mathbf{y} \mid \mathbf{x}^{(r)})$}
  for all $r=1,\ldots,m$;
  \item \textbf{Slope cap:}
  {\rev $\left|\frac{\mathrm{d}g(u)}{\mathrm{d}u}\right| \leqslant \epsilon_t$
  for a.e.\ $u\in\bigl(x^{(1)}_t,\, x^{(m)}_t\bigr)$.}
\end{enumerate}
Then for \emph{any} two records $\mathbf{x}, \mathbf{x}' \in \mathcal{X}$
that differ from the anchors only in the $t$-th coordinate,
with {\rev $x_t,\, x'_t
\in [x^{(1)}_t,\, x^{(m)}_t]$}
--- regardless of whether {\rev $x_t$ and $x'_t$} fall in
the same sub-interval
$[x^{(r)}_t,\, x^{(r+1)}_t]$
or in different sub-intervals ---
the interpolated log-perturbation probabilities satisfy
\begin{equation}
  {\rev \bigl|\ln f(\mathbf{y} \mid \mathbf{x})
       - \ln f(\mathbf{y} \mid \mathbf{x}')\bigr|
  \leqslant \epsilon_t\,|x_t - x'_t|,}
\end{equation}
i.e., the $(\epsilon_t, d_1)$-Lipschitz bound holds between
$\mathbf{x}$ and $\mathbf{x}'$.
\end{reproposition}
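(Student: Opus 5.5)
The plan is to reduce the statement to the sufficiency direction of Lemma~\ref{lem:1d-iff-single}, applied once on the whole interval $[x^{(1)}_t, x^{(m)}_t]$ rather than separately on each sub-interval. The key observation is that absolute continuity and the slope cap are hypotheses on $g$ over the entire interval $\mathcal{I}_t=[x^{(1)}_t,x^{(m)}_t]$. They are not tied to any particular sub-interval $[x^{(r)}_t,x^{(r+1)}_t]$. So the case distinction between ``same sub-interval'' and ``different sub-intervals'' should never actually be needed.

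\textbf{Step 1.} Fix $\mathbf{y}$ and $t$, and take $\mathbf{x},\mathbf{x}'$ with $x_t,x'_t\in\mathcal{I}_t$ and all non-$t$ coordinates equal to those of the anchors. By the definition of the interpolated values, $\ln f(\mathbf{y}\mid\mathbf{x})=g(x_t)$ and $\ln f(\mathbf{y}\mid\mathbf{x}')=g(x'_t)$. The claim therefore becomes $|g(x_t)-g(x'_t)|\le\epsilon_t|x_t-x'_t|$.

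\textbf{Step 2.} Since $g$ is absolutely continuous on $\mathcal{I}_t$, the fundamental theorem of calculus for absolutely continuous functions gives
$g(x'_t)-g(x_t)=\int_{x_t}^{x'_t} g'(s)\,\mathrm{d}s$.
Bounding $|g'|\le\epsilon_t$ almost everywhere yields the inequality directly; this is exactly the (ii)$\Rightarrow$(i) argument of Lemma~\ref{lem:1d-iff-single} with $[a,b]=\mathcal{I}_t$.

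\textbf{Step 3 (remark on the across-interval case).} Suppose $x_t\in[x^{(r)}_t,x^{(r+1)}_t]$ and $x'_t\in[x^{(r')}_t,x^{(r'+1)}_t]$ with $r<r'$. One could instead split the integral at the intermediate anchors $x^{(r+1)}_t,\dots,x^{(r')}_t$, bound each piece by $\epsilon_t$ times its length, and sum. The sum telescopes to $\epsilon_t|x'_t-x_t|$ because the pieces are collinear and monotonically ordered. I would note this only to make explicit why a piecewise construction poses no difficulty.

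The consecutive-anchor Lipschitz hypothesis is not logically needed for the conclusion. It is implied by (i) and (ii), and it serves as the feasibility condition guaranteeing that an interpolant satisfying both (i) and (ii) exists; linear interpolation is one such interpolant. I would point this out rather than use it.

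\textbf{Main obstacle.} The only delicate point is justifying the global fundamental theorem of calculus when $g$ is defined piecewise by different formulas on different sub-intervals. I would handle it by observing two facts. First, a function that is absolutely continuous on each closed sub-interval and continuous at the shared anchors (which anchor consistency guarantees) is absolutely continuous on the union. Second, the assumption is in any case stated for $g$ on all of $\mathcal{I}_t$. With that justified, the rest of the proof is routine.
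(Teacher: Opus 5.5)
Your proposal is correct and follows the paper's proof: both apply the (ii)$\Rightarrow$(i) direction of Lemma~\ref{lem:1d-iff-single} once on the full interval $[x^{(1)}_t, x^{(m)}_t]$ and evaluate the resulting global $\epsilon_t$-Lipschitz bound at $x_t$ and $x'_t$, so no case split between sub-intervals is needed. Your side remark is also accurate: the consecutive-anchor hypothesis follows from (i)--(ii) and serves only to guarantee that such an interpolant exists.
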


\begin{proof}
By condition~(ii), {\rev $\left|\frac{\mathrm{d}g(u)}{\mathrm{d}u}\right| \leqslant \epsilon_t$ for
a.e.\ $u \in (x^{(1)}_t,\, x^{(m)}_t)$.}
Applying the (ii)\,$\Rightarrow$\,(i) direction of
Lemma~\ref{lem:1d-iff-single} to the full interval
$[x^{(1)}_t,\, x^{(m)}_t]$, we conclude that {\rev $g$} is globally
$\epsilon_t$-Lipschitz.
The result follows by evaluating the Lipschitz inequality at
{\rev $u = x_t$ and $v = x'_t$}.
\end{proof}

\begin{remark}
Proposition~\ref{prop:1d-unified} simultaneously subsumes both the
\emph{intra-interval} case
(where {\rev $x_t$ and $x'_t$} lie in the same sub-interval,
corresponding to the original Proposition~1)
and the \emph{across-interval} case
(where they lie in different sub-intervals,
corresponding to the original Proposition~2).
No case distinction is needed:
the global Lipschitz property established via
Lemma~\ref{lem:1d-iff-single} applies uniformly to all pairs.
\end{remark}

\begin{remark}[Why anchors and multiple intervals do not change the IFF]
The equivalence in Lemma~\ref{lem:1d-iff-single} is a
\emph{global} characterization on $[x^{(1)}_t,\, x^{(m)}_t]$
and does not depend on how $f$ is constructed
(e.g., via different formulas on different sub-intervals
$[x^{(r)}_t, x^{(r+1)}_t]$).
The anchor constraints
$f(x^{(r)}_t) = \ln \mathcal{M}(\mathbf{y} \mid \mathbf{x}^{(r)})$
merely restrict the feasible set of interpolants;
they do not affect the Lipschitz--derivative equivalence.
\end{remark}

\subsection{Proof of Theorem~\ref{thm:successor-preservation-2d}}
\label{subsec:app:proof:thm:successor-preservation-2d}
\setcounter{retheorem}{2}
\begin{retheorem}[Successor-level preservation under two-dimensional interpolation] Let $u$ and $w$ be two cells at the same extension level, and suppose
their mechanisms satisfy the coordinate-wise cross-cell Lipschitz bound:
for any $\mathbf{x}\in\mathcal{X}_u$, $\mathbf{x}'\in\mathcal{X}_w$, and $\mathbf{y}\in\mathcal{Y}$,
\[
\left|
\ln \mathcal{M}_u(\mathbf{y}|\mathbf{x})-\ln \mathcal{M}_w(\mathbf{y}|\mathbf{x}')
\right|
\le
\sum_{t=1}^{2}\epsilon_t |x_t-x'_t|.
\]
Let $u'\in\mathrm{Succ}(u)$ and $w'\in\mathrm{Succ}(w)$, and suppose that
the mechanisms on $u'$ and $w'$ are generated from $\mathcal{M}_u$ and $\mathcal{M}_w$,
respectively, using the two-dimensional interpolation operator in
Definition~\ref{def:2d-interpolation}. Then, for any
$\mathbf{x}\in\mathcal{X}_{u'}$, $\mathbf{x}'\in\mathcal{X}_{w'}$, and $\mathbf{y}\in\mathcal{Y}$,
the pre-normalization interpolants satisfy
\[
\left|
\ln f_u(\mathbf{y}|\mathbf{x})-\ln f_w(\mathbf{y}|\mathbf{x}')
\right|
\le
\sum_{t=1}^{2}\epsilon_t |x_t-x'_t|.
\]
Consequently, if $\epsilon_1$ and $\epsilon_2$ satisfy the
dimension-wise composition condition in Theorem~\ref{thm:composition},
then
\[
\left|
\ln f_u(\mathbf{y}|\mathbf{x})-\ln f_w(\mathbf{y}|\mathbf{x}')
\right|
\le
\bar{\epsilon} d_p(\mathbf{x},\mathbf{x}'),
\]
where $\bar{\epsilon}$ is the composed pre-normalization budget. In
particular, with the calibrated choice $\bar{\epsilon}=\epsilon/2$, the
normalized successor mechanisms satisfy the target $(\epsilon,d)$-mDP bound.
\end{retheorem}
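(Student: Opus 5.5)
The plan is to exploit the fact that the pre-normalization interpolant in Definition~\ref{def:2d-interpolation} is a convex combination of anchor log-probabilities with bilinear weights, and that these weights factor as a product over coordinates. Fix $\mathbf{y}$ and abbreviate $g_\gamma:=\ln\mathcal{M}_u(\mathbf{y}\mid\mathbf{x}^{(\gamma)})$ for the corners of $u$ and $h_\delta:=\ln\mathcal{M}_w(\mathbf{y}\mid\mathbf{x}'^{(\delta)})$ for the corners of $w$. Then $\ln f_u(\mathbf{y}\mid\mathbf{x})=\sum_\gamma w_\gamma(\mathbf{x})g_\gamma$ and $\ln f_w(\mathbf{y}\mid\mathbf{x}')=\sum_\delta w'_\delta(\mathbf{x}')h_\delta$, where both weight families are nonnegative and sum to one. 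The key step is to choose a \emph{coupling} of the two weight distributions.

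Specifically, I would take $\pi(\gamma,\delta)=w_\gamma(\mathbf{x})\,w'_\delta(\mathbf{x}')$, or better the coordinate-wise monotone coupling described below. With any coupling, the difference can be written as
\[
\ln f_u(\mathbf{y}\mid\mathbf{x})-\ln f_w(\mathbf{y}\mid\mathbf{x}')=\sum_{\gamma,\delta}\pi(\gamma,\delta)\,(g_\gamma-h_\delta).
\]
By the hypothesis, this is at most $\sum_{\gamma,\delta}\pi(\gamma,\delta)\sum_t\epsilon_t|x^{(\gamma)}_t-x'^{(\delta)}_t|$ in absolute value. The goal is therefore to choose $\pi$ so that, for each $t$, the transport cost $\sum\pi\,|x^{(\gamma)}_t-x'^{(\delta)}_t|$ is at most $|x_t-x'_t|$.

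Because the weights are products $\prod_t(\text{1D weight in coordinate } t)$, I would build $\pi$ as a product over $t$ of one-dimensional couplings. In each coordinate there is a two-point distribution on $\{x^{(0)}_t,x^{(0)}_t+\Delta_t\}$ with mean $x_t$, and another on $w$'s two endpoints with mean $x'_t$. Two cases arise.
\begin{itemize}
\item If the two intervals are \emph{ordered}, meaning every endpoint of one is at most every endpoint of the other in coordinate $t$, then $|a-b|=\pm(a-b)$ for all pairs. The expected cost is then exactly $|x_t-x'_t|$ by linearity of the means.
\item If the intervals \emph{overlap or coincide}, as in the case $u=w$ or adjacent cells, I would use the one-dimensional argument of Proposition~\ref{prop:1d-validity}, with the anchors merged onto a common line.
\end{itemize}

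The main obstacle is the overlapping case in general. For the same-cell or aligned-grid situation, I would argue that cells at the same level lie on a common grid, so their coordinate intervals are either identical, adjacent, or disjoint-ordered; adjacency and disjointness are covered by the ordered case. For identical intervals, the interpolants in that coordinate are the same piecewise-linear function of the anchor values. That function has slope bounded by $\epsilon_t$ from the anchor bound, so the coordinate-wise bound follows from Proposition~\ref{prop:1d-validity}. A careful treatment must combine the two coordinates when one coordinate is in the identical case and the other in the ordered case. I would handle this by interpolating in one coordinate first to produce an intermediate aligned point, as in Fig.~\ref{fig:mDPpreserv_eg}, and then applying the triangle inequality.

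Finally, once $\sum_t\epsilon_t|x_t-x'_t|$ is established, H\"older's inequality with conjugate exponents $p$ and $p/(p-1)$ gives $\sum_t\epsilon_t|x_t-x'_t|\le\bar\epsilon\,d_p(\mathbf{x},\mathbf{x}')$ under Eq.~\eqref{eq:budgetcompo1} (or Eq.~\eqref{eq:budgetcompo2} for $p=1$), as in Theorem~\ref{thm:composition}. For normalization, write $Z(\mathbf{x})=\sum_{\mathbf{y}'}f_u(\mathbf{y}'\mid\mathbf{x})$. The pointwise bound implies $|\ln Z(\mathbf{x})-\ln Z(\mathbf{x}')|\le\bar\epsilon d_p$, so the normalized log-ratio is at most $2\bar\epsilon d_p=\epsilon d_p$ when $\bar\epsilon=\epsilon/2$.
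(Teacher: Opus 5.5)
Your route differs from the paper's. The paper first proves a lemma for pairs with one equal coordinate. It writes $\ln f_u(\mathbf{y}\mid\mathbf{x})$ and $\ln f_w(\mathbf{y}\mid\mathbf{x}')$ as the same $\lambda$-combination of values at the ends of the segments through $\mathbf{x}$ and $\mathbf{x}'$, and bounds the endpoint differences by a cross-cell bound that the lemma simply assumes. The general case is then reduced to two such comparisons through $\mathbf{m}=(x_1,x_2')$, placed in an auxiliary and possibly virtual cell $C_{i_0j_1}$. The triangle inequality, composition, and the same log-normalizer argument you give finish the proof. Your coupling of the bilinear weights replaces this reduction with a single averaging step and is the cleaner tool, but only if you use it in every coordinate.

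The step that does not work as written is your fallback for identical intervals. The claim that the two interpolants ``are the same piecewise-linear function'' holds only when $u=w$. When $u\neq w$ share the coordinate-$t$ interval but differ in the other coordinate, the two one-dimensional functions are built from different anchor values, so Proposition~\ref{prop:1d-validity} does not apply. Your intermediate-point repair then needs a within-cell slope bound on $u$ or $w$, which the $u$--$w$ cross-cell hypothesis does not supply. The paper's reduction has the analogous dependence on bounds against the auxiliary cell.

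You do not need this fallback. Let $p_t,q_t$ be the masses that the coordinate-$t$ factors of $w_\gamma(\mathbf{x})$ and $w'_\delta(\mathbf{x}')$ put on the upper endpoint.
\begin{itemize}
\item If the two intervals coincide, the monotone coupling puts $\min(p_t,q_t)$ on (upper, upper), $\min(1-p_t,1-q_t)$ on (lower, lower), and $|p_t-q_t|$ on one off-diagonal pair at distance $\Delta_t$. Its cost is $|p_t-q_t|\Delta_t=|x_t-x_t'|$.
\item If the intervals are ordered, every coupling costs exactly $|x_t-x_t'|$.
\end{itemize}
Now take $\pi=\pi_1\otimes\pi_2$. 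Its marginals are the bilinear weights, and $|x^{(\gamma)}_t-x'^{(\delta)}_t|$ depends only on $(\gamma_t,\delta_t)$. This yields $\sum_t\epsilon_t|x_t-x_t'|$ in all cases at once, using only the stated corner hypothesis. Your H\"older and normalization steps are correct and match the paper.
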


\begin{lemma}[Cross-cell bound from 2D interpolation when one coordinate coincides]
\label{lem:cross-cell-one-coordinate}
Let $u$ and $w$ be two cells at the same extension level of the
aligned grid tree, and let
\[
u'\in\mathrm{Succ}(u),\qquad w'\in\mathrm{Succ}(w)
\]
be successor cells generated by the two-dimensional interpolation rule in
Definition~\ref{def:2d-interpolation}. Let
\[
\mathbf{x}=(x_1,x_2)\in\mathcal X_{u'},
\qquad
\mathbf{x}'=(x_1',x_2')\in\mathcal X_{w'}.
\]
Assume that either $x_1=x_1'$ or $x_2=x_2'$. Also assume that the
corresponding endpoint pairs used by the interpolation satisfy the
coordinate-wise cross-cell Lipschitz bound. Then, for every
$\mathbf{y}\in\mathcal Y$,
\begin{equation}
\label{eq:cross-cell-one-coordinate}
\bigl|
\ln f_u(\mathbf \mathbf{y}\mid \mathbf x)
-
\ln f_w(\mathbf \mathbf{y}\mid \mathbf{x}')
\bigr|
\le
\epsilon_1 |x_1-x_1'|+\epsilon_2 |x_2-x_2'|.
\end{equation}
\end{lemma}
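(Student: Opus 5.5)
By symmetry of the two coordinates, we treat the case $x_2=x_2'$; the case $x_1=x_1'$ is identical with the roles of the coordinates exchanged. The goal is to reduce the cross-cell comparison to a one-dimensional comparison along coordinate $1$. We do this by exploiting the product (bilinear) structure of the weights $w_\gamma$ in Definition~\ref{def:2d-interpolation}.

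\textbf{Step 1: factor the interpolant along the shared horizontal line.} Write the corners of $u$ as $\mathbf{x}^{(\gamma)}$ and those of $w$ as $\mathbf{z}^{(\gamma)}$. Since $w_\gamma(\mathbf{x})=\prod_t(\cdot)$, the log-interpolant factors as a linear interpolation in coordinate $2$ followed by one in coordinate $1$:
\[
\ln f_u(\mathbf{y}\mid\mathbf{x})=\lambda_1(\mathbf{x})\,g^{u}_{0}(x_2)+\bigl(1-\lambda_1(\mathbf{x})\bigr)\,g^{u}_{1}(x_2).
\]
Here $g^u_{\gamma_1}(x_2)$ denotes the linear interpolation in coordinate $2$ of $\ln\mathcal{M}_u$ between the corners $\mathbf{x}^{(\gamma_1,0)}$ and $\mathbf{x}^{(\gamma_1,1)}$. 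Equivalently, $g^u_0,g^u_1$ are the log-values at the two points $\mathbf{a}_0,\mathbf{a}_1$ where the horizontal line through height $x_2$ meets the left and right edges of $u$. We define $g^w_0,g^w_1$ and the points $\mathbf{b}_0,\mathbf{b}_1$ for $w$ in the same way.

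\textbf{Step 2: the edge points inherit the cross-cell bound.} Because $u$ and $w$ lie at the same level of the aligned grid, their sides share the same vertical coordinates. Hence $x_2=x_2'$ gives the same parameter $\lambda_2$ in both cells, and each point $\mathbf{a}_i,\mathbf{b}_j$ has second coordinate $x_2$. Take any pair $\mathbf{a}_i,\mathbf{b}_j$. The difference $g^u_i-g^w_j$ is the convex combination, with weights $\lambda_2$ and $1-\lambda_2$, of the corner differences at the bottom and at the top. By the hypothesis on corner pairs, each corner difference is bounded by $\epsilon_1|a_{i,1}-b_{j,1}|$, because the vertical offsets of matched corners are zero. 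So $|g^u_i-g^w_j|\le\epsilon_1|a_{i,1}-b_{j,1}|$. Within one cell, we need the analogous bound $|g^u_0-g^u_1|\le\epsilon_1|a_{0,1}-a_{1,1}|$, which follows from the intra-cell coordinate-wise bounds on the corners of $u$, taking $u=w$ in the hypothesis.

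\textbf{Step 3: one-dimensional interpolation along the combined line.} On the horizontal line at height $x_2$, consider the function that is linear between $\mathbf{a}_0$ and $\mathbf{a}_1$ and linear between $\mathbf{b}_0$ and $\mathbf{b}_1$. This function agrees with $\ln f_u$ and $\ln f_w$ there. We compare its value at $x_1$ with its value at $x_1'$ by chaining through the edge points. Doing this explicitly is awkward, so instead we argue directly: $\ln f_u(\mathbf{y}\mid\mathbf{x})-\ln f_w(\mathbf{y}\mid\mathbf{x}')$ is bilinear in the weights $(\lambda_1(\mathbf{x}),\lambda_1(\mathbf{x}'))$. It is therefore a convex combination of the four differences $g^u_i-g^w_j$, each bounded by $\epsilon_1|a_{i,1}-b_{j,1}|$.

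\textbf{Main obstacle.} A convex combination of the distances $|a_{i,1}-b_{j,1}|$ is not in general equal to $|x_1-x_1'|$. Equality holds only when the cells are ordered, i.e.\ all of $u$ lies to one side of all of $w$, so that every $a_{i,1}-b_{j,1}$ has the same sign and the combination telescopes to $x_1-x_1'$. This is the case we handle first. If $u=w$, or the cells overlap in coordinate $1$, we instead use the slope-cap form. Along the combined line the interpolant is piecewise linear, and each piece has slope at most $\epsilon_1$ by the Step~2 bounds. Proposition~\ref{prop:1d-validity} then gives the Lipschitz bound $\epsilon_1|x_1-x_1'|$ between any two points on the line. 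In the overlapping case this requires the two cell interpolants to coincide on the overlap, which is the consistency supplied by (A2). Verifying that this gluing is well defined is the delicate point. Finally, adding the term $\epsilon_2|x_2-x_2'|=0$ yields \eqref{eq:cross-cell-one-coordinate}.
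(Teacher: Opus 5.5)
Your argument is correct in substance, but it is organized differently from the paper's. The paper treats the case $x_1=x_1'$ and puts the coincident coordinate on the outside. Both points have the same relative position along coordinate~1, so each log-interpolant is the same convex combination $\lambda(\cdot)^{L}+(1-\lambda)(\cdot)^{R}$ of the endpoints of its horizontal segment. The left--left and right--right differences are then bounded by invoking the lemma's hypothesis directly on those endpoint pairs. Those endpoints are edge points at heights $x_2$ and $x_2'$, not anchors. So the paper's two-line convexity argument takes for granted exactly the comparison that involves mismatched interpolation weights.

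You instead read the hypothesis as a bound on corner anchors, which is what Theorem~\ref{thm:successor-preservation-2d} actually supplies. You use the common weight $\lambda_2$ to transfer that bound from matched corners to the edge points $\mathbf{a}_i,\mathbf{b}_j$. You then deal with the mismatched weights $\lambda_1(\mathbf{x})\neq\lambda_1(\mathbf{x}')$ directly: the expansion $\sum_{i,j}\mu_i\nu_j(g^u_i-g^w_j)$, together with the sign condition, makes $\sum_{i,j}\mu_i\nu_j|a_{i,1}-b_{j,1}|$ telescope to $|x_1-x_1'|$. Your route is longer, but it derives the lemma from anchor-level bounds. It also makes explicit that convexity alone is not enough and that the cells must be ordered or coincide, a point the paper's formulation hides inside its hypothesis.

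The loose end you flag is harmless.
\begin{itemize}
\item Two same-level cells of the aligned grid in the same row are either identical or ordered in coordinate~1. So in the same-row case your \emph{overlap} branch is just $u=w$. There, taking $u=w$ in the hypothesis is legitimate precisely because the cells coincide, and it gives slope at most $\epsilon_1$.
\item The remaining configuration is $x_2$ lying on a horizontal grid line with $u,w$ in different rows. Your claim of a common $\lambda_2$ then fails as stated. However, the coordinate-2 interpolation collapses onto the corners on that line, so the conclusion of your Step~2 still holds.
\item If those cells are also stacked in the same column, both points lie on the shared edge. Applying the cross-cell hypothesis to a shared corner paired with itself (distance $0$) already forces $\ln\mathcal{M}_u=\ln\mathcal{M}_w$ there. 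The interpolant is then a single linear piece with slope at most $\epsilon_1$, and no separate appeal to (A2) is needed.
\end{itemize}
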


\begin{proof}
We prove the case $x_1=x_1'$; the case $x_2=x_2'$ is symmetric by
exchanging coordinates $1$ and $2$.

Let $\mathbf{x}^{L},\mathbf{x}^{R}$ be the two endpoints of the horizontal
interpolation segment through $\mathbf{x}$ inside cell $u$, and let
$\mathbf{x}'^{L},\mathbf{x}'^{R}$ be the corresponding endpoints of the
horizontal interpolation segment through $\mathbf{x}'$ inside cell $w$.
Since $u$ and $w$ are at the same extension level and use the same
extension template, the equality $x_1=x_1'$ implies that
$\mathbf{x}$ and $\mathbf{x}'$ have the same relative position along
coordinate $1$. Hence they use the same interpolation coefficient
$\lambda$ along that coordinate.

By the one-dimensional restriction property of
Definition~\ref{def:2d-interpolation},
\[
\ln f_u(\mathbf \mathbf{y}\mid \mathbf x)
=
\lambda \ln f_u(\mathbf \mathbf{y}\mid \mathbf{x}^{L})
+
(1-\lambda)\ln f_u(\mathbf \mathbf{y}\mid \mathbf{x}^{R}),
\]
and
\[
\ln f_w(\mathbf \mathbf{y}\mid \mathbf{x}')
=
\lambda \ln f_w(\mathbf \mathbf{y}\mid \mathbf{x}'^{L})
+
(1-\lambda)\ln f_w(\mathbf \mathbf{y}\mid \mathbf{x}'^{R}).
\]
Subtracting the two equalities and applying the triangle inequality gives
\begin{align}
&
\bigl|
\ln f_u(\mathbf \mathbf{y}\mid \mathbf x)
-
\ln f_w(\mathbf \mathbf{y}\mid \mathbf{x}')
\bigr|
\nonumber\\
&\le
\lambda
\bigl|
\ln f_u(\mathbf \mathbf{y}\mid \mathbf{x}^{L})
-
\ln f_w(\mathbf \mathbf{y}\mid \mathbf{x}'^{L})
\bigr|
\nonumber\\
&\quad+
(1-\lambda)
\bigl|
\ln f_u(\mathbf \mathbf{y}\mid \mathbf{x}^{R})
-
\ln f_w(\mathbf \mathbf{y}\mid \mathbf{x}'^{R})
\bigr|.
\label{eq:case1-step}
\end{align}
By the coordinate-wise cross-cell Lipschitz assumption applied to the two
corresponding endpoint pairs,
\[
\bigl|
\ln f_u(\mathbf \mathbf{y}\mid \mathbf{x}^{L})
-
\ln f_w(\mathbf \mathbf{y}\mid \mathbf{x}'^{L})
\bigr|
\le
\epsilon_2 |x_2-x_2'|,
\]
and
\[
\bigl|
\ln f_u(\mathbf \mathbf{y}\mid \mathbf{x}^{R})
-
\ln f_w(\mathbf \mathbf{y}\mid \mathbf{x}'^{R})
\bigr|
\le
\epsilon_2 |x_2-x_2'|.
\]
Substituting these bounds into \eqref{eq:case1-step} yields
\[
\bigl|
\ln f_u(\mathbf \mathbf{y}\mid \mathbf x)
-
\ln f_w(\mathbf \mathbf{y}\mid \mathbf{x}')
\bigr|
\le
\epsilon_2 |x_2-x_2'|.
\]
Since $x_1=x_1'$, this is exactly
\[
\bigl|
\ln f_u(\mathbf \mathbf{y}\mid \mathbf x)
-
\ln f_w(\mathbf \mathbf{y}\mid \mathbf{x}')
\bigr|
\le
\epsilon_1 |x_1-x_1'|+\epsilon_2 |x_2-x_2'|.
\]
This proves the claim.
\end{proof}

\begin{proof}
Fix arbitrary
\[
\mathbf{x}=(x_1,x_2)\in\mathcal X_{u'},
\qquad
\mathbf{x}'=(x_1',x_2')\in\mathcal X_{w'},
\qquad
\mathbf{y}\in\mathcal Y.
\]
Let $u=C_{i_0j_0}$ and $w=C_{i_1j_1}$. Define the intermediate cell
\[
t := C_{i_0j_1},
\]
and let $t'\in\mathrm{Succ}(t)$ be the successor cell occupying the same
relative position under the common extension template as $u'$ and $w'$.
If the retained grid excludes $t$ or $t'$, we treat them as virtual
aligned cells used only for the proof; the same interpolation rule assigns
pre-normalization values on these virtual points.

Define the intermediate point
\[
\mathbf{m}:=(x_1,x_2').
\]
By alignment of the extension template, $\mathbf{m}\in\mathcal X_{t'}$
or is a virtual interpolation point in the aligned extension.

Now $\mathbf{x}\in\mathcal X_{u'}$ and $\mathbf{m}\in\mathcal X_{t'}$
have the same first coordinate. Hence, by
Lemma~\ref{lem:cross-cell-one-coordinate},
\begin{equation}
\label{eq:general-NN-step1}
\bigl|
\ln f_u(\mathbf{y}\mid \mathbf{x})
-
\ln f_t(\mathbf{y}\mid \mathbf{m})
\bigr|
\le
\epsilon_2 |x_2-x_2'|.
\end{equation}
Similarly, $\mathbf{m}\in\mathcal X_{t'}$ and
$\mathbf{x}'\in\mathcal X_{w'}$ have the same second coordinate. Again by
Lemma~\ref{lem:cross-cell-one-coordinate},
\begin{equation}
\label{eq:general-NN-step2}
\bigl|
\ln f_t(\mathbf{y}\mid \mathbf{m})
-
\ln f_w(\mathbf{y}\mid \mathbf{x}')
\bigr|
\le
\epsilon_1 |x_1-x_1'|.
\end{equation}

Applying the triangle inequality and then substituting
\eqref{eq:general-NN-step1} and \eqref{eq:general-NN-step2}, we obtain
\[
\bigl|
\ln f_u(\mathbf{y}\mid \mathbf{x})
-
\ln f_w(\mathbf{y}\mid \mathbf{x}')
\bigr|
\le
\epsilon_1 |x_1-x_1'|+\epsilon_2 |x_2-x_2'|.
\]
This proves the coordinate-wise cross-cell preservation bound.

By the dimension-wise composition theorem, if $\epsilon_1$ and
$\epsilon_2$ compose to the pre-normalization budget $\bar\epsilon$, then
\[
\bigl|
\ln f_u(\mathbf{y}\mid \mathbf{x})
-
\ln f_w(\mathbf{y}\mid \mathbf{x}')
\bigr|
\le
\bar\epsilon d_p(\mathbf{x},\mathbf{x}').
\]

Finally, consider the normalized mechanisms
\[
\mathcal{M}_u(\mathbf \mathbf{y}\mid \mathbf x)
=
\frac{f_u(\mathbf \mathbf{y}\mid \mathbf x)}
{\sum_{\mathbf y'\in\mathcal Y} f_u(\mathbf y'\mid \mathbf x)}
\]
and similarly for $\mathcal{M}_w$. Since the pre-normalization log-scores are
$\bar\epsilon$-Lipschitz for every output, the corresponding log-normalizers
are also $\bar\epsilon$-Lipschitz. Therefore,
\[
\bigl|
\ln \mathcal{M}_u(\mathbf \mathbf{y}\mid \mathbf x)
-
\ln \mathcal{M}_w(\mathbf \mathbf{y}\mid \mathbf{x}')
\bigr|
\le
2\bar\epsilon d_p(\mathbf{x},\mathbf{x}').
\]
Choosing $\bar\epsilon=\epsilon/2$ yields the desired
$(\epsilon,d)$-mDP bound for the normalized successor mechanisms.
\end{proof}

\subsection{Proof of Theorem \ref{thm:subtree-preservation}}
\label{subsec:app:proof:thm:subtree-preservation}
\begin{retheorem}[Subtree-level preservation of the tree-based extension]
Consider the tree-based extension algorithm. Fix a node $v$ in the
extension tree, and define
\[
\mathcal{X}_{\mathrm{sub}}(v)
:=
\bigcup_{\mathbf{x}\in \mathrm{Desc}(v)} \mathcal{X}_z .
\]
Assume that the corner anchors of $v$ satisfy the coordinate-wise
Lipschitz bounds in log space with pre-normalization budgets
$\epsilon_1,\epsilon_2$, and that every extension inside the
subtree rooted at $v$ is performed by the 2-dimensional interpolation
operator in Definition~\ref{def:2d-interpolation}. Assume further that
$\epsilon_1,\epsilon_2$ satisfy the dimension-wise composition
condition with total pre-normalization budget $\epsilon$ under the
$\ell_p$ metric.

Then, for any two records
$\mathbf{x},\mathbf{x}'\in \mathcal{X}_{\mathrm{sub}}(v)$ and any output
$\mathbf{y}\in\mathcal{Y}$, the pre-normalization interpolants satisfy
\[
\left|
\ln f_v(\mathbf{y}\mid \mathbf{x})-\ln f_v(\mathbf{y}\mid \mathbf{x}')
\right|
\le
\epsilon d_p(\mathbf{x},\mathbf{x}').
\]
Consequently, after the calibrated normalization step, the induced
mechanism on $\mathcal{X}_{\mathrm{sub}}(v)$ satisfies
$(\epsilon,d)$-mDP.
\end{retheorem}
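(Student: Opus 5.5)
The plan is to prove a stronger, coordinate-wise invariant by induction on depth within the subtree rooted at $v$, and then convert it to the $\ell_p$ statement at the end. For every node $z\in\mathrm{Desc}(v)$ let $F(\mathbf{y}\mid\cdot)$ be the pre-normalization log-score assigned to records of $\mathcal{X}_z$. At the root $v$, $F$ is the log of the anchor mechanism. Since every child value is produced by interpolation, $F$ is well defined on $\mathcal{X}_{\mathrm{sub}}(v)$: shared boundary vertices are computed from the same edge endpoints, as in the (A2) argument. The invariant is
\[
\bigl|F(\mathbf{y}\mid\mathbf{x})-F(\mathbf{y}\mid\mathbf{x}')\bigr|\le \epsilon_1|x_1-x_1'|+\epsilon_2|x_2-x_2'|
\qquad\forall\,\mathbf{x},\mathbf{x}'\in\textstyle\bigcup_{z\in\mathrm{Desc}(v),\,\mathrm{depth}(z)\le k}\mathcal{X}_z .
\]
I would prove it for every $k$. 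The base case $k=0$ is exactly the hypothesis on the corner anchors of $v$, applied with $u=w=v$.

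\textbf{Inductive step.} Suppose the invariant holds up to depth $k$, and take two records in depth-$(k+1)$ cells $u'\in\mathrm{Succ}(u)$ and $w'\in\mathrm{Succ}(w)$, where $u,w$ are at depth $k$. The induction hypothesis gives the coordinate-wise cross-cell bound between the corners of $u$ and $w$. Theorem~\ref{thm:successor-preservation-2d} then yields the same coordinate-wise bound between $\mathbf{x}\in\mathcal{X}_{u'}$ and $\mathbf{x}'\in\mathcal{X}_{w'}$. Its first conclusion, before composition and before normalization, is stated exactly in this coordinate-wise form, so it can be iterated. The case $u=w$ is also covered by that theorem, or directly by Proposition~\ref{prop:1d-validity} applied along each axis together with the same intermediate-point argument. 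Pairs with one record at depth $k+1$ and the other at depth $\le k$ reduce to the previous case. Every record at shallower depth is a corner of some descendant at depth $k+1$, by extension coverage (iii), and the interpolation reproduces corner values by anchor consistency. Hence every record of $\mathcal{X}_{\mathrm{sub}}(v)$ can be treated as a depth-$K$ record, where $K$ is the leaf depth.

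\textbf{Converting to the $\ell_p$ and mDP statements.} Once the invariant holds on all of $\mathcal{X}_{\mathrm{sub}}(v)$, Theorem~\ref{thm:composition} (via Hölder's inequality) turns $\sum_t\epsilon_t|x_t-x_t'|$ into $\bar\epsilon\, d_p(\mathbf{x},\mathbf{x}')$ under the budget condition. For normalization, write $Z(\mathbf{x})=\sum_{\mathbf{y}'}f(\mathbf{y}'\mid\mathbf{x})$. Each term satisfies $f(\mathbf{y}'\mid\mathbf{x})\le e^{\bar\epsilon d_p}f(\mathbf{y}'\mid\mathbf{x}')$, so $|\ln Z(\mathbf{x})-\ln Z(\mathbf{x}')|\le\bar\epsilon d_p$. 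Combining this with the bound on the log-scores gives a $2\bar\epsilon$ bound on the log-ratio of the normalized mechanisms, so the choice $\bar\epsilon=\epsilon/2$ yields $(\epsilon,d)$-mDP.

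\textbf{Expected obstacle.} The hard step is justifying that Theorem~\ref{thm:successor-preservation-2d} can be applied repeatedly. Its proof routes through an intermediate, possibly virtual, aligned cell $t$ and point $\mathbf{m}$. The invariant must therefore hold for these virtual cells as well, not only for retained ones; otherwise the intermediate bounds are unjustified. I would handle this by running the induction on the full aligned grid restricted to the bounding box of $v$. On that grid every cell is a refinement of $v$ by the same template, so virtual cells are genuine interpolation outputs of $v$'s anchors. The retained records form a subset, so the bound restricts to them.
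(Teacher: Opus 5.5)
Your proof is correct, but it is organized differently from the paper's. The paper never iterates Theorem~\ref{thm:successor-preservation-2d}.

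\textbf{The paper's route.} It first proves a subtree lemma only for axis-aligned pairs: $|\ln f_v(\mathbf{y}\mid\mathbf{x})-\ln f_v(\mathbf{y}\mid\mathbf{x}')|\le\epsilon_t|x_t-x_t'|$ whenever $\mathbf{x},\mathbf{x}'$ differ only in coordinate $t$. The one-dimensional slope cap is preserved level by level, and a grid line is chained segment by segment with the triangle inequality. Only then does it combine the two coordinates, once, through the single (possibly virtual) point $\mathbf{m}=(x_1,x_2')$.

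\textbf{Your route.} You make the full weighted-$\ell_1$ bound over all pairs the induction invariant and use successor-level preservation as the inductive engine. The L-shaped detour is therefore performed at every level, inside that theorem.

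\textbf{Trade-offs.} The paper's route is more economical. Only a one-dimensional slope cap has to propagate, and bilinear log-interpolation preserves it because, inside a cell, the slope in direction $t$ is a convex combination of the two parallel edge slopes. Yours mirrors (A3) directly by making the cross-cell statement itself the invariant.

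The price is that you rely on the full premises of the successor-level \emph{proof}, which go beyond its stated hypothesis on $u,w$. That proof uses the bound for the pairs $(u,t)$ and $(t,w)$. Through Lemma~\ref{lem:cross-cell-one-coordinate}, it also uses the bound for non-corner edge endpoints. Your device of inducting on the full aligned grid inside $v$ supplies all of these:
\begin{itemize}
\item $t$ stays inside $v$, because $v$ is a product of intervals.
\item The edge-endpoint bounds follow from your invariant by Proposition~\ref{prop:1d-validity} applied along the shared grid line.
\end{itemize}
So you located and closed the one place where a black-box citation would be unjustified.

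\textbf{Two conventions to state explicitly.} First, your invariant is on unnormalized log-scores, while Theorem~\ref{thm:successor-preservation-2d} is phrased for the node mechanisms $\mathcal{M}_u,\mathcal{M}_w$. This is harmless. Per-level normalization subtracts $\sum_\gamma w_\gamma(\mathbf{x})\ln Z(\mathbf{x}^{(\gamma)})$, where $Z$ is the normalizer, and this term does not depend on $\mathbf{y}$ because $\sum_\gamma w_\gamma=1$. So normalizing once at the leaves gives the same mechanism as normalizing at every level. Without this observation, an induction on normalized anchors would lose a factor $2$ per level; the paper uses the same convention tacitly.

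Second, your reading with pre-normalization budget $\bar\epsilon=\epsilon/2$ is the right one. The statement as written puts $\epsilon$ before normalization, but the appendix proof itself switches to $\epsilon/2$.
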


\begin{table*}[t]
\centering
\caption{Comparison of conventional central DP, LDP, and mDP. }
\label{tab:dp_comparison}
{\rev 
\begin{tabular}{llll}
\toprule
Property & Central DP & LDP & mDP \\
\midrule
Input
& Dataset $D$
& Individual record $\mathbf{x}$
& Secret $\mathbf{x}$ or dataset $D$ \\
Randomization
& Trusted curator
& User device
& Central or local \\
Compared inputs
& Neighboring datasets
& Any two records
& Any two secrets \\
Privacy bound
& $e^\epsilon$
& $e^\epsilon$
& $e^{\epsilon d(\mathbf{x},\mathbf{x}')}$ \\
Distance
& Dataset adjacency
& Discrete record metric
& Application-defined metric \\
\bottomrule
\end{tabular}}%
\end{table*}

\begin{lemma}[Coordinate-wise preservation in a subtree]
\label{lem:subtree-coordinate-preservation}
Fix a node $v$ in the extension tree, and define
\[
\mathcal{X}_{\mathrm{sub}}(v)
:=
\bigcup_{\mathbf{x}\in \mathrm{Desc}(v)} \mathcal{X}_z .
\]
Assume that the corner anchors of $v$ satisfy the coordinate-wise
Lipschitz bounds in log space, and that every extension inside the
subtree rooted at $v$ is performed by the 2-dimensional interpolation
operator in Definition~\ref{def:2d-interpolation}. Then, for any
two records $\mathbf{x},\mathbf{x}'\in\mathcal{X}_{\mathrm{sub}}(v)$ that differ only
in coordinate $t\in\{1,2\}$, the pre-normalization interpolants
satisfy
\[
\left|
\ln f_v(\mathbf{y}\mid \mathbf{x})-\ln f_v(\mathbf{y}\mid \mathbf{x}')
\right|
\le
\epsilon_t |x_t-\mathbf{x}'_t|,
\qquad \forall \mathbf{y}\in\mathcal{Y}.
\]
\end{lemma}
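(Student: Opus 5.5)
We will prove Lemma~\ref{lem:subtree-coordinate-preservation} by induction on the depth of the subtree rooted at $v$. The invariant is the following. Fix a coordinate $t$ and an output $\mathbf{y}$. Consider any axis-aligned line $\ell$ parallel to coordinate $t$ that meets the closed cell $v$. The function $s\mapsto \ln f_v(\mathbf{y}\mid \mathbf{x}(s))$, restricted to the points of $\mathcal{X}_{\mathrm{sub}}(v)$ on $\ell$, is the restriction of an absolutely continuous function on $\ell\cap v$ with $|\mathrm{d}/\mathrm{d}s|\le \epsilon_t$ a.e. Once the invariant holds, Proposition~\ref{prop:1d-validity} (via the global Lipschitz direction of Lemma~\ref{lem:1d-iff-single}) gives the claimed bound for any two records differing only in coordinate $t$. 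It does not matter whether the two records lie in the same descendant cell or in different descendant cells along the line.

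\textbf{Base step (the cell $v$ itself).} By the definition of $w_\gamma$, $\ln f_v$ is bilinear in $(x_1,x_2)$ on $v$. Along a line with $x_{3-t}$ fixed, it is affine in $x_t$. Its slope equals a convex combination, with weights $\lambda_{3-t}$ and $1-\lambda_{3-t}$, of the two edge slopes
\[
\bigl(\ln\mathcal{M}_v(\mathbf{y}\mid\mathbf{x}^{(\gamma+e_t)})-\ln\mathcal{M}_v(\mathbf{y}\mid\mathbf{x}^{(\gamma)})\bigr)/\Delta_t .
\]
The assumed coordinate-wise Lipschitz bound on the corner anchors caps each edge slope at $\epsilon_t$ in absolute value. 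Hence the convex combination is also capped at $\epsilon_t$. This is exactly the slope cap of Definition~\ref{def:1d-interpolation-rule}.

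\textbf{Inductive step.} Each child $u\in\mathrm{Succ}(v)$ receives anchors that are values of the parent interpolant at its corners. By the base step applied to $v$, adjacent corners of $u$ along coordinate $t$ satisfy the $\epsilon_t$ bound, so the hypothesis reproduces at $u$ and the same bilinear argument applies inside $u$.

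We must then reconcile the child interpolants with the parent function along the whole line, not only inside one child. Here we use the interpolation at the child level: re-interpolating an affine function along a line through the vertices of $u$ returns the same affine function on the edges of $u$. Consequently, the functions on neighbouring children agree at every shared boundary vertex, by overlap consistency (A2), and along shared edges. Concatenating the child interpolants along $\ell$ therefore gives a continuous, piecewise-affine function whose pieces all have slope at most $\epsilon_t$. Such a function is absolutely continuous with the required slope cap, and the invariant passes to the next level.

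\textbf{Main obstacle.} The hardest step is this concatenation. We must check that the values that actually matter, those at the records of $\mathcal{X}_{\mathrm{sub}}(v)$, which are the normalized-then-reused or pre-normalization values, come from a single continuous function along $\ell$. In particular, the normalization applied between levels must not break continuity across child boundaries.

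We will handle this by working throughout with the pre-normalization interpolants $\ln f$, as the statement does. These are propagated through the subtree with normalization deferred. Under that convention, each level is a linear re-interpolation of the previous piecewise-linear function, which settles continuity. If normalization is instead applied at every level, we will observe that the normalizer at a record $\mathbf{x}$ subtracts the same constant from $\ln f(\mathbf{y}\mid\mathbf{x})$ for every $\mathbf{y}$. We then need to argue separately that the normalizer varies by at most $\bar\epsilon\,|x_t-x'_t|$ along the line. That separate argument is the route already used at the end of the proof of Theorem~\ref{thm:successor-preservation-2d}.
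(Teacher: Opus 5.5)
Your primary argument, with normalization deferred, is correct and follows the same skeleton as the paper's proof. Both propagate the coordinate-wise anchor bounds level by level, obtain a slope cap on every axis-parallel segment, and chain along the line. You concatenate into one absolutely continuous function and invoke Lemma~\ref{lem:1d-iff-single}; the paper telescopes over a partition $\mathbf{p}_0,\dots,\mathbf{p}_m$, which amounts to the same thing.

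Your base step is more careful than the paper's. The paper simply assumes segment-wise $(\epsilon_t,d_1)$-validity as a hypothesis of Definition~\ref{def:2d-interpolation}, whereas you derive it from bilinearity as a convex combination of edge slopes. Moreover, under deferred normalization your inductive step is unnecessary. Bilinear interpolation on an axis-aligned sub-rectangle reproduces a bilinear function exactly, so every pre-normalization value in the subtree is the single function $\ln f_v$ on the closed cell $v$. Your base step alone therefore proves the lemma.

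What would not work is your fallback for per-level normalization. This matters because per-level normalization is what Definition~\ref{def:2d-interpolation} and Algorithm~\ref{alg:bfs_tree_extension} literally do: children are extended from the normalized $\mathcal{M}_u$. The log-normalizer $\ln Z(\mathbf{x})=\ln\sum_{\mathbf{y}}f_v(\mathbf{y}\mid\mathbf{x})$ is $\epsilon_t$-Lipschitz along coordinate $t$, so $\ln\mathcal{M}_u=\ln f_v-\ln Z$ is only $2\epsilon_t$-Lipschitz. The child anchors therefore do not inherit the $\epsilon_t$ hypothesis, and the induction does not close. The end of the proof of Theorem~\ref{thm:successor-preservation-2d} pays this factor $2$ once, at the final normalization; it cannot be paid at every level.

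This is not slack in the estimate. Take $\mathcal{Y}=\{\mathbf{y}_1,\mathbf{y}_2\}$, corners at $0$ and $\Delta$ along coordinate $t$, and $s=\epsilon_t\Delta$. Set $\mathcal{M}(\mathbf{y}_1\mid 0)=1/(1+e^{s})$ and $\mathcal{M}(\mathbf{y}_1\mid\Delta)=e^{s}/(1+e^{s})$, with $\mathbf{y}_2$ mirrored. The midpoint normalizes to $(1/2,1/2)$, and $\ln\frac{1+e^{s}}{2}>s/2$ by AM--GM. Consequently the literal level-two pre-normalization value at $\Delta/4$ differs from the corner value by $\frac12\ln\frac{1+e^{s}}{2}>s/4$. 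That reading of the lemma is false, so no separate normalizer estimate can rescue it.

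The right use of your own observation that the normalizer is $\mathbf{y}$-independent is cancellation, not a Lipschitz bound. The interpolation is linear in the anchor values and reproduces $\ln f_v$ on every sub-rectangle. Hence $\ln f_u(\mathbf{y}\mid\mathbf{x})=\ln f_v(\mathbf{y}\mid\mathbf{x})-h_u(\mathbf{x})$, where $h_u$ is the interpolant of $\ln Z$ and does not depend on $\mathbf{y}$. Inductively, every deeper pre-normalization value differs from $\ln f_v$ by a $\mathbf{y}$-independent term, which vanishes in the final normalization.

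So the deferred convention is without loss of generality for the released mechanism, and the lemma holds only when read for the single interpolant $\ln f_v$, as its notation suggests. The paper's sentence that the first extension ``produces child corner values that also satisfy these constraints'' relies on exactly this unstated convention.
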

\begin{proof}
We prove the case $t=1$; the case $t=2$ is symmetric.

By Definition~\ref{def:2d-interpolation}, the restriction of each
2-dimensional interpolation step to any axis-aligned segment parallel
to coordinate $1$ is a one-dimensional interpolation rule satisfying
the $(\epsilon_1,d_1)$-Lipschitz bound in log space. By
Proposition~\ref{prop:1d-validity}, every pair of
records generated on such a segment satisfies the coordinate-wise
bound.

Moreover, since the parent anchors of $v$ satisfy the coordinate-wise
Lipschitz constraints, the first extension from $v$ produces child
corner values that also satisfy these constraints. Therefore, the
child corners are valid anchors for the next extension step. Repeating
this argument level by level shows that every one-dimensional
interpolation segment inside the subtree rooted at $v$ satisfies the
same coordinate-wise Lipschitz bound.

Now let
\[
x=(x_1,t), \qquad \mathbf{x}'=(x'_1,t)
\]
be two records in $\mathcal{X}_{\mathrm{sub}}(v)$ with $x_1<x'_1$.
Because the extensions are axis-aligned and nested, the horizontal
segment from $x$ to $\mathbf{x}'$ can be partitioned into finitely many
subsegments
\[
x=\mathbf{p}_0,\mathbf{p}_1,\ldots,\mathbf{p}_m=\mathbf{x}',
\]
where each consecutive pair $\mathbf{p}_{i-1},\mathbf{p}_i$ lies on one such
one-dimensional interpolation segment. Hence, for every
$i=1,\ldots,m$,
\[
\left|
\ln f_v(\mathbf{y}\mid \mathbf{p}_i)-\ln f_v(\mathbf{y}\mid \mathbf{p}_{i-1})
\right|
\le
\epsilon_1 |(\mathbf{p}_i)_1-(\mathbf{p}_{i-1})_1|.
\]
By the triangle inequality,
\[
\begin{aligned}
\left|
\ln f_v(\mathbf{y}\mid \mathbf{x})-\ln f_v(\mathbf{y}\mid \mathbf{x}')
\right|
&\le
\sum_{i=1}^{m}
\left|
\ln f_v(\mathbf{y}\mid \mathbf{p}_i)-\ln f_v(\mathbf{y}\mid \mathbf{p}_{i-1})
\right| \\
&\le
\epsilon_1
\sum_{i=1}^{m}
|(\mathbf{p}_i)_1-(\mathbf{p}_{i-1})_1| \\
&=
\epsilon_1 |x_1-x'_1|.
\end{aligned}
\]
Thus the coordinate-wise preservation bound holds for $t=1$.
The case $t=2$ follows identically by considering a vertical
segment.
\end{proof}
\begin{proof}
By Lemma~\ref{lem:subtree-coordinate-preservation}, the
2-dimensional interpolation operator preserves the coordinate-wise
Lipschitz bounds throughout the subtree rooted at $v$. Therefore, for
any two records in $\mathcal{X}_{\mathrm{sub}}(v)$ that differ only in
coordinate $t\in\{1,2\}$, we have
\[
\left|
\ln f_v(\mathbf{y}\mid \mathbf{x})-\ln f_v(\mathbf{y}\mid \mathbf{x}')
\right|
\le
\epsilon_t |x_t-\mathbf{x}'_t|,
\qquad \forall \mathbf{y}\in\mathcal{Y}.
\]

Now take arbitrary records
\[
x=(x_1,x_2), \qquad \mathbf{x}'=(x'_1,\mathbf{x}'_2)
\]
in $\mathcal{X}_{\mathrm{sub}}(v)$, and fix any output
$\mathbf{y}\in\mathcal{Y}$. Define the intermediate point
\[
m := (x_1,\mathbf{x}'_2).
\]
If $m$ is not an explicitly retained record, we treat it as a virtual
interpolation point inside the aligned extension of the subtree rooted
at $v$. The same interpolation rule assigns a pre-normalization value
$f_v(\mathbf{y}\mid m)$ to this point.

Since $x$ and $m$ differ only in the second coordinate, Lemma
\ref{lem:subtree-coordinate-preservation} gives
\[
\left|
\ln f_v(\mathbf{y}\mid \mathbf{x})-\ln f_v(\mathbf{y}\mid m)
\right|
\le
\epsilon_2 |x_2-\mathbf{x}'_2|.
\]
Similarly, since $m$ and $\mathbf{x}'$ differ only in the first coordinate, the
same lemma gives
\[
\left|
\ln f_v(\mathbf{y}\mid m)-\ln f_v(\mathbf{y}\mid \mathbf{x}')
\right|
\le
\epsilon_1 |x_1-x'_1|.
\]
By the triangle inequality,
\[
\begin{aligned}
\left|
\ln f_v(\mathbf{y}\mid \mathbf{x})-\ln f_v(\mathbf{y}\mid \mathbf{x}')
\right|
&\le
\left|
\ln f_v(\mathbf{y}\mid \mathbf{x})-\ln f_v(\mathbf{y}\mid m)
\right|  \\
&\quad+
\left|
\ln f_v(\mathbf{y}\mid m)-\ln f_v(\mathbf{y}\mid \mathbf{x}')
\right| \\
&\le
\epsilon_1 |x_1-x'_1|
+
\epsilon_2 |x_2-\mathbf{x}'_2|.
\end{aligned}
\]
Since $\epsilon_1,\epsilon_2$ satisfy the dimension-wise composition
condition with total pre-normalization budget $\epsilon/2$, we have
\[
\epsilon_1 |x_1-x'_1|
+
\epsilon_2 |x_2-\mathbf{x}'_2|
\le
\frac{\epsilon}{2} d_p(\mathbf{x},\mathbf{x}').
\]
Therefore,
\[
\left|
\ln f_v(\mathbf{y}\mid \mathbf{x})-\ln f_v(\mathbf{y}\mid \mathbf{x}')
\right|
\le
\frac{\epsilon}{2} d_p(\mathbf{x},\mathbf{x}').
\]

Finally, by the calibrated normalization step, a pre-normalization
$(\epsilon/2,d_p)$-Lipschitz bound in log space yields a final
$(\epsilon,d_p)$-mDP mechanism. Hence the induced mechanism on
$\mathcal{X}_{\mathrm{sub}}(v)$ satisfies $(\epsilon,d)$-mDP.
\end{proof}

{\rev 
\section{Comparison Between DP, LDP, and mDP}
\label{sec:DPvsmDP}
Table~\ref{tab:dp_comparison} compares conventional central DP, LDP, and mDP. Central DP operates on a dataset and provides indistinguishability between neighboring datasets, typically through randomization performed by a trusted curator. In contrast, LDP randomizes each user's record locally before release and protects any pair of possible records using the same multiplicative privacy bound, $e^\epsilon$. The randomized reports may still be aggregated by the server; thus, the key distinction between central DP and LDP is where randomization occurs and whether the curator has access to raw records, rather than perturbation versus aggregation. mDP generalizes these guarantees by incorporating an application-defined metric $d(\mathbf{x},\mathbf{x}')$, resulting in the distance-dependent privacy bound $e^{\epsilon d(\mathbf{x},\mathbf{x')}}$. Thus, secrets that are closer under the metric receive stronger indistinguishability guarantees.

Importantly, "local" specifies where perturbation occurs, whereas "metric" specifies how indistinguishability scales with the distance between secrets. When the input is an individual record and $d(\mathbf{x},\mathbf{x}')=\mathbf{1}_{[\mathbf{x}\neq\mathbf{x}']}$, mDP reduces to $\epsilon$-LDP. When the input is a dataset and $d$ is the dataset-level Hamming distance, mDP reduces to standard $\epsilon$-DP for neighboring datasets and to its group-privacy extension for datasets differing in multiple records. Therefore, mDP can be instantiated in either the central or local setting, depending on the protected input and where randomization is performed.}

\section{Discussion: Order independence of transform-linear interpolation.}
\label{sec:app:discussion:order-independence}


\begin{reproposition}[Order independence for transform-linear interpolation]
\label{prop:order-independence}
Consider a predefined interpolation operator
\[
\mathcal{M}(a,d;t)
=
f^{-1}\bigl((1-\alpha(t))f(a)+\alpha(t)f(d)\bigr),
\]
where $f$ is continuous and strictly monotone, and
$\alpha:[0,1]\to[0,1]$ is continuous and strictly monotone with
$\alpha(0)=0$ and $\alpha(1)=1$. Starting from $x_0=a$, define
\[
x_k := \mathcal{M}(x_{k-1},d;t_k), \qquad k=1,\ldots,n .
\]
Then $x_n$ depends on $t_1,\ldots,t_n$ only through
\[
\prod_{k=1}^n (1-\alpha(t_k)).
\]
Consequently, $x_n$ is independent of the order of the parameters
$t_1,\ldots,t_n$.
\end{reproposition}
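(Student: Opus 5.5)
The plan is to move to the transformed coordinate $z=f(x)$, where each update becomes an affine contraction toward the fixed value $f(d)$. Write $z_k:=f(x_k)$ and $D:=f(d)$.

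Since $f$ is continuous and strictly monotone, it is a bijection onto its image, which is an interval. That interval contains every convex combination of $f(a)$ and $f(d)$, and more generally of any two of its points, so $f^{-1}$ is defined at every step. Applying $f$ to the recursion gives
\[
z_k=(1-\alpha(t_k))\,z_{k-1}+\alpha(t_k)\,D .
\]
Subtracting $D$ from both sides yields
\[
z_k-D=(1-\alpha(t_k))\,(z_{k-1}-D).
\]

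Unrolling this by induction on $k$, starting from $z_0=f(a)$, we get
\[
z_n-D=\Bigl(\prod_{k=1}^n(1-\alpha(t_k))\Bigr)\bigl(f(a)-D\bigr).
\]
Setting $P:=\prod_{k=1}^n(1-\alpha(t_k))$, this gives the closed form
\[
x_n=f^{-1}\bigl(P\,f(a)+(1-P)\,f(d)\bigr).
\]
So $x_n$ depends on $t_1,\ldots,t_n$ only through $P$. Because real multiplication is commutative, $P$ is invariant under any permutation of the $t_k$, and therefore so is $x_n$.

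The assumptions that $\alpha$ is continuous and strictly monotone are not needed for the invariance itself. They do give an extra fact: since $P\in[0,1]$, there is a unique effective parameter $t^\ast$ with $1-\alpha(t^\ast)=P$. Hence the composite of the $n$ steps equals a single step $\mathcal{M}(a,d;t^\ast)$. I would state this as a remark, to match the discussion in the main text.

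The only delicate point is the well-definedness of $f^{-1}$ at each step, which is the image-interval argument in the first paragraph. Everything else is a direct computation.
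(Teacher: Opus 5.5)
Your proof is correct and takes essentially the same route as the paper. Both pass to the transformed coordinate (your $f$, the paper's $\psi$), show by induction that $f(x_n)=P\,f(a)+(1-P)\,f(d)$ with $P=\prod_{k}(1-\alpha(t_k))$, and conclude by commutativity of multiplication, using the monotonicity of $\alpha$ only to collapse the $n$ steps into a single step with parameter $(1-\alpha)^{-1}(P)$. Your subtract-$D$ unrolling is a cosmetic variant of the paper's direct induction, and your image-interval argument that $f^{-1}$ is defined at each step fills in a point the paper leaves implicit.
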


\begin{proof}
Recall that the interpolation operator is defined as
\begin{equation}
\mathcal{I}(\xi_L,\xi_R;t)
=
\psi^{-1}\!\left(
(1-\alpha(t))\psi(\xi_L)+\alpha(t)\psi(\xi_R)
\right),
\end{equation}
where $\psi$ is continuous and strictly monotone, and
$\alpha:[0,1]\to[0,1]$ is continuous and strictly monotone with
$\alpha(0)=0$ and $\alpha(1)=1$. Define
\begin{equation}
\beta(t):=1-\alpha(t).
\end{equation}
Then
\begin{equation}
\psi\!\left(\mathcal{I}(\xi_L,\xi_R;t)\right)
=
\beta(t)\psi(\xi_L)
+
\bigl(1-\beta(t)\bigr)\psi(\xi_R).
\label{eq:beta_form}
\end{equation}

Starting from $\xi_0=\xi_L$, consider repeated interpolation toward
the same fixed endpoint $\xi_R$:
\begin{equation}
\xi_k := \mathcal{I}(\xi_{k-1},\xi_R;t_k),
\qquad k=1,\ldots,n .
\label{eq:recursive_interp}
\end{equation}
Applying \eqref{eq:beta_form} with $\xi_L=\xi_{k-1}$ gives
\begin{equation}
\psi(\xi_k)
=
\beta(t_k)\psi(\xi_{k-1})
+
\bigl(1-\beta(t_k)\bigr)\psi(\xi_R).
\label{eq:one_step}
\end{equation}

We prove by induction that
\begin{equation}
\psi(\xi_n)
=
\left(\prod_{i=1}^{n}\beta(t_i)\right)\psi(\xi_L)
+
\left(1-\prod_{i=1}^{n}\beta(t_i)\right)\psi(\xi_R).
\label{eq:n_step_formula}
\end{equation}
For $n=1$, the claim follows directly from \eqref{eq:beta_form}.
Assume that it holds for $n-1$. Then, by \eqref{eq:one_step},
\begin{align}
\psi(\xi_n)
&=
\beta(t_n)\psi(\xi_{n-1})
+
\bigl(1-\beta(t_n)\bigr)\psi(\xi_R) \notag\\
&=
\beta(t_n)
\left[
\left(\prod_{i=1}^{n-1}\beta(t_i)\right)\psi(\xi_L)
+
\left(1-\prod_{i=1}^{n-1}\beta(t_i)\right)\psi(\xi_R)
\right] \notag\\
&\quad+
\bigl(1-\beta(t_n)\bigr)\psi(\xi_R) \notag\\
&=
\left(\prod_{i=1}^{n}\beta(t_i)\right)\psi(\xi_L)
+
\left(1-\prod_{i=1}^{n}\beta(t_i)\right)\psi(\xi_R).
\end{align}
Thus \eqref{eq:n_step_formula} holds for all $n$.

Let
\begin{equation}
B_n := \prod_{i=1}^{n}\beta(t_i).
\end{equation}
Since $\alpha$ is continuous and strictly increasing with
$\alpha(0)=0$ and $\alpha(1)=1$, the function
$\beta(t)=1-\alpha(t)$ is continuous and strictly decreasing from
$1$ to $0$. Hence $\beta$ is bijective on $[0,1]$. Therefore,
for $B_n\in[0,1]$, we may define
\begin{equation}
T_n := \beta^{-1}(B_n).
\end{equation}
Substituting $B_n=\beta(T_n)$ into \eqref{eq:n_step_formula}, we get
\begin{equation}
\psi(\xi_n)
=
\beta(T_n)\psi(\xi_L)
+
\bigl(1-\beta(T_n)\bigr)\psi(\xi_R)
=
\psi\!\left(\mathcal{I}(\xi_L,\xi_R;T_n)\right).
\end{equation}
Since $\psi$ is strictly monotone, it is injective. Therefore,
\begin{equation}
\xi_n=\mathcal{I}(\xi_L,\xi_R;T_n).
\label{eq:collapse}
\end{equation}

Finally, $T_n$ depends on $t_1,\ldots,t_n$ only through the product
\[
B_n=\prod_{i=1}^{n}\beta(t_i).
\]
Because multiplication is commutative, $B_n$, and hence $T_n$, is
unchanged under any permutation of $t_1,\ldots,t_n$. Therefore, the
final value $\xi_n$ is independent of the order in which the
parameters $t_1,\ldots,t_n$ are applied. This proves the claimed
order-independence property for repeated interpolation toward a fixed
endpoint.
\end{proof}

Notably, Proposition~\ref{prop:order-independence} applies only to
interpolation rules that can be written in transform-linear form,
namely
\[
\mathcal{I}(\xi_L,\xi_R;t)
=
\psi^{-1}\!\left(
(1-\alpha(t))\psi(\xi_L)+\alpha(t)\psi(\xi_R)
\right).
\]
Among the three interpolation methods considered in this paper, the
log-convex interpolation rule satisfies this requirement directly. In
particular, for positive endpoint values $\xi_L,\xi_R$, log-convex
interpolation can be written with $\psi(\xi)=\ln \xi$:
\[
\mathcal{I}(\xi_L,\xi_R;t)
=
\exp\!\left((1-\alpha(t))\ln \xi_L
+
\alpha(t)\ln \xi_R\right)
=
\xi_L^{1-\alpha(t)}\xi_R^{\alpha(t)}.
\]
Thus, repeated log-convex interpolation toward a fixed endpoint has
the order-independence property stated in
Proposition~\ref{prop:order-independence}.

The McShane--Whitney extension does not generally satisfy the
transform-linear form. Although it produces Lipschitz-valid values by
choosing a point between lower and upper McShane--Whitney envelopes,
the resulting value is determined by envelope operations rather than by
a fixed transform-linear average of two endpoints. Therefore,
Proposition~\ref{prop:order-independence} does not directly apply
to the McShane--Whitney extension, except in special cases where the
chosen value happens to reduce to a transform-linear interpolation rule.

Similarly, the LP-based interpolation method does not generally satisfy
the proposition. Its values are obtained as the solution of a constrained
optimization problem and may depend on the objective, the feasible
region, and the active constraints. Hence, unless the LP is explicitly
restricted so that its solution coincides with a transform-linear rule,
the order-independence property in Proposition~\ref{prop:order-independence}
is not guaranteed.

Therefore, Proposition~\ref{prop:order-independence} should be
viewed as an auxiliary property of predefined transform-linear
interpolation rules, most notably log-convex interpolation. It is not
used in the proof of \textnormal{(A3)}, which relies instead on
coordinate-wise Lipschitz preservation and dimension-wise composition.

\section{Details of Interpolation Algorithms Satisfying Requirement A1}
\label{sec:app:Interpo}
We present three representative extension algorithms that can be designed to satisfy the \emph{local mDP constraint} \textbf{\emph{(A1)}}. Fix a node $v$ with associated record set $\mathcal{X}_v$, and suppose a predecessor mechanism
$\{\mathcal{M}_v(\cdot\mid \hat{\mathbf{x}})\}_{\hat{\mathbf{x}}\in\mathcal{X}_v}$ is already specified on $\mathcal{X}_v$. Each algorithm below constructs successor mechanisms $\{\mathcal{M}_{v\to u}\}_{u\in\mathrm{Succ}(v)}$ over the local extension cover $\bigcup_{u\in\mathrm{Succ}(v)} \mathcal{X}_u$.

\begin{itemize}
    \item[(1)] \textbf{McShane--Whitney extension~\cite{borgs2018extend}.}
    For each output $\mathbf{y}\in\mathcal{Y}$, assume that $\log \mathcal{M}_v(\mathbf{y} \mid \hat{\mathbf{x}})$ satisfies the $\epsilon/2$-Lipschitz bound on $\mathcal{X}_v$. Then, for any record $\mathbf{x}\in \bigcup_{u\in\mathrm{Succ}(v)} \mathcal{X}_u$, define the McShane--Whitney envelopes
    \begin{align}
    l_{\mathbf{y}}(\mathbf{x}) &\triangleq \sup_{\hat{\mathbf{x}}\in \mathcal{X}_v}
    \Bigl(\log \mathcal{M}_v(\mathbf{y} \mid \hat{\mathbf{x}})-\frac{\epsilon}{2} d(\mathbf{x},\hat{\mathbf{x}})\Bigr), \label{eq:mw-lower}\\
    u_{\mathbf{y}}(\mathbf{x}) &\triangleq \inf_{\hat{\mathbf{x}}\in \mathcal{X}_v}
    \Bigl(\log \mathcal{M}_v(\mathbf{y} \mid \hat{\mathbf{x}})+\frac{\epsilon}{2} d(\mathbf{x},\hat{\mathbf{x}})\Bigr). \label{eq:mw-upper}
    \end{align}
    Since $l_{\mathbf{y}}(\mathbf{x})\le u_{\mathbf{y}}(\mathbf{x})$, choose any $f_{\mathbf{y}}(\mathbf{x})\in[l_{\mathbf{y}}(\mathbf{x}),u_{\mathbf{y}}(\mathbf{x})]$ (e.g., the midpoint), and for each $u\in\mathrm{Succ}(v)$ define
    \begin{equation}
    \mathcal{M}_{v\to u}(\mathbf{y} \mid \mathbf{x})
    \triangleq
    \frac{\exp(f_{\mathbf{y}}(\mathbf{x}))}{\sum_{\mathbf{y}'\in\mathcal{Y}} \exp(f_{\mathbf{y}'}(\mathbf{x}))},
    \qquad \forall \mathbf{x}\in \mathcal{X}_u.
    \label{eq:mw-normalize}
    \end{equation}
    By construction, the resulting successor mechanisms satisfy the required local $(\epsilon,d)$-mDP inequalities across successor regions.

    \item[(2)] \textbf{Log-convex interpolation~\cite{Qiu-USec2026}.}
    Assume that $\mathcal{X}_v$ consists of the vertices of a $d$-dimensional grid cell, and that the successor regions of $v$ contain records inside the same cell. Let $\bar\epsilon_t$ denote the pre-normalization privacy budget assigned to coordinate $t$. We assume that the anchor mechanism on $\mathcal{X}_v$ satisfies the coordinate-wise log-Lipschitz bound on every pair of $t$-neighbor anchors $\hat{\mathbf{x}},\hat{\mathbf{x}}'\in\mathcal{X}_v$:
    \begin{equation}
    \bigl|\log \mathcal{M}_v(\mathbf{y}\mid \hat{\mathbf{x}})
    -\log \mathcal{M}_v(\mathbf{y}\mid \hat{\mathbf{x}}')\bigr|
    \le
    \bar\epsilon_t |\hat{x}_t-\hat{x}'_t|,
    \quad \forall \mathbf{y}\in\mathcal{Y}.
    \label{eq:anchor-neighbor-mdp}
    \end{equation}
    For each record $\mathbf{x}\in\bigcup_{u\in\mathrm{Succ}(v)}\mathcal{X}_u$, let
    $\{\lambda_{\hat{\mathbf{x}}}(\mathbf{x})\}_{\hat{\mathbf{x}}\in\mathcal{X}_v}$
    be the multilinear interpolation weights induced by the coordinates of
    $\mathbf{x}$ in the cell. The pre-normalization log-score is defined as
    \begin{equation}
    \log f_{\mathbf{y}}(\mathbf{x})
    \triangleq
    \sum_{\hat{\mathbf{x}}\in\mathcal{X}_v}
    \lambda_{\hat{\mathbf{x}}}(\mathbf{x})
    \log \mathcal{M}_v(\mathbf{y}\mid \hat{\mathbf{x}}).
    \label{eq:log-convex}
    \end{equation}
    The successor mechanism is then obtained by normalization:
    \begin{equation}
    \mathcal{M}_{v\to u}(\mathbf{y}\mid \mathbf{x})
    \triangleq
    \frac{f_{\mathbf{y}}(\mathbf{x})}
    {\sum_{\mathbf{y}'\in\mathcal{Y}} f_{\mathbf{y}'}(\mathbf{x})},
    \qquad
    \forall \mathbf{x}\in\mathcal{X}_u.
    \label{eq:log-convex-normalize}
    \end{equation}
    Under the coordinate-wise interpolation rule, the pre-normalization
    log-scores satisfy the coordinate-wise Lipschitz bounds, which compose
    to a pre-normalization $\bar\epsilon$-Lipschitz bound under the
    dimension-wise composition condition. By choosing the pre-normalization
    budget so that normalization yields the target budget $\epsilon$ 
    (e.g., $\bar\epsilon=\epsilon/2$), the resulting successor mechanisms
    satisfy the required local mDP constraints in \textbf{\emph{(A1)}}.

    \item[(3)] \textbf{LP-based interpolation.}
    The successor mechanisms \newline $\{\mathcal{M}_{v\to u}\}_{u\in\mathrm{Succ}(v)}$ can also be constructed by solving a LP over the local extension cover. Specifically, for each $u\in\mathrm{Succ}(v)$ and each $\mathbf{x}\in\mathcal{X}_u$, impose the simplex constraint
    \begin{equation}
    \textstyle \sum_{\mathbf{y}\in\mathcal{Y}} \mathcal{M}_{v\to u}(\mathbf{y} \mid \mathbf{x})=1,
    \quad \forall u\in\mathrm{Succ}(v),\ \forall \mathbf{x}\in\mathcal{X}_u,
    \label{eq:lp-simplex}
    \end{equation}
    together with predecessor-inheritance constraints on shared records,
    \begin{eqnarray}
    &&\mathcal{M}_{v\to u}(\mathbf{y} \mid \hat{\mathbf{x}})=\mathcal{M}_v(\mathbf{y} \mid \hat{\mathbf{x}}),
    \\
    &&\forall u\in\mathrm{Succ}(v),\ \forall \hat{\mathbf{x}}\in \mathcal{X}_v\cap\mathcal{X}_u,\ \forall \mathbf{y}\in\mathcal{Y},
    \label{eq:lp-anchor}
    \end{eqnarray}
    and the local $(\epsilon,d)$-mDP constraints
    \begin{eqnarray}
    &&\mathcal{M}_{v\to u}(\mathbf{y} \mid \mathbf{x})\le e^{\epsilon d(\mathbf{x},\mathbf{x}')}\mathcal{M}_{v\to w}(\mathbf{y} \mid \mathbf{x}'), \\ 
    &&
    \forall u,w\in\mathrm{Succ}(v),\ \forall \mathbf{x}\in\mathcal{X}_u,\ \forall \mathbf{x}'\in\mathcal{X}_w,\ \forall \mathbf{y}\in\mathcal{Y},
    \label{eq:lp-mdp}
    \end{eqnarray}
    together with the symmetric constraint obtained by exchanging $(x,u)$ and $(\mathbf{x}',w)$. Finally, optimize a utility objective such as
    \begin{equation}
    \min\ \sum_{u\in\mathrm{Succ}(v)}\sum_{\mathbf{x}\in\mathcal{X}_u}\sum_{\mathbf{y}\in\mathcal{Y}}
    \pi(\mathbf{x})\,\mathcal{L}(x,y)\,\mathcal{M}_{v\to u}(\mathbf{y} \mid \mathbf{x}),
    \label{eq:lp-obj}
    \end{equation}
    where $\pi$ is a prior over the local extension cover and $\mathcal{L}(x,y)$ is a task-specific loss. Any feasible optimum satisfies \textbf{\emph{(A1)}} by construction.
\end{itemize}

\section{Perturbation Optimization for Seed Records}
\label{sec:seed-optimization}

Before applying the tree-based extension procedure, we first optimize a
perturbation mechanism on a small set of seed records. This seed mechanism
serves as the anchor mechanism for subsequent recursive extension, while
avoiding full-domain optimization over all fine-grained records.

Let
$\mathcal{X}_{\mathrm{real}}=\{\mathbf{r}_1,\ldots,\mathbf{r}_N\}$,
$\mathcal{X}_{\mathrm{seed}}=\{\mathbf{x}_1,\ldots,\mathbf{x}_J\}$, and
$\mathcal{Y}=\{\mathbf{y}_1,\ldots,\mathbf{y}_K\}$. For each real location
$\mathbf{r}_n$ and output $\mathbf{y}_k$, let
$L_{n,k}=t(\mathbf{r}_n,\mathbf{y}_k)$ denote the task-specific utility
loss. In our experiments, this loss measures the travel-distance estimation
error when $\mathbf{y}_k$ is used in place of the true location
$\mathbf{r}_n$.

To make the seed-level optimization aware of the full real-location domain,
we aggregate the full-domain loss matrix onto the seed records. Let
$W\in \mathbb{R}^{N\times J}$ be a corner-weight matrix, where $W_{n,j}$
is the interpolation weight of seed record $\mathbf{x}_j$ for representing
real location $\mathbf{r}_n$. The approximated seed-level cost matrix is
$C^{\mathrm{approx}}=W^\top L$, i.e.,
\[
c^{\mathrm{approx}}_{j,k}
=
\sum_{n=1}^{N} W_{n,j} L_{n,k},
\qquad
j=1,\ldots,J,\quad k=1,\ldots,K.
\]
Thus, although the optimization variables are defined only on seed records,
the objective incorporates utility loss over the selected real locations.

For each seed record $\mathbf{x}_j$ and output $\mathbf{y}_k$, define
\[
z_{j,k}
=
\Pr[\mathcal{M}_{\mathrm{seed}}(\mathbf{x}_j)=\mathbf{y}_k].
\]
Let $Z=\{z_{j,k}\}_{j=1,k=1}^{J,K}$ denote the seed-level perturbation
matrix. We compute $Z$ by solving the following linear program:
\[
\begin{aligned}
\min_{Z}\quad
& \sum_{j=1}^{J}\sum_{k=1}^{K}
c^{\mathrm{approx}}_{j,k} z_{j,k} \\
\mathrm{s.t.}\quad
& \sum_{k=1}^{K} z_{j,k}=1,
&& \forall j,\\
& \delta \le z_{j,k}\le 1,
&& \forall j,k,\\
& z_{j,k}
\le
\exp\!\left(\bar{\epsilon}_{t} d_p(\mathbf{x}_i,\mathbf{x}_j)\right)
z_{i,k},
&& \forall (i,j)\in \mathcal{E}_{t},\ 
t\in\{1,2\},\ \forall k .
\end{aligned}
\]
Here, $\delta>0$ is a small constant used to avoid zero probabilities; we set
$\delta=10^{-6}$ in our implementation. The sets $\mathcal{E}_1$ and
$\mathcal{E}_2$ contain neighboring seed-record pairs along the first and
second coordinate directions, respectively. We use
$\bar{\epsilon}_1=\epsilon_1/2$ and
$\bar{\epsilon}_2=\epsilon_2/2$ as the directional privacy budgets in the
seed optimization step. If $\mathcal{E}_{t}$ is defined as an undirected
neighbor set, the above constraint is imposed in both directions.

Let $Z^\star$ be an optimal solution. The optimized seed mechanism is then
given by
\[
\mathcal{M}_{\mathrm{seed}}(\mathbf{y}_k\mid \mathbf{x}_j)
=
z^\star_{j,k}.
\]
This mechanism initializes the source cells in the tree-based extension
algorithm, which constructs perturbation distributions for finer records
through interpolation and normalization.

\section{Pseudo Code of Three-based Extension Algorithm}
\label{sec:app:tree-extension-pseudocode}

\begin{table}[h]
\centering
\caption{One-dimensional interpolation rules for different extension methods.}
\label{tab:1d_interpolation_methods}
\small 
\begin{tabular}{p{1.4cm} p{6.2cm}}
\toprule
\textbf{Method} & \textbf{One-dimensional interpolation rule} \\
\midrule
McShane--Whitney
&
For anchors $x^L,x^R$ and output $y$, define $
l_{\mathbf{y}}(\mathbf{x})=\sup_{\hat{\mathbf{x}}\in\{x^L,x^R\}}
\bigl(\log \mathcal{M}(\mathbf{y}\mid \hat{\mathbf{x}})-\epsilon_t |x-\hat{\mathbf{x}}|\bigr)$ \newline and $
u_{\mathbf{y}}(\mathbf{x})=\inf_{\hat{\mathbf{x}}\in\{x^L,x^R\}}
\bigl(\log \mathcal{M}(\mathbf{y}\mid \hat{\mathbf{x}})+\epsilon_t |x-\hat{\mathbf{x}}|\bigr)$. Then, choose the midpoint $\log f(\mathbf{y}\mid \mathbf{x})=\frac{l_{\mathbf{y}}(\mathbf{x}),u_{\mathbf{y}}(\mathbf{x})}{2}$. \\
\midrule

Log-convex
&
For $x^L_t<x^R_t$, let $\lambda_t(\mathbf{x})=\frac{x^R_t-x_t}{x^R_t-x^L_t}$.
Then $
\log f(\mathbf{y}\mid \mathbf{x})
=
\lambda_t(\mathbf{x})\log \mathcal{M}(\mathbf{y}\mid x^L)
+
\bigl(1-\lambda_t(\mathbf{x})\bigr)\log \mathcal{M}(\mathbf{y}\mid x^R)$. 
Equivalently, $
f(\mathbf{y}\mid \mathbf{x})=\mathcal{M}(\mathbf{y}\mid x^L)^{\lambda_t(\mathbf{x})}\mathcal{M}(\mathbf{y}\mid x^R)^{1-\lambda_t(\mathbf{x})}$. 
\\
\midrule

LP-based
&
Introduce variables $\mathcal{M}(\mathbf{y}\mid \mathbf{x})$ for refined one-dimensional points $x$, and solve
an LP subject to anchor inheritance, simplex constraints, and $
\mathcal{M}(\mathbf{y}\mid \mathbf{x})\le e^{\epsilon_t |x-\mathbf{x}'|}\mathcal{M}(\mathbf{y}\mid \mathbf{x}')$, $\forall \mathbf{x},\mathbf{x}',\ \forall \mathbf{y}\in\mathcal{Y}$, plus the symmetric reverse inequality. The interpolated values are given by the feasible or optimal LP solution. \\
\bottomrule
\end{tabular}
\end{table}

\begin{algorithm}[h]
\caption{BFS-based tree extension}
\label{alg:bfs_tree_extension}
\KwIn{Extension tree $\mathcal{G}=(\mathcal{V},\mathcal{E},\{\mathcal{X}_v\}_{v\in V})$; source mechanisms
$\{\mathcal{M}_v\}_{v\in V_0}$ initialized from $\mathcal{M}_{\mathrm{seed}}$}
\KwOut{Global mechanism $\mathcal{M}$ on $\mathcal{X}_{\mathrm{tar}}$}

Initialize a FIFO queue $Q$ with all source nodes in $V_0$\;

\While{$Q$ is not empty}{
    dequeue a node $v$\;

    \If{$\mathrm{Succ}(v)=\varnothing$}{
        continue\;
    }

    Apply the local extension operator
    \[
    \{\mathcal{M}_{v\to u}\}_{u\in\mathrm{Succ}(v)}
    \leftarrow
    \mathrm{Ext}_v(\mathcal{M}_v)
    \]
    by:
    \begin{enumerate}
        \item treating the parent-corner distributions as anchors;
        \item interpolating from parent corners to child-boundary vertices using
        log-convex interpolation (as described in Table~\ref{tab:1d_interpolation_methods});
        \item interpolating from child-boundary vertices to child-interior vertices using coordinate-wise interpolation;
        \item normalizing the pre-normalization values over $\mathcal{Y}$.
    \end{enumerate}

    \ForEach{$u\in\mathrm{Succ}(v)$}{
        assign $\mathcal{M}_u \leftarrow \mathcal{M}_{v\to u}$ on $\mathcal{X}_u$\;
        enqueue $u$\;
    }
}

Assemble the induced global mechanism $\mathcal{M}$ from the node mechanisms on
$\mathcal{X}_{\mathrm{tar}}$\;
\Return $\mathcal{M}$\;
\end{algorithm}

\section{Additional Experimental Results}
\label{sec:appendix:addexp}

This appendix reports additional experimental results on the NYC and London
road-map datasets. Tables~\ref{tab:ULNYC}--\ref{tab:seed_ablation_london} and Figs.~\ref{fig:tradeoff}(a)(b) follow the same evaluation protocol as the Rome results in the main text. 

Specifically, Tables~\ref{tab:ULNYC} and~\ref{tab:ULLondon} report utility
loss on NYC and London, respectively; Tables~\ref{tab:timeNYC} and~\ref{tab:timeLondon} report computation time;
and Tables~\ref{tab:violationNYC} and~\ref{tab:violationLondon} report empirical mDP violation rates. We further report the seed-density ablation results in Tables~\ref{tab:seed_ablation_nyc} and~\ref{tab:seed_ablation_london}, together with the corresponding utility--computation trade-off plots in Figs.~\ref{fig:tradeoff}(a)(b).

Overall, the results follow trends similar to those observed on Rome: the proposed tree-based extension variants improve utility over predefined and hybrid baselines, achieve better scalability than full optimization-based methods, and maintain negligible empirical mDP violations. These results further support the stability of the proposed utility--scalability--privacy trade-off across different urban road-network domains.

\begin{table*}[h]
\footnotesize 
\begin{tabular}{p{1.85cm} |  p{1.00cm} p{1.21cm} p{1.21cm}  p{1.21cm} | p{1.21cm} p{1.21cm} p{1.21cm} | p{1.21cm} p{1.21cm} p{1.21cm}} 
\toprule
\multicolumn{2}{c }{{\rev \# Participating Points}} & \multicolumn{3}{c }{225 (depth=2)} & \multicolumn{3}{c }{1849 (depth=3)}& \multicolumn{3}{c }{28561 (depth=4)}\\
\midrule
\multicolumn{2}{c }{Privacy budget (km$^{-1}$)} & $\epsilon = 0.5$& $\epsilon = 1.0$& $\epsilon = 1.5$ & $\epsilon = 0.5$& $\epsilon = 1.0$& $\epsilon = 1.5$& $\epsilon = 0.5$& $\epsilon = 1.0$& $\epsilon = 1.5$\\
\midrule
\multicolumn{11}{c }{NYC road map}\\
\hline
Pre-defined& EM & 12.35±0.85 & 10.21±0.86 & 9.63±0.83 & 11.98±0.82 & 9.83±0.83 & 9.25±0.80 & 11.80±0.80 & 9.65±0.81 & 9.06±0.78 \\
\cline{2-2}
Noise Distribution & Laplace  & 12.30±0.47 & 10.21±0.53 & 9.63±0.53 & 11.94±0.46 & 9.84±0.52 & 9.25±0.52 & 11.76±0.46 & 9.66±0.51 & 9.07±0.52 \\
\cline{1-2}
Hybrid  & EM+RMP & 12.26±0.81 & 10.14±0.83 & 9.59±0.81 & 11.89±0.78 & 9.77±0.80 & 9.22±0.78 & 11.71±0.76 & 9.59±0.78 & 9.03±0.77 \\
\cline{2-2}
Methods &  COPT  & ------------- & ------------- & ------------- & ------------- & ------------- & ------------- & ------------- & ------------- & ------------- \\
\cline{1-2}
Optimization &  LP  & ------------- & ------------- & ------------- & ------------- & ------------- & ------------- & ------------- & ------------- & ------------- \\
\cline{2-2}
Based & LP-A & 9.78±0.42 & 9.63±0.43 & 10.11±0.46 & 9.54±0.42 & 9.37±0.44 & 9.82±0.46 & 9.42±0.42 & 9.25±0.44 & 9.68±0.47 \\
\cline{2-2} 
Methods & PAnDA & ------------- & ------------- & ------------- & ------------- & ------------- & ------------- & ------------- & ------------- & ------------- \\
\hline
\cline{1-2}
\rowcolor{lightgray!20} Existing Extension &  AIPO  & 7.19±0.15 & 5.27±0.20 & 4.60±0.22 & 7.23±0.15 & 5.30±0.20 & 4.63±0.22 & 7.23±0.15 & 5.31±0.20 & 4.63±0.22 \\
\cline{2-2}
\rowcolor{lightgray!20} Based Methods & MW & 10.07±0.24 & 7.60±0.21 & 6.33±0.24 & 10.13±0.25 & 7.69±0.21 & 6.43±0.24 & 10.14±0.25 & 7.71±0.21 & 6.44±0.24 \\
\midrule
\rowcolor{lightgray!40} Tree-based &  MLaEt-A  & 7.19±0.15 & 5.27±0.20 & 4.60±0.22 & 7.23±0.15 & 5.30±0.20 & 4.63±0.22 & 7.23±0.15 & 5.31±0.20 & 4.63±0.22 \\
\cline{2-2}
\rowcolor{lightgray!40} extension & MLaEt-M & 7.08±0.18 & 5.15±0.24 & 4.47±0.27 & 7.11±0.18 & 5.18±0.24 & 4.50±0.27 & 7.11±0.17 & 5.19±0.23 & 4.50±0.27 \\
\rowcolor{lightgray!40} algorithms & MLaEt-O & 7.02±0.33 & 5.14±0.43 & 4.52±0.47 & 6.97±0.33 & 5.09±0.43 & 4.50±0.47 & 6.95±0.33 & 5.07±0.43 & 4.48±0.47 \\
\bottomrule
\end{tabular}
\centering
\caption{\centering Utility loss of different perturbation methods (NYC; {\rev kilometers}) \\ Mean$\pm$1.96$\times$standard error; for the algorithm without getting the results, we label its results by ``--------''.}
\label{tab:ULNYC}
\end{table*}

\begin{table*}[h]
\footnotesize 
\begin{tabular}{p{1.85cm} |  p{1.00cm} p{1.21cm} p{1.21cm}  p{1.21cm} | p{1.21cm} p{1.21cm} p{1.21cm} | p{1.21cm} p{1.21cm} p{1.21cm}} 
\toprule
\multicolumn{2}{c }{{\rev \# Participating Points}} & \multicolumn{3}{c }{285 (depth=2)} & \multicolumn{3}{c }{2365 (depth=3)}& \multicolumn{3}{c }{36673 (depth=4)}\\
\midrule
\multicolumn{2}{c }{Privacy budget (km$^{-1}$)} & $\epsilon = 0.5$& $\epsilon = 1.0$& $\epsilon = 1.5$ & $\epsilon = 0.5$& $\epsilon = 1.0$& $\epsilon = 1.5$& $\epsilon = 0.5$& $\epsilon = 1.0$& $\epsilon = 1.5$\\
\midrule
\multicolumn{11}{c }{London road map}\\
\hline
Pre-defined& EM & 12.34±0.72 & 10.05±0.66 & 9.50±0.64 & 11.91±0.70 & 9.62±0.65 & 9.07±0.63 & 11.70±0.69 & 9.41±0.65 & 8.86±0.62 \\
\cline{2-2}
Noise Distribution & Laplace  & 12.34±1.02 & 10.09±0.93 & 9.56±0.90 & 11.92±1.00 & 9.68±0.91 & 9.14±0.88 & 11.72±0.99 & 9.47±0.90 & 8.93±0.87 \\
\cline{1-2}
Hybrid  & EM+RMP & 12.24±0.67 & 10.02±0.65 & 9.49±0.64 & 11.82±0.66 & 9.60±0.64 & 9.06±0.63 & 11.62±0.66 & 9.39±0.64 & 8.85±0.6 \\
\cline{2-2}
Methods &  COPT  & ------------- & ------------- & ------------- & ------------- & ------------- & ------------- & ------------- & ------------- & ------------- \\
\cline{1-2}
Optimization &  LP  & ------------- & ------------- & ------------- & ------------- & ------------- & ------------- & ------------- & ------------- & ------------- \\
\cline{2-2}
Based & LP-A & 8.34±0.65 & 7.84±0.72 & 8.18±0.69 & 8.10±0.65 & 7.60±0.71 & 7.95±0.68 & 7.98±0.64 & 7.48±0.70 & 7.82±0.67 \\
\cline{2-2} 
Methods & PAnDA & ------------- & ------------- & ------------- & ------------- & ------------- & ------------- & ------------- & ------------- & ------------- \\
\hline
\cline{1-2}
\rowcolor{lightgray!20} Existing Extension &  AIPO  & 8.18±0.20 & 6.06±0.30 & 5.40±0.34 & 8.21±0.20 & 6.10±0.30 & 5.42±0.33 & 8.21±0.20 & 6.10±0.30 & 5.42±0.33 \\
\cline{2-2}
\rowcolor{lightgray!20} Based Methods & MW & 11.11±0.09 & 8.36±0.06 & 6.93±0.05 & 11.15±0.09 & 8.42±0.06 & 7.02±0.06 & 11.16±0.09 & 8.43±0.06 & 7.03±0.06 \\
\midrule
\rowcolor{lightgray!40} Tree-based &  MLaEt-A  & 8.18±0.20 & 6.06±0.30 & 5.40±0.34 & 8.21±0.20 & 6.10±0.30 & 5.42±0.33 & 8.21±0.20 & 6.10±0.30 & 5.42±0.33 \\
\cline{2-2}
\rowcolor{lightgray!40} extension & MLaEt-M & 8.11±0.21 & 5.97±0.29 & 5.26±0.32 & 8.13±0.21 & 5.99±0.29 & 5.28±0.32 & 8.14±0.21 & 6.00±0.29 & 5.29±0.32 \\
\rowcolor{lightgray!40} algorithms & MLaEt-O & 7.94±0.16 & 5.83±0.23 & 5.20±0.25 & 7.87±0.16 & 5.78±0.23 & 5.17±0.25 & 7.84±0.16 & 5.76±0.23 & 5.15±0.25 \\
\bottomrule
\end{tabular}
\centering
\caption{\centering Utility loss of different perturbation methods (London; {\rev kilometers})\\ Mean$\pm$1.96$\times$standard error; for the algorithm without getting the results, we label its results by ``--------''.}
\label{tab:ULLondon}
\end{table*}

\begin{table*}[h]
\footnotesize 
\begin{tabular}{p{1.85cm} |  p{1.00cm} p{1.21cm} p{1.21cm}  p{1.21cm} | p{1.21cm} p{1.21cm} p{1.21cm} | p{1.21cm} p{1.21cm} p{1.21cm}} 
\toprule
\multicolumn{2}{c }{{\rev \# Participating Points}} & \multicolumn{3}{c }{225 (depth=2)} & \multicolumn{3}{c }{1849 (depth=3)}& \multicolumn{3}{c }{28561 (depth=4)}\\
\midrule
\multicolumn{2}{c }{Privacy budget (km$^{-1}$)} & $\epsilon = 0.5$& $\epsilon = 1.0$& $\epsilon = 1.5$ & $\epsilon = 0.5$& $\epsilon = 1.0$& $\epsilon = 1.5$& $\epsilon = 0.5$& $\epsilon = 1.0$& $\epsilon = 1.5$\\
\midrule
\multicolumn{11}{c }{NYC road map}\\
\hline
Pre-defined& EM & 0.0004±0.0012 & 0.0001±0.0000 & 0.0001±0.0000 & 0.0002±0.0001 & 0.0002±0.0001 & 0.0002±0.0000 & 0.0007±0.0000 & 0.0007±0.0001 & 0.0007±0.0000 \\
\cline{2-2}
Noise Distribution & Laplace  & 0.0021±0.0087 & 0.0002±0.0002 & 0.0001±0.0001 & 0.0002±0.0001 & 0.0002±0.0003 & 0.0002±0.0002 & 0.0004±0.0002 & 0.0003±0.0000 & 0.0003±0.0001 \\
\cline{1-2}
Hybrid  & EM+RMP & 0.0006±0.0020 & 0.0003±0.0005 & 0.0002±0.0002 & 0.0004±0.0002 & 0.0004±0.0002 & 0.0003±0.0001 & 0.0011±0.0003 & 0.0011±0.0002 & 0.0011±0.0001 \\
\cline{2-2}
Methods &  COPT  & ------------- & ------------- & ------------- & ------------- & ------------- & ------------- & ------------- & ------------- & ------------- \\
\cline{1-2}
Optimization &  LP  & ------------- & ------------- & ------------- & ------------- & ------------- & ------------- & ------------- & ------------- & ------------- \\
\cline{2-2}
Based & LP-A & 11.06±3.88 & 6.78±0.78 & 7.96±1.46 & 11.06±3.88 & 6.78±0.78 & 7.97±1.46 & 11.06±3.88 & 6.79±0.78 & 7.97±1.46 \\
\cline{2-2} 
Methods & PAnDA & ------------- & ------------- & ------------- & ------------- & ------------- & ------------- & ------------- & ------------- & ------------- \\
\hline
\cline{1-2}
\rowcolor{lightgray!20} Existing Extension &  AIPO  & 6.23±0.00 & 6.11±0.01 & 6.05±0.02 & 6.57±0.00 & 6.45±0.01 & 6.38±0.02 & 54.13±0.26 & 54.00±0.24 & 53.82±0.23 \\
\cline{2-2}
\rowcolor{lightgray!20} Based Methods & MW & 6.22±0.01 & 6.09±0.04 & 6.02±0.03 & 6.26±0.01 & 6.13±0.04 & 6.07±0.03 & 10.62±0.02 & 10.29±0.24 & 10.17±0.29 \\
\midrule
\rowcolor{lightgray!40} Tree-based &  MLaEt-A  & 5.53±0.22 & 4.06±0.05 & 3.83±0.04 & 5.59±0.22 & 4.12±0.05 & 3.89±0.04 & 6.11±0.22 & 4.63±0.05 & 4.41±0.04 \\
\cline{2-2}
\rowcolor{lightgray!40} extension & MLaEt-M & 14.07±0.96 & 8.36±0.34 & 6.72±0.08 & 14.11±0.96 & 8.40±0.34 & 6.76±0.08 & 14.47±0.96 & 8.75±0.34 & 7.11±0.08 \\
\rowcolor{lightgray!40} algorithms & MLaEt-O & 6.18±0.29 & 4.74±0.07 & 4.49±0.05 & 14.29±0.30 & 12.71±0.39 & 7.64±0.95 & 42.98±1.87 & 38.29±1.96 & 33.90±2.30 \\
\bottomrule
\end{tabular}
\centering
\caption{\centering Computation time of different perturbation methods (NYC{\rev ; seconds}) \\ Mean$\pm$1.96$\times$standard error; for the algorithm without getting the results, we label its results by ``--------''.}
\label{tab:timeNYC}
\end{table*}

\begin{table*}[h]
\footnotesize 
\begin{tabular}{p{1.85cm} |  p{1.00cm} p{1.21cm} p{1.21cm}  p{1.21cm} | p{1.21cm} p{1.21cm} p{1.21cm} | p{1.21cm} p{1.21cm} p{1.21cm}} 
\toprule
\multicolumn{2}{c }{{\rev \# Participating Points}} & \multicolumn{3}{c }{285 (depth=2)} & \multicolumn{3}{c }{2365 (depth=3)}& \multicolumn{3}{c }{36673 (depth=4)}\\
\midrule
\multicolumn{2}{c }{Privacy budget (km$^{-1}$)} & $\epsilon = 0.5$& $\epsilon = 1.0$& $\epsilon = 1.5$ & $\epsilon = 0.5$& $\epsilon = 1.0$& $\epsilon = 1.5$& $\epsilon = 0.5$& $\epsilon = 1.0$& $\epsilon = 1.5$\\
\midrule
\multicolumn{11}{c }{London road map}\\
\hline
Pre-defined& EM & 0.0004±0.0012 & 0.0001±0.0000 & 0.0001±0.0000 & 0.0003±0.0001 & 0.0003±0.0001 & 0.0002±0.0001 & 0.0009±0.0000 & 0.0009±0.0000 & 0.0009±0.0000 \\
\cline{2-2}
Noise Distribution & Laplace  & 0.0001±0.0000 & 0.0001±0.0000 & 0.0001±0.0000 & 0.0002±0.0000 & 0.0002±0.0000 & 0.0001±0.0000 & 0.0004±0.0000 & 0.0003±0.0000 & 0.0003±0.0000 \\
\cline{1-2}
Hybrid  & EM+RMP & 0.0004±0.0012 & 0.0001±0.0000 & 0.0001±0.0000 & 0.0004±0.0001 & 0.0004±0.0001 & 0.0004±0.0001 & 0.0013±0.0001 & 0.0012±0.0000 & 0.0012±0.0000 \\
\cline{2-2}
Methods &  COPT  & ------------- & ------------- & ------------- & ------------- & ------------- & ------------- & ------------- & ------------- & ------------- \\
\cline{1-2}
Optimization &  LP  & ------------- & ------------- & ------------- & ------------- & ------------- & ------------- & ------------- & ------------- & ------------- \\
\cline{2-2}
Based & LP-A & 8.16±1.54 & 7.25±1.17 & 7.41±0.88 & 8.16±1.54 & 7.25±1.17 & 7.41±0.88 & 8.16±1.54 & 7.25±1.17 & 7.41±0.88 \\
\cline{2-2} 
Methods & PAnDA & ------------- & ------------- & ------------- & ------------- & ------------- & ------------- & ------------- & ------------- & ------------- \\
\hline
\cline{1-2}
\rowcolor{lightgray!20} Existing Extension &  AIPO  & 16.47±0.08 & 16.35±0.01 & 16.31±0.05 & 16.97±0.09 & 16.85±0.02 & 16.81±0.05 & 120.93±0.36 & 120.89±0.29 & 120.76±0.02 \\
\cline{2-2}
\rowcolor{lightgray!20} Based Methods & MW & 16.49±0.03 & 16.31±0.01 & 16.21±0.04 & 16.54±0.04 & 16.36±0.01 & 16.25±0.04 & 23.34±0.29 & 22.79±0.00 & 22.61±0.01 \\
\midrule
\rowcolor{lightgray!40} Tree-based &  MLaEt-A  & 8.22±0.11 & 7.52±0.05 & 7.24±0.04 & 8.28±0.11 & 7.57±0.05 & 7.30±0.04 & 8.80±0.11 & 8.09±0.05 & 7.82±0.04 \\
\cline{2-2}
\rowcolor{lightgray!40} extension & MLaEt-M & 14.51±0.40 & 11.89±0.14 & 11.18±0.08 & 14.55±0.40 & 11.93±0.14 & 11.22±0.08 & 14.91±0.41 & 12.29±0.14 & 11.57±0.08 \\
\rowcolor{lightgray!40} algorithms & MLaEt-O & 9.01±0.32 & 8.29±0.27 & 8.00±0.27 & 18.92±4.47 & 17.47±4.47 & 13.47±4.04 & 84.16±61.00 & 79.03±61.26 & 75.98±60.61 \\
\bottomrule
\end{tabular}
\centering
\caption{\centering Computation time of different perturbation methods (London{\rev ; seconds}) \\ Mean$\pm$1.96$\times$standard error; for the algorithm without getting the results, we label its results by ``--------''.}
\label{tab:timeLondon}
\end{table*}

\begin{table*}[h]
\footnotesize 
\begin{tabular}{p{1.85cm} |  p{1.00cm} p{1.21cm} p{1.21cm}  p{1.21cm} | p{1.21cm} p{1.21cm} p{1.21cm} | p{1.21cm} p{1.21cm} p{1.21cm}} 
\toprule
\multicolumn{2}{c }{{\rev \# Participating Points}} & \multicolumn{3}{c }{225 (depth=2)} & \multicolumn{3}{c }{1849 (depth=3)}& \multicolumn{3}{c }{28561 (depth=4)}\\
\midrule
\multicolumn{2}{c }{Privacy budget (km$^{-1}$)} & $\epsilon = 0.5$& $\epsilon = 1.0$& $\epsilon = 1.5$ & $\epsilon = 0.5$& $\epsilon = 1.0$& $\epsilon = 1.5$& $\epsilon = 0.5$& $\epsilon = 1.0$& $\epsilon = 1.5$\\
\midrule
\multicolumn{11}{c }{NYC road map}\\
\hline
Pre-defined& EM & 0.00±0.00 & 0.00±0.00 & 0.00±0.00 & 0.00±0.00 & 0.00±0.00 & 0.00±0.00 & 0.00±0.00 & 0.00±0.00 & 0.00±0.00 \\
\cline{2-2}
Noise Distribution & Laplace  & 0.00±0.00 & 0.00±0.00 & 0.00±0.00 & 0.00±0.00 & 0.00±0.00 & 0.00±0.00 & 0.00±0.00 & 0.00±0.00 & 0.00±0.00 \\
\cline{1-2}
Hybrid  & EM+RMP & 0.00±0.00 & 0.00±0.00 & 0.00±0.00 & 0.00±0.00 & 0.00±0.00 & 0.00±0.00 & 0.00±0.00 & 0.00±0.00 & 0.00±0.00 \\
\cline{2-2}
Methods &  COPT  & ------------- & ------------- & ------------- & ------------- & ------------- & ------------- & ------------- & ------------- & ------------- \\
\cline{1-2}
Optimization &  LP  & ------------- & ------------- & ------------- & ------------- & ------------- & ------------- & ------------- & ------------- & ------------- \\
\cline{2-2}
Based & LP-A & 0.0329±0.0024 & 0.0148±0.0015 & 0.0067±0.0005 & 0.0329±0.0024 & 0.0148±0.0015 & 0.0067±0.0005 & 0.0329±0.0024 & 0.0148±0.0015 & 0.0067±0.0005 \\
\cline{2-2} 
Methods & PAnDA & ------------- & ------------- & ------------- & ------------- & ------------- & ------------- & ------------- & ------------- & ------------- \\
\hline
\cline{1-2}
\rowcolor{lightgray!20} Existing Extension &  AIPO  & 0.00±0.00 & 0.00±0.00 & 0.00±0.00 & 0.00±0.00 & 0.00±0.00 & 0.00±0.00 & 0.00±0.00 & 0.00±0.00 & 0.00±0.00 \\
\cline{2-2}
\rowcolor{lightgray!20} Based Methods & MW & 0.00±0.00 & 0.00±0.00 & 0.00±0.00 & 0.00±0.00 & 0.00±0.00 & 0.00±0.00 & 0.00±0.00 & 0.00±0.00 & 0.00±0.00 \\
\midrule
\rowcolor{lightgray!40} Tree-based &  MLaEt-A  & 0.00±0.00 & 0.00±0.00 & 0.00±0.00 & 0.00±0.00 & 0.00±0.00 & 0.00±0.00 & 0.00±0.00 & 0.00±0.00 & 0.00±0.00 \\
\cline{2-2}
\rowcolor{lightgray!40} extension & MLaEt-M & 0.00±0.00 & 0.00±0.00 & 0.00±0.00 & 0.00±0.00 & 0.00±0.00 & 0.00±0.00 & 0.00±0.00 & 0.00±0.00 & 0.00±0.00 \\
\rowcolor{lightgray!40} algorithms & MLaEt-O & 0.00±0.00 & 0.00±0.00 & 0.00±0.00 & 0.00±0.00 & 0.00±0.00 & 0.00±0.00 & 0.00±0.00 & 0.00±0.00 & 0.00±0.00 \\
\bottomrule
\end{tabular}
\centering
\caption{\centering Violation ratio of different perturbation methods (NYC) \\ Mean$\pm$1.96$\times$standard error; for the algorithm without getting the results, we label its results by ``--------''.}
\label{tab:violationNYC}
\end{table*}

\begin{table*}[h]
\footnotesize 
\begin{tabular}{p{1.85cm} |  p{1.00cm} p{1.21cm} p{1.21cm}  p{1.21cm} | p{1.21cm} p{1.21cm} p{1.21cm} | p{1.21cm} p{1.21cm} p{1.21cm}} 
\toprule
\multicolumn{2}{c }{{\rev \# Participating Points}} & \multicolumn{3}{c }{285 (depth=2)} & \multicolumn{3}{c }{2365 (depth=3)}& \multicolumn{3}{c }{36673 (depth=4)}\\
\midrule
\multicolumn{2}{c }{Privacy budget (km$^{-1}$)} & $\epsilon = 0.5$& $\epsilon = 1.0$& $\epsilon = 1.5$ & $\epsilon = 0.5$& $\epsilon = 1.0$& $\epsilon = 1.5$& $\epsilon = 0.5$& $\epsilon = 1.0$& $\epsilon = 1.5$\\
\midrule
\multicolumn{11}{c }{London road map}\\
\hline
Pre-defined& EM & 0.00±0.00 & 0.00±0.00 & 0.00±0.00 & 0.00±0.00 & 0.00±0.00 & 0.00±0.00 & 0.00±0.00 & 0.00±0.00 & 0.00±0.00 \\
\cline{2-2}
Noise Distribution & Laplace  & 0.00±0.00 & 0.00±0.00 & 0.00±0.00 & 0.00±0.00 & 0.00±0.00 & 0.00±0.00 & 0.00±0.00 & 0.00±0.00 & 0.00±0.00 \\
\cline{1-2}
Hybrid  & EM+RMP & 0.00±0.00 & 0.00±0.00 & 0.00±0.00 & 0.00±0.00 & 0.00±0.00 & 0.00±0.00 & 0.00±0.00 & 0.00±0.00 & 0.00±0.00 \\
\cline{2-2}
Methods &  COPT  & ------------- & ------------- & ------------- & ------------- & ------------- & ------------- & ------------- & ------------- & ------------- \\
\cline{1-2}
Optimization &  LP  & ------------- & ------------- & ------------- & ------------- & ------------- & ------------- & ------------- & ------------- & ------------- \\
\cline{2-2}
Based & LP-A & 0.0314±0.0011 & 0.0108±0.0009 & 0.0051±0.0005 & 0.0314±0.0011 & 0.0108±0.0009 & 0.0051±0.0005 & 0.0314±0.0011 & 0.0108±0.0009 & 0.0051±0.0005 \\
\cline{2-2} 
Methods & PAnDA & ------------- & ------------- & ------------- & ------------- & ------------- & ------------- & ------------- & ------------- & ------------- \\
\hline
\cline{1-2}
\rowcolor{lightgray!20} Existing Extension &  AIPO  & 0.00±0.00 & 0.00±0.00 & 0.00±0.00 & 0.00±0.00 & 0.00±0.00 & 0.00±0.00 & 0.00±0.00 & 0.00±0.00 & 0.00±0.00 \\
\cline{2-2}
\rowcolor{lightgray!20} Based Methods & MW & 0.00±0.00 & 0.00±0.00 & 0.00±0.00 & 0.00±0.00 & 0.00±0.00 & 0.00±0.00 & 0.00±0.00 & 0.00±0.00 & 0.00±0.00 \\
\midrule
\rowcolor{lightgray!40} Tree-based &  MLaEt-A  & 0.00±0.00 & 0.00±0.00 & 0.00±0.00 & 0.00±0.00 & 0.00±0.00 & 0.00±0.00 & 0.00±0.00 & 0.00±0.00 & 0.00±0.00 \\
\cline{2-2}
\rowcolor{lightgray!40} extension & MLaEt-M & 0.00±0.00 & 0.00±0.00 & 0.00±0.00 & 0.00±0.00 & 0.00±0.00 & 0.00±0.00 & 0.00±0.00 & 0.00±0.00 & 0.00±0.00 \\
\rowcolor{lightgray!40} algorithms & MLaEt-O & 0.00±0.00 & 0.00±0.00 & 0.00±0.00 & 0.00±0.00 & 0.00±0.00 & 0.00±0.00 & 0.00±0.00 & 0.00±0.00 & 0.00±0.00 \\
\bottomrule
\end{tabular}
\centering
\caption{\centering Violation ratio of different perturbation methods (London) \\ Mean$\pm$1.96$\times$standard error; for the algorithm without getting the results, we label its results by ``--------''.}
\label{tab:violationLondon}
\end{table*}

\begin{table}[h]
\centering
\caption{Effect of the seed-grid density on the three \textsc{MLaEt} variants and the LP-A baseline  (NYC).}
\label{tab:seed_ablation_nyc}
\resizebox{\linewidth}{!}{
\begin{tabular}{lrrrrrrr}
\toprule
Grid dimensions & $5{\times}5$ & $7{\times}7$ & $9{\times}9$ & $11{\times}11$ & $13{\times}13$ & $15{\times}15$ & $17{\times}17$ \\
\midrule
\multicolumn{8}{l}{\textbf{\textsc{MLaEt}-A}} \\
Loss (km) & 7.2438 & 6.3518 & 5.9036 & 5.7689 & 5.6618 & 5.6782 & 5.5206 \\
Time (seconds) & 1.1699 & 1.6487 & 2.3931 & 3.7798 & 6.0345 & 10.2088 & 18.1457 \\
mDP violation ratio & 0.00 & 0.00 & 0.00 & 0.00 & 0.00 & 0.00 & 0.00 \\
\addlinespace
\hline
\multicolumn{8}{l}{\textbf{\textsc{MLaEt}-M}} \\
Loss (km) & 7.2119 & 6.3675 & 5.9779 & 5.8689 & 5.6003 & 5.5569 & 5.6087 \\
Time (seconds) & 1.1601 & 1.6087 & 2.3211 & 3.6137 & 5.8508 & 9.7754 & 16.7510 \\
mDP violation ratio & 0.00 & 0.00 & 0.00 & 0.00 & 0.00 & 0.00 & 0.00 \\
\addlinespace
\hline
\multicolumn{8}{l}{\textbf{\textsc{MLaEt}-O}} \\
Loss (km) & 6.6894 & 5.8911 & 5.5087 & 5.5263 & 5.3521 & 5.3064 & 5.3774 \\
Time (seconds) & 1.4979 & 2.3674 & 3.7641 & 6.0669 & 9.6859 & 15.4560 & 24.9720 \\
mDP violation ratio & 0.00 & 0.00 & 0.00 & 0.00 & 0.00 & 0.00 & 0.00 \\
\addlinespace
\hline
\multicolumn{8}{l}{LP-A} \\
Loss (km) & 11.4301 & 10.3333 & 10.0943 & 9.7873 & 9.6238 & 9.6901 & --- \\
Time (seconds) & 0.0704 & 0.3916 & 1.9015 & 7.5537 & 32.8578 & 186.8650 & $>1000$ \\
mDP violation ratio & 0.0256 & 0.0270 & 0.0231 & 0.0204 & 0.0170 & 0.0143 & --- \\
\bottomrule
\end{tabular}}
\end{table}

\begin{table}[h]
\centering
\caption{Effect of the seed-grid density on the three \textsc{MLaEt} variants and the LP-A baseline  (London).}
\label{tab:seed_ablation_london}
\resizebox{\linewidth}{!}{
\begin{tabular}{lrrrrrrr}
\toprule
Grid dimensions & $5{\times}5$ & $7{\times}7$ & $9{\times}9$ & $11{\times}11$ & $13{\times}13$ & $15{\times}15$ & $17{\times}17$ \\
\midrule
\multicolumn{8}{l}{\textbf{\textsc{MLaEt}-A}} \\
Loss (km) & 8.0517 & 6.9707 & 6.6874 & 6.4702 & 6.4357 & 6.5057 & 6.4617 \\
Time (seconds) & 2.7620 & 3.6173 & 4.8942 & 6.7520 & 9.3403 & 13.0815 & 18.3441 \\
mDP violation ratio & 0.00 & 0.00 & 0.00 & 0.00 & 0.00 & 0.00 & 0.00 \\
\addlinespace
\hline
\multicolumn{8}{l}{\textbf{\textsc{MLaEt}-M}} \\
Loss (km) & 7.8718 & 7.2003 & 6.8011 & 6.4812 & 6.5249 & 6.4922 & 6.5794 \\
Time (seconds) & 2.6505 & 3.5117 & 4.7380 & 6.5714 & 8.9594 & 12.5490 & 17.5100 \\
mDP violation ratio & 0.00 & 0.00 & 0.00 & 0.00 & 0.00 & 0.00 & 0.00 \\
\addlinespace
\hline
\multicolumn{8}{l}{\textbf{\textsc{MLaEt}-O}} \\
Loss (km) & 7.5021 & 6.4886 & 6.2865 & 6.4601 & 6.1996 & 6.1854 & 6.1105 \\
Time (seconds) & 2.9908 & 4.2673 & 6.1792 & 8.9965 & 12.9090 & 18.3510 & 25.9370 \\
mDP violation ratio & 0.00 & 0.00 & 0.00 & 0.00 & 0.00 & 0.00 & 0.00 \\
\addlinespace
\hline
\multicolumn{8}{l}{LP-A} \\
Loss (km) & 9.6187 & 8.6136 & 8.3316 & 8.1540 & 8.0537 & 8.0159 & 7.5744 \\
Time (seconds) & 0.0765 & 0.3046 & 1.4509 & 4.6706 & 14.0565 & 38.1636 & 81.3006 \\
mDP violation ratio & 0.0262 & 0.0251 & 0.0201 & 0.0182 & 0.0145 & 0.0129 & 0.0110 \\
\bottomrule
\end{tabular}}
\end{table}

\DEL{
\begin{table}[t]
\centering
\caption{Effect of seed-grid density on \textsc{MLaEt}-A (Rome).}
\label{tab:seed_ablation}
\resizebox{\linewidth}{!}{%
\begin{tabular}{lrrrrrrr}
\toprule
Metric
& $5{\times}5$
& $7{\times}7$
& $9{\times}9$
& $11{\times}11$
& $13{\times}13$
& $15{\times}15$
& $17{\times}17$ \\
\midrule
Loss (km)
& 7.2902 & 6.4534 & 5.9906 & 5.7265 & 5.5845 & 5.4542 & 5.6767 \\
Time (s)
& 0.9248 & 1.3292 & 2.0040 & 3.2189 & 5.0931 & 8.8058 & 14.1384 \\
mDP violation rate
& 0.00 & 0.00 & 0.00 & 0.00 & 0.00 & 0.00 & 0.00 \\
\bottomrule
\end{tabular}%
}
\end{table}

\begin{table}[t]
\centering
\caption{Effect of seed-grid density on \textsc{MLaEt}-A (NYC).}
\label{tab:seed_ablation_nyc}
\resizebox{\linewidth}{!}{%
\begin{tabular}{lrrrrrrr}
\toprule
Metric
& $5{\times}5$
& $7{\times}7$
& $9{\times}9$
& $11{\times}11$
& $13{\times}13$
& $15{\times}15$
& $17{\times}17$ \\
\midrule
Loss (km)
& 7.2438 & 6.3518 & 5.9036 & 5.7689 & 5.6618 & 5.6782 & 5.5206 \\
Time (s)
& 1.1699 & 1.6487 & 2.3931 & 3.7798 & 6.0345 & 10.2088 & 18.1457 \\
mDP violation rate
& 0.00 & 0.00 & 0.00 & 0.00 & 0.00 & 0.00 & 0.00 \\
\bottomrule
\end{tabular}%
}
\end{table}

\begin{table}[t]
\centering
\caption{Effect of seed-grid density on \textsc{MLaEt}-A (London).}
\label{tab:seed_ablation_london}
\resizebox{\linewidth}{!}{%
\begin{tabular}{lrrrrrrr}
\toprule
Metric
& $5{\times}5$
& $7{\times}7$
& $9{\times}9$
& $11{\times}11$
& $13{\times}13$
& $15{\times}15$
& $17{\times}17$ \\
\midrule
Loss (km)
& 8.0517 & 6.9707 & 6.6874 & 6.4702 & 6.4357 & 6.5057 & 6.4617 \\
Time (s)
& 2.7620 & 3.6173 & 4.8942 & 6.7520 & 9.3403 & 13.0815 & 18.3441 \\
mDP violation rate
& 0.00 & 0.00 & 0.00 & 0.00 & 0.00 & 0.00 & 0.00 \\
\bottomrule
\end{tabular}%
}
\end{table}}

\begin{figure}[t]
\centering
\hspace{0.00in}
\begin{minipage}{0.48\textwidth}
 \subfigure[New York]{
\includegraphics[width=0.48\textwidth]{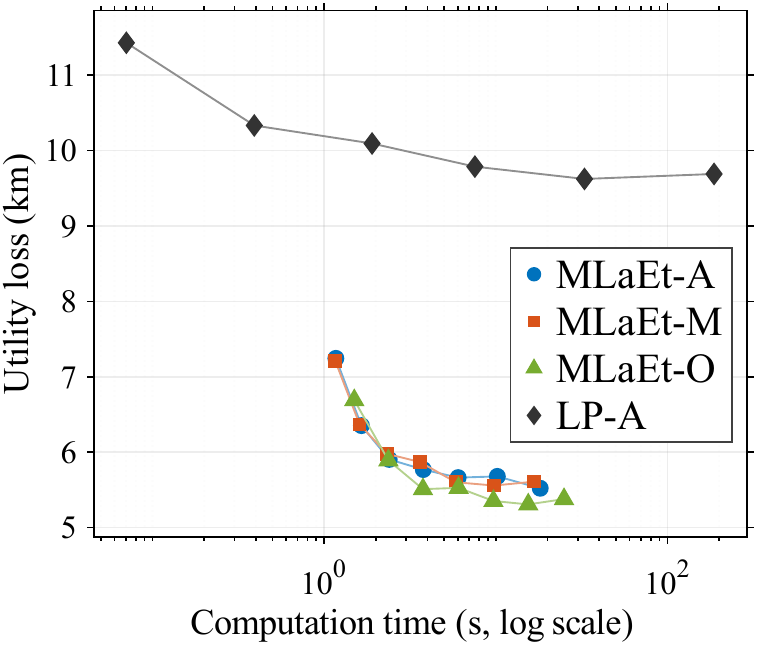}}
 \subfigure[London]{
\includegraphics[width=0.48\textwidth]{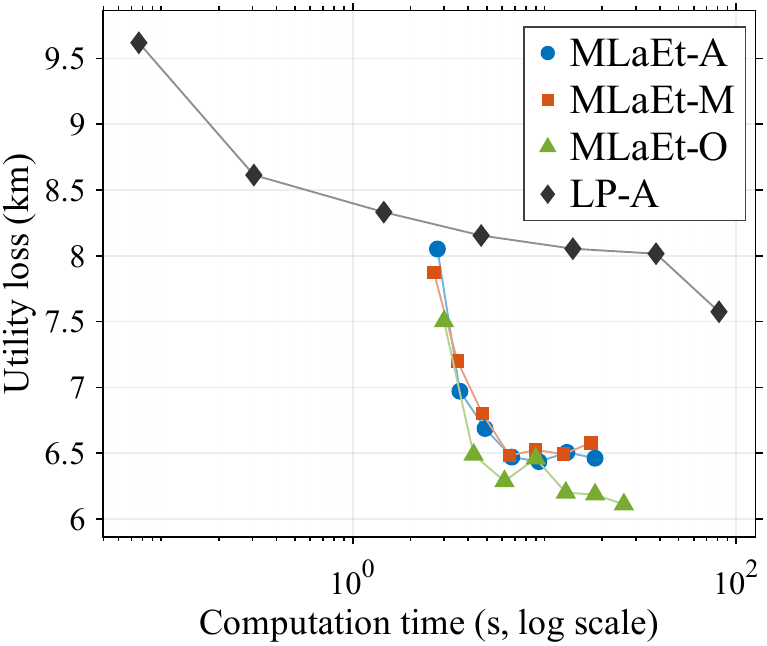}}
\end{minipage}
\caption{Tradeoff between utility and computation time.}
\label{fig:tradeoff}
\end{figure}


\begin{thebibliography}{24}


\ifx \showCODEN    \undefined \def \showCODEN     #1{\unskip}     \fi
\ifx \showISBNx    \undefined \def \showISBNx     #1{\unskip}     \fi
\ifx \showISBNxiii \undefined \def \showISBNxiii  #1{\unskip}     \fi
\ifx \showISSN     \undefined \def \showISSN      #1{\unskip}     \fi
\ifx \showLCCN     \undefined \def \showLCCN      #1{\unskip}     \fi
\ifx \shownote     \undefined \def \shownote      #1{#1}          \fi
\ifx \showarticletitle \undefined \def \showarticletitle #1{#1}   \fi
\ifx \showURL      \undefined \def \showURL       {\relax}        \fi
\providecommand\bibfield[2]{#2}
\providecommand\bibinfo[2]{#2}
\providecommand\natexlab[1]{#1}
\providecommand\showeprint[2][]{arXiv:#2}

\bibitem[ope(2020)]%
        {openstreetmap}
 \bibinfo{year}{2020}\natexlab{}.
\newblock \bibinfo{title}{openstreetmap}.
\newblock \bibinfo{howpublished}{\url{https://www.openstreetmap.org/}}.
\newblock
\shownote{Accessed: 2020-04-07}.
\newblock


\bibitem[Andr{\'e}s et~al\mbox{.}(2013)]%
        {Andres-CCS2013}
\bibfield{author}{\bibinfo{person}{Miguel~E Andr{\'e}s}, \bibinfo{person}{Nicol{\'a}s~E Bordenabe}, \bibinfo{person}{Konstantinos Chatzikokolakis}, {and} \bibinfo{person}{Catuscia Palamidessi}.} \bibinfo{year}{2013}\natexlab{}.
\newblock \showarticletitle{Geo-indistinguishability: Differential privacy for location-based systems}. In \bibinfo{booktitle}{\emph{Proceedings of the 2013 ACM SIGSAC conference on Computer \& communications security}}. \bibinfo{pages}{901--914}.
\newblock


\bibitem[Bordenabe et~al\mbox{.}(2014)]%
        {Bordenabe-CCS2014}
\bibfield{author}{\bibinfo{person}{N.~E. Bordenabe}, \bibinfo{person}{K. Chatzikokolakis}, {and} \bibinfo{person}{C. Palamidessi}.} \bibinfo{year}{2014}\natexlab{}.
\newblock \showarticletitle{Optimal Geo-Indistinguishable Mechanisms for Location Privacy}. In \bibinfo{booktitle}{\emph{Proc. of ACM CCS}}. \bibinfo{pages}{251--262}.
\newblock


\bibitem[Borgs et~al\mbox{.}(2018)]%
        {borgs2018extend}
\bibfield{author}{\bibinfo{person}{Christian Borgs}, \bibinfo{person}{Jennifer Chayes}, \bibinfo{person}{Adam Smith}, {and} \bibinfo{person}{Ilias Zadik}.} \bibinfo{year}{2018}\natexlab{}.
\newblock \bibinfo{title}{Private Algorithms Can Always Be Extended}.
\newblock
\showeprint[arxiv]{1810.12518}~[math.ST]
\urldef\tempurl%
\url{https://arxiv.org/abs/1810.12518}
\showURL{%
\tempurl}


\bibitem[Chatzikokolakis et~al\mbox{.}(2013)]%
        {Chatzikokolakis-PETS2013}
\bibfield{author}{\bibinfo{person}{Konstantinos Chatzikokolakis}, \bibinfo{person}{Miguel~E Andr{\'e}s}, \bibinfo{person}{Nicol{\'a}s~Emilio Bordenabe}, {and} \bibinfo{person}{Catuscia Palamidessi}.} \bibinfo{year}{2013}\natexlab{}.
\newblock \showarticletitle{Broadening the scope of differential privacy using metrics}. In \bibinfo{booktitle}{\emph{international symposium on privacy enhancing technologies symposium}}. Springer, \bibinfo{pages}{82--102}.
\newblock


\bibitem[Chatzikokolakis et~al\mbox{.}(2017)]%
        {chatzikokolakis2017efficient}
\bibfield{author}{\bibinfo{person}{Konstantinos Chatzikokolakis}, \bibinfo{person}{Ehab Elsalamouny}, {and} \bibinfo{person}{Catuscia Palamidessi}.} \bibinfo{year}{2017}\natexlab{}.
\newblock \showarticletitle{Efficient utility improvement for location privacy}.
\newblock \bibinfo{journal}{\emph{Proceedings on Privacy Enhancing Technologies}} (\bibinfo{year}{2017}).
\newblock


\bibitem[Chatzikokolakis et~al\mbox{.}(2015)]%
        {chatzikokolakis2015constructing}
\bibfield{author}{\bibinfo{person}{Konstantinos Chatzikokolakis}, \bibinfo{person}{Catuscia Palamidessi}, {and} \bibinfo{person}{Marco Stronati}.} \bibinfo{year}{2015}\natexlab{}.
\newblock \showarticletitle{Constructing elastic distinguishability metrics for location privacy}.
\newblock \bibinfo{journal}{\emph{Proceedings on Privacy Enhancing Technologies}} (\bibinfo{year}{2015}).
\newblock


\bibitem[Chen et~al\mbox{.}(2021)]%
        {chen2021perceptual}
\bibfield{author}{\bibinfo{person}{Jia-Wei Chen}, \bibinfo{person}{Li-Ju Chen}, \bibinfo{person}{Chia-Mu Yu}, {and} \bibinfo{person}{Chun-Shien Lu}.} \bibinfo{year}{2021}\natexlab{}.
\newblock \showarticletitle{Perceptual indistinguishability-net (pi-net): Facial image obfuscation with manipulable semantics}. In \bibinfo{booktitle}{\emph{Proceedings of the IEEE/CVF Conference on Computer Vision and Pattern Recognition}}. \bibinfo{pages}{6478--6487}.
\newblock


\bibitem[Duchi et~al\mbox{.}(2013)]%
        {Duchi-FOCS2013}
\bibfield{author}{\bibinfo{person}{John~C. Duchi}, \bibinfo{person}{Michael~I. Jordan}, {and} \bibinfo{person}{Martin~J. Wainwright}.} \bibinfo{year}{2013}\natexlab{}.
\newblock \showarticletitle{Local Privacy and Statistical Minimax Rates}. In \bibinfo{booktitle}{\emph{2013 IEEE 54th Annual Symposium on Foundations of Computer Science}}. \bibinfo{pages}{429--438}.
\newblock
\href{https://doi.org/10.1109/FOCS.2013.53}{doi:\nolinkurl{10.1109/FOCS.2013.53}}


\bibitem[Feyisetan and Kasiviswanathan(2021)]%
        {feyisetan2021private}
\bibfield{author}{\bibinfo{person}{Oluwaseyi Feyisetan} {and} \bibinfo{person}{Shiva Kasiviswanathan}.} \bibinfo{year}{2021}\natexlab{}.
\newblock \showarticletitle{Private release of text embedding vectors}. In \bibinfo{booktitle}{\emph{Proceedings of the First Workshop on Trustworthy Natural Language Processing}}. \bibinfo{pages}{15--27}.
\newblock


\bibitem[Hou et~al\mbox{.}(2026)]%
        {Hou-TDSC2026}
\bibfield{author}{\bibinfo{person}{Yuchao Hou}, \bibinfo{person}{Jiazhe Jiao}, \bibinfo{person}{Jie Wang}, \bibinfo{person}{Guangyin Jin}, \bibinfo{person}{Zijian Zhang}, \bibinfo{person}{Xiaoyu Xia}, \bibinfo{person}{Zhiquan Liu}, \bibinfo{person}{Minglu Li}, {and} \bibinfo{person}{Youliang Tian}.} \bibinfo{year}{2026}\natexlab{}.
\newblock \showarticletitle{Federated Analytics Assisted Semantic Alignment for Secure and Privacy-Preserving Image Classification}.
\newblock \bibinfo{journal}{\emph{IEEE Transactions on Dependable and Secure Computing}} (\bibinfo{year}{2026}), \bibinfo{pages}{1--16}.
\newblock
\href{https://doi.org/10.1109/TDSC.2026.3704515}{doi:\nolinkurl{10.1109/TDSC.2026.3704515}}


\bibitem[Imola et~al\mbox{.}(2022a)]%
        {ImolaUAI2022}
\bibfield{author}{\bibinfo{person}{Jacob Imola}, \bibinfo{person}{Shiva Kasiviswanathan}, \bibinfo{person}{Stephen White}, \bibinfo{person}{Abhinav Aggarwal}, {and} \bibinfo{person}{Nathanael Teissier}.} \bibinfo{year}{2022}\natexlab{a}.
\newblock \showarticletitle{Balancing utility and scalability in metric differential privacy}. In \bibinfo{booktitle}{\emph{Proc. of UAI 2022}}.
\newblock


\bibitem[Imola et~al\mbox{.}(2022b)]%
        {imola2022balancing}
\bibfield{author}{\bibinfo{person}{Jacob Imola}, \bibinfo{person}{Shiva Kasiviswanathan}, \bibinfo{person}{Stephen White}, \bibinfo{person}{Abhinav Aggarwal}, {and} \bibinfo{person}{Nathanael Teissier}.} \bibinfo{year}{2022}\natexlab{b}.
\newblock \showarticletitle{Balancing utility and scalability in metric differential privacy}. In \bibinfo{booktitle}{\emph{Uncertainty in Artificial Intelligence}}. PMLR, \bibinfo{pages}{885--894}.
\newblock


\bibitem[Liu and Qiu(2025)]%
        {Liu-CCS2025}
\bibfield{author}{\bibinfo{person}{Ruiyao Liu} {and} \bibinfo{person}{Chenxi Qiu}.} \bibinfo{year}{2025}\natexlab{}.
\newblock \showarticletitle{PAnDA: Rethinking Metric Differential Privacy Optimization at Scale with Anchor-Based Approximation}. In \bibinfo{booktitle}{\emph{Proceedings of The 32nd ACM Conference on Computer and Communications Security (CCS)}}.
\newblock


\bibitem[Qiu(2024)]%
        {qiu-IJCAI2024}
\bibfield{author}{\bibinfo{person}{Chenxi Qiu}.} \bibinfo{year}{2024}\natexlab{}.
\newblock \showarticletitle{Enhancing scalability of metric differential privacy via secret dataset partitioning and benders decomposition}. In \bibinfo{booktitle}{\emph{Proceedings of the Thirty-Third International Joint Conference on Artificial Intelligence}}. \bibinfo{pages}{1944--1952}.
\newblock


\bibitem[Qiu(2026)]%
        {Qiu-USec2026}
\bibfield{author}{\bibinfo{person}{Chenxi Qiu}.} \bibinfo{year}{2026}\natexlab{}.
\newblock \showarticletitle{Interpolation-Based Optimization for Enforcing lp-Norm Metric Differential Privacy in Continuous and Fine-Grained Domains}. In \bibinfo{booktitle}{\emph{Proc. of USENIX 2026}}.
\newblock


\bibitem[Qiu et~al\mbox{.}(2025)]%
        {qiu2025time}
\bibfield{author}{\bibinfo{person}{Chenxi Qiu}, \bibinfo{person}{Ruiyao Liu}, \bibinfo{person}{Primal Pappachan}, \bibinfo{person}{Anna Squicciarini}, {and} \bibinfo{person}{Xinpeng Xie}.} \bibinfo{year}{2025}\natexlab{}.
\newblock \showarticletitle{Time-Efficient Locally Relevant Geo-Location Privacy Protection}.
\newblock \bibinfo{journal}{\emph{Proceedings on Privacy Enhancing Technologies}} (\bibinfo{year}{2025}).
\newblock


\bibitem[{Qiu} et~al\mbox{.}(2022)]%
        {Qiu-TMC2022}
\bibfield{author}{\bibinfo{person}{C. {Qiu}}, \bibinfo{person}{A.~C. {Squicciarini}}, \bibinfo{person}{C. {Pang}}, \bibinfo{person}{N. {Wang}}, {and} \bibinfo{person}{B. {Wu}}.} \bibinfo{year}{2022}\natexlab{}.
\newblock \showarticletitle{Location Privacy Protection in Vehicle-Based Spatial Crowdsourcing via Geo-Indistinguishability}.
\newblock \bibinfo{journal}{\emph{IEEE Transactions on Mobile Computing}} (\bibinfo{year}{2022}), \bibinfo{pages}{1--1}.
\newblock
\href{https://doi.org/10.1109/TMC.2020.3037911}{doi:\nolinkurl{10.1109/TMC.2020.3037911}}


\bibitem[Qiu et~al\mbox{.}(2024)]%
        {Qiu-EDBT2024}
\bibfield{author}{\bibinfo{person}{Chenxi Qiu}, \bibinfo{person}{Sourabh Yadav}, \bibinfo{person}{Yuede Ji}, \bibinfo{person}{Anna Squicciarini}, \bibinfo{person}{Ramanamurthy Dantu}, \bibinfo{person}{Juanjuan Zhao}, {and} \bibinfo{person}{Chengzhong Xu}.} \bibinfo{year}{2024}\natexlab{}.
\newblock \showarticletitle{Fine-Grained Geo-Obfuscation to Protect Workers' Location Privacy in Time-Sensitive Spatial Crowdsourcing}. In \bibinfo{booktitle}{\emph{Proc. of EDBT}}.
\newblock


\bibitem[Qiu~et al.(2025)]%
        {Qiu-PETS2025}
\bibfield{author}{\bibinfo{person}{C. Qiu~et al.}} \bibinfo{year}{2025}\natexlab{}.
\newblock \showarticletitle{Scalable Optimization for Locally Relevant Geo-Location Privacy}. In \bibinfo{booktitle}{\emph{PETS}}.
\newblock


\bibitem[Samet(1984)]%
        {Samet-CSurvey1984}
\bibfield{author}{\bibinfo{person}{Hanan Samet}.} \bibinfo{year}{1984}\natexlab{}.
\newblock \showarticletitle{The Quadtree and Related Hierarchical Data Structures}.
\newblock \bibinfo{journal}{\emph{ACM Comput. Surv.}} \bibinfo{volume}{16}, \bibinfo{number}{2} (\bibinfo{date}{June} \bibinfo{year}{1984}), \bibinfo{pages}{187–260}.
\newblock
\showISSN{0360-0300}
\href{https://doi.org/10.1145/356924.356930}{doi:\nolinkurl{10.1145/356924.356930}}


\bibitem[Wang et~al\mbox{.}(2017)]%
        {Wang-WWW2017}
\bibfield{author}{\bibinfo{person}{L. Wang}, \bibinfo{person}{D. Yang}, \bibinfo{person}{X. Han}, \bibinfo{person}{T. Wang}, \bibinfo{person}{D. Zhang}, {and} \bibinfo{person}{X. Ma}.} \bibinfo{year}{2017}\natexlab{}.
\newblock \showarticletitle{Location Privacy-Preserving Task Allocation for Mobile Crowdsensing with Differential Geo-Obfuscation}. In \bibinfo{booktitle}{\emph{Proc. of ACM WWW}}. \bibinfo{pages}{627--636}.
\newblock


\bibitem[Wang et~al\mbox{.}(2016)]%
        {Wang-ICDM2016}
\bibfield{author}{\bibinfo{person}{L. Wang}, \bibinfo{person}{D. Zhang}, \bibinfo{person}{D. Yang}, \bibinfo{person}{B. Lim}, {and} \bibinfo{person}{X. Ma}.} \bibinfo{year}{2016}\natexlab{}.
\newblock \showarticletitle{Differential Location Privacy for Sparse Mobile Crowdsensing}. \bibinfo{pages}{1257--1262}.
\newblock
\href{https://doi.org/10.1109/ICDM.2016.0169}{doi:\nolinkurl{10.1109/ICDM.2016.0169}}


\bibitem[Yu et~al\mbox{.}(2017)]%
        {Yu-NDSS2017}
\bibfield{author}{\bibinfo{person}{Lei Yu}, \bibinfo{person}{Ling Liu}, {and} \bibinfo{person}{Calton Pu}.} \bibinfo{year}{2017}\natexlab{}.
\newblock \showarticletitle{Dynamic Differential Location Privacy with Personalized Error Bounds.}. In \bibinfo{booktitle}{\emph{NDSS}}, Vol.~\bibinfo{volume}{17}. \bibinfo{pages}{1--15}.
\newblock


\end{thebibliography}
\end{document}